\documentclass[11pt,a4paper]{article}

\usepackage[utf8]{inputenc}
\usepackage[T1]{fontenc}
\usepackage[margin=1in]{geometry}
\usepackage{amsmath, amsthm, amssymb, amsfonts}
\usepackage{mathtools, bm, hyperref, xcolor}

\usepackage{algorithm}
\usepackage{algpseudocode}

\newtheorem{theorem}{Theorem}[section]
\newtheorem{proposition}[theorem]{Proposition}
\newtheorem{lemma}[theorem]{Lemma}
\newtheorem{corollary}[theorem]{Corollary}

\theoremstyle{definition}

\newtheorem{assumption}[theorem]{Assumption}

\theoremstyle{remark}
\newtheorem{remark}[theorem]{Remark}

\usepackage{authblk}

\usepackage{tikz}
\usepackage{pgfplots}
\pgfplotsset{compat=1.18}
\usepgfplotslibrary{fillbetween}

\usepackage{graphicx}
\usepackage{placeins}
\usepackage{caption}

\title{Projected Dynamic Programming for Sequential Quantum State Discrimination}

\author[1, 2]{Jaehun Jeong\thanks{Email: \href{mailto:abin1125@snu.ac.kr}{\texttt{abin1125@snu.ac.kr}}. ORCID: \href{https://orcid.org/0009-0005-6261-0407}{0009-0005-6261-0407}.}}
\author[1, 2]{Donghwa Ji\thanks{ORCID: \href{https://orcid.org/0009-0008-2798-7251}{0009-0008-2798-7251}.}}
\author[3, 2]{Hyunjun Jang}
\author[4, 2]{Kabgyun Jeong\thanks{Corresponding author. Email: \href{mailto:kgjeong6@snu.ac.kr}{\texttt{kgjeong6@snu.ac.kr}}. ORCID: \href{https://orcid.org/0000-0001-7628-7835}{0000-0001-7628-7835}.}}

\affil[1]{College of Liberal Studies, Seoul National University, Seoul 08826, Korea}
\affil[2]{Team QST, Seoul, Korea}
\affil[3]{Department of Mathematics, College of Science, Korea University, Seoul 02841, Korea}
\affil[4]{Institute of Computer Technology, College of Engineering, Seoul National University, Seoul 08826, Korea}

\date{\today}

\begin{document}

\maketitle

\begin{abstract}
Sequential Quantum State Discrimination (SQSD) can be naturally framed as a sequential decision-making problem: at each time step, an agent must decide whether to perform an additional measurement to gather more information or to conclude with an optimal decision based on the current belief. In this paper, we formally cast SQSD into a static-hidden-state Partially Observable Markov Decision Process (POMDP) framework. We demonstrate that this formulation precisely subsumes the conventional minimum-error discrimination (MED) scheme as a special one-step case. Furthermore, we apply a regular grid-based discretization to the continuous belief simplex and approximate the possibly continuous measurement space using a finite library. Then we provide rigorous mathematical bounds on the resulting errors and analyze the computational complexity for both offline planning and online execution. Our analysis confirms that the inherent trade-off between accuracy and complexity, as well as the curse of dimensionality regarding the number of hypotheses, are also prominently observed in the quantum regime. Finally, we provide a working example of binary state discrimination to derive explicit forms of various functions and present numerical simulations for trine state discrimination to visualize the sequential structure of our POMDP-based SQSD.
\end{abstract}

\tableofcontents
\newpage

\section{Introduction}

Quantum state discrimination (QSD) is one of the most basic decision problems in
quantum information theory \cite{BaeKwek2015}. In its standard minimum-error form, one is given a finite
ensemble of candidate states together with a prior distribution and seeks a measurement
and decision rule that maximizes the average probability of correct identification \cite{BaeKwek2015, Helstrom1976QuantumDetection}. In many
settings of practical and conceptual interest, however, the discrimination process is not
naturally one-shot. Rather, the experimenter may choose measurements adaptively, update
their posterior belief after each outcome, and decide at some intermediate stage that
further measurements are no longer worth their cost \cite{TianEtAl2024, LiTanTomamichel2022SequentialQHT}. Related sequential quantum decision problems have also been studied in the setting of adaptive quantum hypothesis testing, where one decides after each test whether to continue sampling or to terminate with a final decision \cite{LiTanTomamichel2022SequentialQHT}. 

While such sequential quantum-testing works focus primarily on hypothesis-testing performance and adaptive stopping strategies, our goal here is different: we formulate sequential quantum discrimination itself as a static-hidden-state finite-horizon POMDP and analyze projected dynamic programming approximations for the resulting belief-space control problem. More specifically, we consider a finite family of
candidate quantum states, indexed by a hidden hypothesis that is drawn once from a prior
distribution and remains fixed throughout the episode. At each stage, the agent may either
perform a measurement action, thereby obtaining a classical observation according to the
Born rule, or stop and declare one of the candidate hypotheses. Because the hidden
hypothesis is static and only the observation process is stochastic, the resulting model fits
naturally into a specialized finite-horizon partially observable Markov decision process
(POMDP) framework with static latent state. The posterior belief over the hidden
hypothesis then serves as a sufficient statistic for the full action--observation history, so
the sequential-QSD problem may be written as a Bellman recursion on the belief simplex \cite{SmallwoodSondik1973}. This formulation is not only conceptually convenient as a concrete example of dynamic programming, but also geometrically informative: in explicit examples, it allows one to visualize how one-step measurement gain, posterior routing, and finite-horizon continuation regions are organized over the belief simplex. In this sense, the POMDP viewpoint provides both an analytical framework and an interpretable geometric picture of sequential quantum discrimination.

A first conceptual point of the paper is that this POMDP formulation does not replace the
conventional QSD objective with an ultimately different one. Rather, the one-step special case
recovers the usual minimum-error discrimination criterion exactly. When the agent performs
a single measurement and then declares optimally, the resulting objective coincides with
the standard measurement-plus-classical-post-processing form of minimum-error QSD, and
hence also with the familiar guess-labeled POVM formulation. In this sense, the present
framework should be understood as a sequential generalization of conventional QSD rather
than a departure from it.

A second point is computational. Although the belief-state formulation is conceptually
clean, exact finite-horizon dynamic programming is generally posed on a continuous belief
simplex and may also involve a rich, possibly continuous measurement family \cite{Monahan1982}. This makes
the exact Bellman recursion computationally demanding \cite{Hauskrecht2000, KaelblingLittmanCassandra1998}. To address this, we
study a projected dynamic-programming architecture in which the belief simplex is replaced
by a finite grid and the measurement family is approximated by a finite library. The
resulting planner performs backward induction on a projected finite computational model
and yields a reusable sequential decision rule for online execution.

The structure of the paper is as follows. Section~2 formulates the sequential-QSD model as a finite-horizon POMDP with static-hidden-state. Section~3 explains how this model relates to conventional one-step minimum-error QSD and establishes the basic consistency results. Section~4 introduces the projected dynamic-programming architecture to solve the sequential-QSD planning problem. Section~5 develops the approximation theory, including belief-space and action-space discretization bounds, explicit regularity constant relations, and a total approximation theorem. Section~6 studies the computational complexity of the projected planner, including both offline planning and online execution. Section~7 presents a binary working example that connects the general framework with the familiar two-state discrimination setting. Section~8 then develops a numerical trine-state example, including belief-simplex geometry, one-step gain, posterior routing, and a compact finite-horizon Bellman analysis. Section~9 concludes with a summary and directions for future work.

\section{Model Formulation for Sequential QSD as a Finite-Horizon POMDP}

We begin by formulating sequential quantum state discrimination (QSD) as a finite-horizon
partially observable Markov decision process (POMDP) with static latent state. The purpose
of this section is to fix the model completely before turning to consistency questions,
approximation analysis, and computational issues.

\subsection{Hidden-state model, action space, and observation law}

Let the hidden hypothesis be denoted by
\[
h \in \{1,\dots,M\},
\]
drawn once from a prior distribution
\[
p_0 = (p_0(1),\dots,p_0(M)).
\]
The corresponding quantum states are
\[
\rho_1,\dots,\rho_M.
\]
Throughout one episode, the hidden index $h$ remains fixed. Thus the latent state space is
\[
S=\{1,\dots,M\},
\]
and the transition kernel is static:
\begin{equation}
\label{eq:static-transition-kernel}
T(h' \mid h,a)=\mathbf{1}(h'=h).
\end{equation}

The action space is the union of two qualitatively different types of actions. First, there are
\emph{measurement actions}, collected in a set
\[
A_{\mathrm{meas}},
\]
each of which consumes one copy and produces a classical outcome according to a POVM.
Second, there are \emph{declaration actions}
\[
\{\delta_1,\dots,\delta_M\},
\]
where $\delta_i$ means that the agent stops and declares hypothesis $i$. Hence the total
action space is
\begin{equation}
\label{eq:total-action-space}
A = A_{\mathrm{meas}} \cup \{\delta_1,\dots,\delta_M\}.
\end{equation}

Measurement actions may be described at two complementary levels. At the most
abstract level, one simply writes
\[
a \in A_{\mathrm{meas}},
\]
thereby emphasizing only the decision-theoretic role of the action inside the
POMDP. This notation is convenient when the measurement set is regarded as a
primitive collection of admissible experiments.

In many physically relevant implementations, however, the measurement-action set
is not merely an abstract set of labels but arises from a structured,
parameterized family. Concretely, one may introduce a parameter space
\[
\Theta,
\]
and regard each measurement action as specified by a parameter
\[
\theta \in \Theta.
\]
A particularly useful POVM subclass is obtained by fixing a reference POVM
\[
F=\{F_o\}_{o\in O}
\]
and conjugating a parameterized unitary operator
\[
U(\theta)
\]
to the corresponding measurement operator. The resulting POVM is then
\begin{equation}
\label{eq:unitary-induced-povm}
E_o(\theta)
:=
U(\theta)^\dagger F_o U(\theta),
\qquad o\in O.
\end{equation}
Thus a control parameter $\theta$ determines a measurement action through the
induced POVM
\[
E(\theta)=\{E_o(\theta)\}_{o\in O}.
\]
However, this is not the only parameterization of the measurement operator; one may alternatively realize measurements by parameterizing the POVM directly via rotation angles.

The abstract notation $a \in A_{\mathrm{meas}}$ and the structured notation
$\theta \in \Theta$ are therefore compatible. The former is preferable when one
wishes to emphasize the generic POMDP structure, whereas the latter becomes
useful when the geometry of the measurement family matters. In particular, the
parameterized viewpoint will play an important role later when we study
action-space discretization by approximating the continuous family $\Theta$ with
a finite sublibrary $\Theta^h = \{\bar{\theta}_1, \cdots, \bar{\theta}_h\}\subset \Theta$.

For each measurement action, whether written abstractly as $a \in A_{\mathrm{meas}}$
or concretely as $\theta \in \Theta$, let
\[
E^{(a)}=\{E_o^{(a)}\}_{o\in O}
\]
denote the associated POVM. In the parameterized case, this simply means
\[
E_o^{(a)} = E_o(\theta)
\]
for the corresponding parameter value. We continue to work with a common finite
outcome alphabet $O$. This common-outcome convention is not restrictive for the
present analysis: action-dependent outcome sets may be embedded into a single
finite alphabet by padding with zero-probability outcomes when necessary.

The observation kernel is induced by the Born rule. More precisely, for each
measurement action $a \in A_{\mathrm{meas}}$, each outcome $o \in O$, and each
hidden hypothesis $i$,
\begin{equation}
\label{eq:born-observation-kernel}
Z(o \mid h=i,a)=\mathrm{Tr}\!\left(E_o^{(a)}\rho_i\right).
\end{equation}
Equivalently, in the parameterized notation one may write
\begin{equation}
\label{eq:born-observation-kernel-parameterized}
Z(o \mid h=i,\theta)=\mathrm{Tr}\!\left(E_o(\theta)\rho_i\right).
\end{equation}
Thus all stochasticity enters through the observation process, whereas the
hidden state itself does not evolve during the episode.

\begin{proposition}[Static-hidden-state POMDP formulation]
\label{prop:static-hidden-pomdp}
The sequential-QSD model described above is a finite-horizon POMDP with latent state
space $S=\{1,\dots,M\}$, action space \eqref{eq:total-action-space}, static transition
kernel \eqref{eq:static-transition-kernel}, and observation kernel
\eqref{eq:born-observation-kernel}.
\end{proposition}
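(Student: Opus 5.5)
The plan is to check directly the defining ingredients of a finite-horizon POMDP. These are a latent state space, an action space, a transition kernel, an observation kernel, a reward or cost structure, a horizon, and an initial distribution. I would take the standard tuple $(S,A,T,Z,R,N,p_0)$ as the definition and match each component of the sequential-QSD model to it.

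The first two components are immediate. The state space is $S=\{1,\dots,M\}$, which is finite, and the initial distribution on it is $p_0$. The action space $A$ in \eqref{eq:total-action-space} is a well-defined set, with $A_{\mathrm{meas}}$ possibly indexed by $\Theta$. For the transition kernel, \eqref{eq:static-transition-kernel} gives, for each fixed $(h,a)$, the Dirac mass at $h$. This is trivially a probability distribution on $S$. For declaration actions $\delta_i$ I would treat the process as moving to an absorbing terminal state, or equivalently as ending the episode. So I would either adjoin a terminal symbol or simply note that stopping ends the horizon; both are standard in finite-horizon POMDPs with stopping.

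The one step with actual content is checking that \eqref{eq:born-observation-kernel} defines a valid observation kernel. For each $a\in A_{\mathrm{meas}}$ and each $i\in S$, the map $o\mapsto Z(o\mid h=i,a)$ must be a probability distribution on the finite set $O$. Nonnegativity follows because $E_o^{(a)}\succeq 0$ and $\rho_i\succeq 0$, so $\mathrm{Tr}(E_o^{(a)}\rho_i)=\mathrm{Tr}(\rho_i^{1/2}E_o^{(a)}\rho_i^{1/2})\ge 0$. Normalization follows from linearity of the trace together with $\sum_o E_o^{(a)}=I$ and $\mathrm{Tr}\,\rho_i=1$. In the parameterized case \eqref{eq:unitary-induced-povm}, I would first verify that $E(\theta)$ is a POVM. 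Positivity is preserved under unitary conjugation, and $\sum_o U(\theta)^\dagger F_o U(\theta)=U(\theta)^\dagger U(\theta)=I$. Action-dependent outcome sets are handled by the zero-padding convention, since adding zero operators preserves both properties. For declaration actions no observation is emitted, which we can encode as a deterministic dummy outcome.

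To close the argument I would specify the rewards: a per-measurement cost for $a\in A_{\mathrm{meas}}$ and a terminal reward $\mathbf{1}(h=i)$ for $\delta_i$. I would also note that the horizon $N$ bounds the number of measurements, with declaration forced at stage $N$. The main obstacle is not mathematical depth but bookkeeping: making stopping and declaration fit the standard POMDP tuple, either through an absorbing state or by truncating the episode. I would resolve this with the absorbing-state convention, which makes every component match the definition verbatim.
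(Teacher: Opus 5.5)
Your proposal is correct and follows the same route as the paper, which simply matches each ingredient of the sequential-QSD model to the POMDP definition and calls the claim immediate; your explicit check that $o\mapsto\mathrm{Tr}(E_o^{(a)}\rho_i)$ is a genuine probability distribution is a useful addition the paper leaves implicit, and this includes the unitarily conjugated family and the zero-padded outcomes. One small caution: adjoining a terminal symbol changes $S$ and the kernel $T(h'\mid h,\delta_i)$ away from what the proposition asserts, so the convention consistent with the statement and with the paper's Bellman recursion is your other option, namely that declaration actions end the episode while $T$ stays $\mathbf{1}(h'=h)$ for every action.
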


\begin{proof}
The claim follows immediately from the definitions. The hidden hypothesis set provides the
latent state space, the action space contains both information-gathering and terminal
declaration actions, the latent state remains unchanged under every action, and the
observation kernel is given by the Born rule associated with the chosen measurement.
\end{proof}

\begin{remark}
This model is more specialized than a general POMDP. In a general POMDP, uncertainty may
arise both from state transitions and from noisy observations. Here, by contrast, the hidden
hypothesis is static and only the observation process is stochastic. Nevertheless, the standard
belief-state formalism and Bellman recursion remain applicable.
\end{remark}

\subsection{Belief dynamics and Bayesian update}

Because the latent state is not directly observed, the agent must act on the basis of its
posterior belief over the hidden hypothesis \cite{SmallwoodSondik1973}. Let
\begin{equation}
\label{eq:belief-simplex-definition}
\Delta_M
=
\left\{
b \in \mathbb{R}_{\ge 0}^M :
\sum_{i=1}^M b(i)=1
\right\}
\end{equation}
denote the $(M-1)$-simplex of belief states. The initial belief is just the prior,
\begin{equation}
\label{eq:initial-belief}
b_0(i)=p_0(i), \qquad i=1,\dots,M.
\end{equation}

Suppose the current belief is $b \in \Delta_M$ and the agent chooses a measurement action
$a \in A_{\mathrm{meas}}$. Then the probability of observing outcome $o \in O$ is
\begin{equation}
\label{eq:observation-probability-belief}
\Pr(o \mid b,a)
=
\sum_{j=1}^M b(j)\,\mathrm{Tr}\!\left(E_o^{(a)}\rho_j\right).
\end{equation}
If this probability is positive, Bayes' rule yields the posterior update
\begin{equation}
\label{eq:qsd-bayesian-update}
\tau(b,a,o)(i)
=
\frac{
b(i)\,\mathrm{Tr}\!\left(E_o^{(a)}\rho_i\right)
}{
\sum_{j=1}^M b(j)\,\mathrm{Tr}\!\left(E_o^{(a)}\rho_j\right)
},
\qquad i=1,\dots,M.
\end{equation}

\begin{proposition}[Belief sufficiency]
\label{prop:belief-sufficiency}
The posterior belief state is a sufficient statistic for the full action--observation history.
\end{proposition}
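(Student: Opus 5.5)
The plan is to make ``sufficient statistic'' precise in the usual POMDP sense and then verify it by induction on the history length, using only the static kernel \eqref{eq:static-transition-kernel} and the Born-rule kernel \eqref{eq:born-observation-kernel}. Write a history as $H_t=(a_1,o_1,\dots,a_t,o_t)$ with $a_k\in A_{\mathrm{meas}}$. The claim to be proved has three parts. \emph{(i)} The conditional law of $h$ given $H_t$ equals $b_t$, where $b_t$ is obtained by applying \eqref{eq:qsd-bayesian-update} recursively from $b_0=p_0$. \emph{(ii)} For any next action $a$, the predictive law of $o_{t+1}$ given $H_t$ depends on $H_t$ only through $b_t$, and is given by \eqref{eq:observation-probability-belief}. \emph{(iii)} The expected reward of a declaration $\delta_i$ given $H_t$ is a function of $b_t$ alone. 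It is natural to take this reward to be $b_t(i)$, the probability of a correct guess, possibly shifted by a cost that depends on $t$ and the actions taken. Together these show that the controlled process $(b_t)$ is Markov and that the conditional expected payoff of every history-dependent policy can be written in terms of $b_t$. This is exactly what is needed for the Bellman recursion on $\Delta_M$.

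For (i), the first step is to write out the joint law. We may allow $a_{k}$ to be chosen by a possibly randomized policy $\pi_k(a_k\mid H_{k-1})$ that does not depend on $h$. The static kernel then gives
\[
\Pr(h=i,H_t)=p_0(i)\prod_{k=1}^t \pi_k(a_k\mid H_{k-1})\,\mathrm{Tr}\!\left(E^{(a_k)}_{o_k}\rho_i\right).
\]
The policy factors do not depend on $i$, so they cancel when we normalize over $i$. Therefore
\[
\Pr(h=i\mid H_t)\propto p_0(i)\prod_{k}\mathrm{Tr}\!\left(E^{(a_k)}_{o_k}\rho_i\right).
\]
A straightforward induction on $t$ shows that this product equals the $t$-fold composition of $\tau$. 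At each step, the new factor $\mathrm{Tr}(E^{(a_t)}_{o_t}\rho_i)$ multiplies the previous posterior, and the denominator in \eqref{eq:qsd-bayesian-update} is exactly $\Pr(o_t\mid H_{t-1},a_t)$.

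For (ii), condition on $h$ and marginalize over it:
\[
\Pr(o\mid H_t,a)=\sum_j \Pr(h=j\mid H_t)\,\mathrm{Tr}\!\left(E_o^{(a)}\rho_j\right),
\]
which by (i) is \eqref{eq:observation-probability-belief} evaluated at $b_t$. Here the static kernel is what guarantees that $h$ does not change between steps. Part (iii) is immediate from (i).

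I expect the main obstacle to be a technical one rather than a conceptual one. Posteriors are undefined on zero-probability histories, and randomized policies must be shown not to spoil the cancellation. I would handle the first issue by defining $\tau(b,a,o)$ arbitrarily, say as $b$, when $\Pr(o\mid b,a)=0$, and noting that such histories are null under every policy. I would handle the second by the factorization argument above. In the factorization, the crucial point is that the policy factors are $h$-independent, because the agent does not observe $h$.
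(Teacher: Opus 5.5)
Your proposal is correct and uses the same mechanism as the paper's proof: because the static kernel keeps $h$ fixed, the posterior given the history is generated recursively from $(b_{t-1},a_t,o_t)$ by the Bayes map $\tau$. The paper's proof simply asserts the concluding ``hence'' step, whereas you make sufficiency precise (the posterior, the predictive outcome law, and the declaration reward all depend on the history only through $b_t$) and you check that $h$-independent randomized policy factors cancel and that null histories are harmless, so yours is a more complete version of the same argument.
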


\begin{proof}
Because the latent state is static and all uncertainty enters through the observation kernel,
the posterior after one step is obtained from the previous posterior, the chosen action, and
the realized observation by Bayes' rule. Hence the
belief state contains all information relevant for future decision making and is a sufficient
statistic for the full history.
\end{proof}

\subsection{Reward structure and terminal declarations}

The purpose of the sequential decision process is to balance discrimination accuracy against
measurement cost. We therefore assign rewards as follows.

For every measurement action $a \in A_{\mathrm{meas}}$, the agent pays a fixed measurement
cost:
\begin{equation}
\label{eq:measurement-reward}
r(h,a)=-c_{\mathrm{meas}}.
\end{equation}
For every declaration action $\delta_i$, the reward is the indicator of correct identification:
\begin{equation}
\label{eq:declaration-reward}
r(h,\delta_i)=\mathbf{1}(h=i).
\end{equation}

Given a belief state $b \in \Delta_M$, the expected reward of declaring $\delta_i$ is therefore
\[
R_{\mathrm{stop}}(b,\delta_i)
=
\sum_{h=1}^M b(h)\,r(h,\delta_i)
=
b(i).
\]
Hence the optimal stopping reward at belief $b$ is
\begin{equation}
\label{eq:stop-value-definition}
\mathrm{StopVal}(b):=\max_{i} b(i).
\end{equation}
This is the expected reward obtained by stopping immediately and declaring the most likely
hypothesis under the current posterior.

\subsection{Policies, Bellman recursion, and stopping time}

A time-dependent policy is a function
\begin{equation}
\label{eq:time-dependent-policy}
\mu_t : \Delta_M \to A,
\qquad t=0,\dots,H,
\end{equation}
and a finite-horizon policy is the tuple
\begin{equation}
\label{eq:full-policy}
\pi=(\mu_0,\mu_1,\dots,\mu_H).
\end{equation}
The interpretation is that, at stage $t$, the agent observes its current belief $b_t$ and
applies the decision rule $\mu_t(b_t)$.

At the terminal horizon, only stopping is meaningful, so the terminal value is
\begin{equation}
\label{eq:terminal-value-qsd}
V_H(b)=\mathrm{StopVal}(b)=\max_i b(i).
\end{equation}
For $t=0,\dots,H-1$, the Bellman recursion is
\begin{equation}
\label{eq:qsd-bellman-recursion}
V_t(b)
=
\max\left\{
\mathrm{StopVal}(b),\;
\max_{a\in A_{\mathrm{meas}}}
\left(
-c_{\mathrm{meas}}
+
\sum_{o\in O}
\Pr(o\mid b,a)\,
V_{t+1}(\tau(b,a,o))
\right)
\right\} \cite{SmallwoodSondik1973}.
\end{equation}
Thus, at each stage and belief, the agent compares the reward of stopping immediately
against the expected continuation value of performing one more measurement.

Finally, the stopping time induced by a policy $\pi$ is
\begin{equation}
\label{eq:stopping-time-definition}
\tau_\pi
:=
\min\left\{
t\in\{0,\dots,H\}:
\mu_t(b_t)\in\{\delta_1,\dots,\delta_M\}
\right\}.
\end{equation}
This is the first stage at which the policy chooses a declaration action.

\begin{remark}
The decomposition of the action space into measurement actions and declaration actions is
crucial. It allows stopping decisions and information-gathering decisions to be treated
within one unified Bellman recursion. This structural feature will later reappear in both the
projected dynamic programming algorithm and the stopping-time-based online complexity
analysis.
\end{remark}

\section{Positioning and Consistency with Conventional QSD}

Having fixed the POMDP model, we now clarify how this sequential-QSD formulation relates
to conventional one-shot minimum-error QSD. The main point is that the present framework should not be viewed as introducing
a new discrimination criterion. Rather, it provides a sequential decision-theoretic extension
of the standard discrimination problem, with belief updating and stopping made explicit.

Related adaptive-discrimination works have studied Bayesian local measurement procedures and dynamic-programming-based optimization for discriminating between two arbitrary tensor-product quantum states \cite{BrandsenLianStubbsRengaswamyPfister2020AdaptiveTPQS}. Our focus here is different. Rather than optimizing a specific local-measurement architecture for tensor-product inputs, we formulate sequential quantum state discrimination itself as a static-hidden-state finite-horizon POMDP and analyze finite-grid / finite-library approximation error and complexity for the resulting belief-space planner.

The static-hidden-state structure described in Section~2 shows that the model belongs to a
specialized class of finite-horizon POMDPs. At the same time, because the hidden variable is
a quantum-state label and observations are generated through the Born rule, the model also
belongs naturally to the class of sequential Bayesian quantum state discrimination problems.
Thus the framework should be read simultaneously in three ways:
\begin{enumerate}
    \item as a sequential-QSD model,
    \item as a static-hidden-state finite-horizon POMDP,
    \item as an information-acquisition problem with stopping.
\end{enumerate}

The key consistency check is the one-step case. In that regime, the agent performs one
measurement, receives one observation, updates its belief, and then stops with an optimal
declaration. If the resulting one-step objective agrees with the usual minimum-error QSD 
objective, then the present sequential framework can be understood as a genuine extension
of the conventional theory rather than a different optimization problem.

\begin{proposition}[One-step reduction formula]
\label{prop:one-step-reduction}
Consider the one-step special case in which the agent performs exactly one measurement and
then immediately declares a hypothesis. For a fixed prior belief $b \in \Delta_M$ and a
measurement action $a \in A_{\mathrm{meas}}$, let
\[
\Pr(o \mid b,a)
=
\sum_{j=1}^M b(j)\,\mathrm{Tr}\!\left(E_o^{(a)} \rho_j\right)
\]
be the probability of observation $o$, and let the posterior belief after observing $o$ be
\[
b_o(i)
=
\frac{
b(i)\,\mathrm{Tr}\!\left(E_o^{(a)} \rho_i\right)
}{
\sum_{j=1}^M b(j)\,\mathrm{Tr}\!\left(E_o^{(a)} \rho_j\right)
}.
\]
If the agent stops after the observation and declares optimally, then the resulting expected
stopping reward under measurement action $a$ is
\[
J_1(b,a)
:=
\sum_o \Pr(o \mid b,a)\,\max_i b_o(i).
\]
Moreover,
\[
J_1(b,a)
=
\sum_o \max_i \Bigl[
b(i)\,\mathrm{Tr}\!\left(E_o^{(a)} \rho_i\right)
\Bigr].
\]
Consequently, the one-step optimal value is
\[
V_1(b)
:=
\sup_{a \in A_{\mathrm{meas}}} J_1(b,a)
=
\sup_{a \in A_{\mathrm{meas}}}
\sum_o \max_i \Bigl[
b(i)\,\mathrm{Tr}\!\left(E_o^{(a)} \rho_i\right)
\Bigr].
\]
\end{proposition}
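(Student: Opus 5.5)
The plan is to verify the three claims in order. The first identifies $J_1(b,a)$ as the expected reward of the measure-then-declare-optimally strategy. The second is the algebraic simplification. The third takes the supremum over measurement actions.

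\emph{First claim.} Condition on the observation $o$. After observing $o$, the agent's information is summarized by the posterior $b_o$ (Proposition~\ref{prop:belief-sufficiency}). The expected declaration reward of $\delta_i$ is then $R_{\mathrm{stop}}(b_o,\delta_i)=b_o(i)$, so declaring optimally yields $\mathrm{StopVal}(b_o)=\max_i b_o(i)$ by \eqref{eq:stop-value-definition}. Averaging over $o$ with weights $\Pr(o\mid b,a)$, by the tower property, gives $J_1(b,a)$. Here the measurement cost is not included, consistent with the statement's definition of $J_1$ as a stopping reward.

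\emph{Second claim.} This is the key computation, and I would split outcomes into two classes.
\begin{itemize}
\item For $o$ with $\Pr(o\mid b,a)>0$: since $\Pr(o\mid b,a)$ is a positive constant independent of $i$, it can be moved inside the maximum. By \eqref{eq:qsd-bayesian-update},
\[
\Pr(o\mid b,a)\max_i b_o(i)=\max_i\bigl[\Pr(o\mid b,a)\,b_o(i)\bigr]=\max_i\bigl[b(i)\,\mathrm{Tr}(E_o^{(a)}\rho_i)\bigr].
\]
\item For $o$ with $\Pr(o\mid b,a)=0$: the posterior is undefined. Each summand $b(i)\mathrm{Tr}(E_o^{(a)}\rho_i)$ is nonnegative, because $E_o^{(a)}\succeq 0$ and $\rho_i\succeq 0$, and these summands add up to $0$. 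Hence every term vanishes and $\max_i[\cdot]=0$. Adopting the convention that zero-probability outcomes contribute $0$ to $J_1$ (any choice of $b_o$ times weight $0$), both sides agree term by term.
\end{itemize}
Summing over the finite alphabet $O$ gives the second formula.

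\emph{Third claim.} The formula for $V_1(b)$ follows by taking $\sup_{a\in A_{\mathrm{meas}}}$ of both sides of an identity that holds for every $a$. I would use a supremum rather than a maximum, since $A_{\mathrm{meas}}$ may be infinite or noncompact.

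The only delicate point is the handling of zero-probability outcomes, where $b_o$ is not defined. I would state the convention explicitly, and note that it is harmless precisely because of the positivity argument above.
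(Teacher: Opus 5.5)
Your proposal is correct and follows essentially the same route as the paper: the stopping reward after outcome $o$ is $\max_i b_o(i)$, this is averaged over outcomes, the positive normalizer $\Pr(o\mid b,a)$ cancels inside the maximum, and the supremum over $a$ is taken at the end. Your explicit treatment of zero-probability outcomes, where positivity of $b(i)\,\mathrm{Tr}(E_o^{(a)}\rho_i)$ forces every term to vanish, is a small rigor improvement over the paper, which leaves that case implicit.
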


\begin{proof}
After observing outcome $o$, the posterior belief is $b_o$. If the agent stops immediately
and chooses the best declaration action, then its reward is
\[
\max_i R_{\mathrm{stop}}(b_o,\delta_i)=\max_i b_o(i).
\]
Averaging over the observation variable yields
\[
J_1(b,a)=\sum_o \Pr(o \mid b,a)\,\max_i b_o(i).
\]
Substituting the posterior formula gives
\[
\sum_o
\left(
\sum_{j=1}^M b(j)\,\mathrm{Tr}\!\left(E_o^{(a)}\rho_j\right)
\right)
\max_i
\left[
\frac{
b(i)\,\mathrm{Tr}\!\left(E_o^{(a)}\rho_i\right)
}{
\sum_{j=1}^M b(j)\,\mathrm{Tr}\!\left(E_o^{(a)}\rho_j\right)
}
\right].
\]
For each fixed outcome $o$, the denominator does not depend on the maximization index $i$,
so it cancels with the prefactor $\Pr(o\mid b,a)$. Hence
\[
J_1(b,a)
=
\sum_o \max_i \Bigl[
b(i)\,\mathrm{Tr}\!\left(E_o^{(a)} \rho_i\right)
\Bigr].
\]
Taking the supremum over $a \in A_{\mathrm{meas}}$ yields the claimed expression for $V_1(b)$.
\end{proof}

\paragraph{Consistency with measurement plus classical post-processing.}
Proposition~\ref{prop:one-step-reduction} shows that, once a measurement action $a$ has
been fixed, the expected stopping reward obtained by optimal posterior-based declaration is
\[
J_1(b,a)
=
\sum_o \max_i \Bigl[
b(i)\,\mathrm{Tr}\!\left(E_o^{(a)} \rho_i\right)
\Bigr].
\]
This is exactly the objective one obtains by viewing the discrimination task as a two-stage
procedure:
\begin{enumerate}
    \item perform a physical measurement,
    \item apply an optimal classical decision rule to the resulting outcome.
\end{enumerate}

To see this more explicitly, fix a POVM
\[
E=\{E_o\}_{o\in O}
\]
and a classical post-processing rule
\[
g:O\to\{1,\dots,M\},
\]
where $g(o)$ is the declared hypothesis after observing outcome $o$. If the prior is
\[
q=(q_1,\dots,q_M),
\]
then the corresponding average success probability is
\begin{align}
P_{\mathrm{succ}}(E,g)
&=
\sum_i q_i \sum_o \mathrm{Tr}(E_o\rho_i)\,
\mathbf{1}\!\bigl(g(o)=i\bigr) \label{eq:success-postprocessing-1}\\
&=
\sum_o \sum_i q_i\,\mathrm{Tr}(E_o\rho_i)\,
\mathbf{1}\!\bigl(g(o)=i\bigr) \label{eq:success-postprocessing-2}\\
&=
\sum_o q_{g(o)}\,\mathrm{Tr}\!\left(E_o\rho_{g(o)}\right).
\label{eq:success-postprocessing-3}
\end{align}
For each fixed outcome $o$, the best decision is therefore to choose the index maximizing
\[
q_i\,\mathrm{Tr}(E_o\rho_i).
\]
Hence
\begin{equation}
\label{eq:optimal-postprocessing}
\max_g P_{\mathrm{succ}}(E,g)
=
\sum_o \max_i \Bigl[q_i\,\mathrm{Tr}(E_o\rho_i)\Bigr].
\end{equation}
In the present notation, the prior $q$ is simply the belief $b$. Therefore the one-step POMDP
objective coincides exactly with the optimal measurement-plus-post-processing objective.

\begin{remark}
The meaning of the previous calculation is conceptually important. The one-step POMDP
formulation does not alter the discrimination criterion; it merely makes the Bayesian update
and the final classical decision rule explicit. In this sense, the present framework refines the
usual measurement-and-guess picture rather than replacing it.
\end{remark}

\paragraph{Consistency with guess-labeled POVM optimization.}
The conventional one-shot Bayesian QSD problem can equivalently be formulated in terms of guess-labeled POVM (detection) operators \cite{YuenKennedyLax1975}
\begin{equation}
\label{eq:conventional-qsd-objective}
\max_{\{M_i\}} \sum_i q_i\,\mathrm{Tr}(M_i\rho_i),
\end{equation}
where $\{M_i\}_{i=1}^M$ is a POVM whose outcomes are already identified with the final
guesses. This formulation is equivalent to the measurement-plus-post-processing form above.

Indeed, given a POVM $E=\{E_o\}$ and a decision rule $g$, define
\begin{equation}
\label{eq:guess-labeled-povm-from-postprocessing}
M_i := \sum_{o:\,g(o)=i} E_o.
\end{equation}
Then each $M_i$ is positive and
\[
\sum_i M_i = \sum_o E_o = I,
\]
so $\{M_i\}$ is again a valid POVM. Moreover,
\begin{align*}
\sum_i q_i\,\mathrm{Tr}(M_i\rho_i)
&=
\sum_i q_i\,
\mathrm{Tr}\!\left(
\sum_{o:\,g(o)=i} E_o \rho_i
\right) \\
&=
\sum_i \sum_{o:\,g(o)=i}
q_i\,\mathrm{Tr}(E_o\rho_i) \\
&=
\sum_o q_{g(o)}\,\mathrm{Tr}\!\left(E_o\rho_{g(o)}\right),
\end{align*}
which is exactly the success probability \eqref{eq:success-postprocessing-3}.

Conversely, given any guess-labeled POVM $\{M_i\}_{i=1}^M$, one may regard it as a
measurement whose outcomes are already the guess labels themselves. In that case the
post-processing rule is trivial:
\[
g(i)=i.
\]
Thus the two formulations are equivalent in both directions.

\begin{proposition}[One-step consistency with conventional QSD]
\label{prop:one-step-consistency}
In the one-step case, the following three optimization problems are equivalent:
\begin{enumerate}
    \item the posterior-based stopping objective of the present POMDP formulation,
    \item measurement followed by optimal classical post-processing,
    \item conventional guess-labeled POVM optimization.
\end{enumerate}
\end{proposition}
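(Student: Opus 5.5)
The plan is to prove the equivalence of the three problems by showing that their optimal values coincide. It suffices to establish two links, (1)$\Leftrightarrow$(2) and (2)$\Leftrightarrow$(3), each at the level of a fixed measurement. Throughout, the belief $b$ plays the role of the prior $q$, and the measurement ranges over the admissible class. For (3) to be meaningful, this class is taken to be all POVMs, or more generally a class closed under coarse-graining of outcomes and containing the guess-labeled POVMs. I will state this assumption explicitly, since it is the only nontrivial point.

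\textbf{Step 1: (1)$\Leftrightarrow$(2).} For fixed $a$, Proposition~\ref{prop:one-step-reduction} gives
\[
J_1(b,a)=\sum_o \max_i \bigl[b(i)\,\mathrm{Tr}(E_o^{(a)}\rho_i)\bigr].
\]
Equation~\eqref{eq:optimal-postprocessing} shows that this equals $\max_g P_{\mathrm{succ}}(E^{(a)},g)$. The maximum over $g$ is attained by choosing, outcome by outcome, a maximizing index, which is possible because $O$ and $\{1,\dots,M\}$ are finite. Taking the supremum over $a$ on both sides then identifies the optimal values of (1) and (2). The same argument also identifies optimal strategies: the POMDP's optimal declaration after outcome $o$ is exactly the optimal rule $g(o)$.

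\textbf{Step 2: (2)$\Rightarrow$(3).} Given any pair $(E,g)$, form the coarse-grained POVM $M_i=\sum_{o:g(o)=i}E_o$ as in \eqref{eq:guess-labeled-povm-from-postprocessing}. The computation preceding the proposition shows that $\sum_i q_i\mathrm{Tr}(M_i\rho_i)=P_{\mathrm{succ}}(E,g)$. Hence the value of (3) is at least the value of (2).

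\textbf{Step 3: (3)$\Rightarrow$(2).} Conversely, any guess-labeled POVM $\{M_i\}$ is a measurement with outcome set $\{1,\dots,M\}$ and post-processing $g(i)=i$. This outcome set is embedded into $O$ by zero-padding, if needed, using the common-alphabet convention. Its success probability in (2) equals the objective in (3). So the value of (2) is at least the value of (3), and the two inequalities give equality of all three values.

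The main obstacle is not computational but lies in the admissible measurement class. If $A_{\mathrm{meas}}$ is a restricted family, for instance unitary-induced POVMs \eqref{eq:unitary-induced-povm}, then coarse-grainings $M_i$ may fall outside it, and guess-labeled POVMs may not be realizable. I would handle this by stating the proposition for the unrestricted POVM class, or for any class closed under the maps $E\mapsto M$ and containing the guess-labeled POVMs. Under only the weaker hypothesis that the class consists of POVMs, one still has (1)$=$(2)$\le$(3) unconditionally.
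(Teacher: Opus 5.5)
Your proposal is correct and takes essentially the same route as the paper. Proposition~\ref{prop:one-step-reduction} together with \eqref{eq:optimal-postprocessing} links (1) and (2), and the coarse-graining \eqref{eq:guess-labeled-povm-from-postprocessing} with its trivial post-processing converse links (2) and (3); your explicit caveat that the admissible measurement class must contain the guess-labeled POVMs and be closed under coarse-graining (otherwise only $(1)=(2)\le(3)$ holds) is a hypothesis the paper leaves implicit, and stating it makes the argument more precise.
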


\begin{proof}
Proposition~\ref{prop:one-step-reduction} identifies the one-step POMDP objective with the
measurement-plus-post-processing objective \eqref{eq:optimal-postprocessing}. The
construction \eqref{eq:guess-labeled-povm-from-postprocessing} and its converse establish
equivalence between measurement-plus-post-processing and guess-labeled POVM
optimization. Therefore all three formulations agree.
\end{proof}

\begin{remark}
Proposition~\ref{prop:one-step-consistency} should be read as the basic consistency check
for the whole framework. It shows that the present sequential model is not a departure from
standard minimum-error QSD; rather, it is a sequential generalization in which measurement,
Bayesian updating, classical post-processing, and stopping are all made explicit within one
dynamic decision architecture.
\end{remark}

\begin{remark}
Because the action space contains both information-gathering measurement actions and
terminal declaration actions, the model also belongs naturally to the broader class of
sequential information-acquisition problems with stopping. This perspective will become
important later when we interpret both the Bellman recursion and the stopping-time-based
online execution complexity.
\end{remark}

\section{Projected Dynamic Programming}

At this point, it is useful to clarify how the sequential-QSD problem is actually solved from the POMDP point of view. More broadly, the projected planner studied here fits within the larger literature on approximate POMDP solution methods for continuous belief spaces, especially point-based approaches that approximate value functions over selected belief subsets rather than the full simplex \cite{Kurniawati2022POMDPRobotics, Shani2013SurveyPBVI}. Our focus, however, is on a specialized static-hidden-state setting motivated by sequential quantum state discrimination, together with explicit finite-grid / finite-library approximation and complexity guarantees.

For readers less familiar with finite-horizon POMDP methods, the overall procedure should be understood as consisting of two distinct phases. The first phase is \emph{offline planning}. Before any actual experimental run takes place, one solves a dynamic program on the belief space and computes a finite-horizon policy
\[
\pi = (\mu_0,\mu_1,\dots,\mu_H),
\]
where each decision rule $\mu_t$ maps the current belief state to either a measurement action in $A_{\mathrm{meas}}$ or a declaration action in $\{\delta_1,\dots,\delta_M\}$. The purpose of this offline phase is to determine, for every relevant belief state and every remaining time step, whether it is better to stop immediately or to continue measuring.

The second phase is \emph{online execution}. Once the policy has been computed, the agent no longer solves an optimization problem from scratch during the actual run. Instead, it tracks the current belief state, consults the precomputed decision rule $\mu_t$, performs the recommended action, updates the belief after observing the measurement outcome, and repeats this process until a declaration action is selected. Thus the heavy computation is front-loaded into the offline stage, while the online stage follows only one realized trajectory.

In principle, the Bellman recursion is posed on a continuous belief space, namely the simplex $\Delta_M$, and the admissible measurement family may also be richer than a finite set. For numerical computation, however, we adopt a finite-grid ($B$) approximation of the belief simplex $\Delta_M$ \cite{Lovejoy1991}, together with a finite-library ($\Theta^h = \{\bar{\theta}_1, \cdots, \bar{\theta}_h\}$) approximation of the action space $\Theta$. The resulting planner therefore does not solve the exact continuous problem directly, but instead carries out backward induction on this projected finite computational model \cite{PineauGordonThrun2006}.

Even when the current belief $b$ lies in $B$, the exact posterior $\tau(b,a,o)$ generated after action $a$ and observation $o$ need not belong to $B$. We therefore apply a projection map
\[
\mathrm{Proj}_B : \Delta_M \to B
\]
to return the updated belief to the finite computational state space. By repeatedly combining this projected belief update with backward induction over the finite measurement library, the planner constructs approximate value and policy tables on the grid.

\begin{center}
    \captionsetup{type=algorithm}
    \captionof{algorithm}{Offline planning for projected sequential QSD}
    \label{alg:projected-planner}
\end{center}

\begin{algorithmic}[1]
\State \textbf{Input:} horizon $H$, belief grid $B$, measurement set $A_{\mathrm{meas}}$, cost $c_{\mathrm{meas}}$
\State Define
\[
\mathrm{StopVal}(b)=\max_i b(i), \qquad
\mathrm{ObsProb}(o \mid b,a)=\sum_i b(i)\,\mathrm{Tr}\!\left(E_o^{(a)}\rho_i\right)
\]
\State Define Bayesian update
\[
\tau(b,a,o)(i)=
\frac{b(i)\,\mathrm{Tr}\!\left(E_o^{(a)}\rho_i\right)}
{\sum_j b(j)\,\mathrm{Tr}\!\left(E_o^{(a)}\rho_j\right)}
\]
\For{all $b \in B$}
    \State $V_H(b) \gets \mathrm{StopVal}(b)$
    \State $\mu_H(b) \gets \delta_{\arg\max_i b(i)}$
\EndFor
\For{$t = H-1, H-2, \dots, 0$}
    \For{all $b \in B$}
        \State $v_{\mathrm{stop}} \gets \mathrm{StopVal}(b)$
        \State $a_{\mathrm{stop}} \gets \delta_{\arg\max_i b(i)}$
        \State $v_{\mathrm{meas}}^\star \gets -\infty$
        \State $a_{\mathrm{meas}}^\star \gets \emptyset$
        \For{all $a \in A_{\mathrm{meas}}$}
            \State $q(a) \gets -c_{\mathrm{meas}}$
            \For{all $o \in O$}
                \State $p_o \gets \mathrm{ObsProb}(o \mid b,a)$
                \If{$p_o > 0$}
                    \State $b^+ \gets \tau(b,a,o)$
                    \State $\hat b^+ \gets \mathrm{Proj}_B(b^+)$
                    \State $q(a) \gets q(a) + p_o\,V_{t+1}(\hat b^+)$
                \EndIf
            \EndFor
            \If{$q(a) > v_{\mathrm{meas}}^\star$}
                \State $v_{\mathrm{meas}}^\star \gets q(a)$
                \State $a_{\mathrm{meas}}^\star \gets a$
            \EndIf
        \EndFor
        \If{$v_{\mathrm{stop}} \ge v_{\mathrm{meas}}^\star$}
            \State $V_t(b) \gets v_{\mathrm{stop}}$
            \State $\mu_t(b) \gets a_{\mathrm{stop}}$
        \Else
            \State $V_t(b) \gets v_{\mathrm{meas}}^\star$
            \State $\mu_t(b) \gets a_{\mathrm{meas}}^\star$
        \EndIf
    \EndFor
\EndFor
\end{algorithmic}

Algorithm~\ref{alg:projected-planner} is consistent with the notation introduced earlier. The belief state is denoted by $b \in \Delta_M$, the measurement actions belong to $A_{\mathrm{meas}}$, the Bayesian posterior update is written as $\tau(b,a,o)$, and the planner computes both the value function $V_t(b)$ and the decision rule $\mu_t(b)$ at each stage. In particular, the terminal assignment
\[
V_H(b)=\max_i b(i), \qquad \mu_H(b)=\delta_{\arg\max_i b(i)}
\]
matches the stopping logic of the sequential-QSD formulation: once the horizon is exhausted, the agent must declare the most likely hypothesis.

The backward recursion has a simple interpretation. At each stage $t$ and each grid belief $b$, the planner first evaluates the \emph{stopping branch}, namely the reward obtained by declaring immediately according to the largest posterior coordinate. It then evaluates the \emph{continuation branch} for every measurement action $a \in A_{\mathrm{meas}}$. For a fixed measurement action, the planner averages the next-stage value over all possible outcomes $o \in O$, weighting each outcome by its probability and projecting the resulting posterior belief back to the grid. The quantity $q(a)$ is therefore the projected continuation value of choosing measurement $a$ at belief $b$. After scanning all measurement actions, the planner compares the best measurement value against the stopping value and stores whichever branch is better. In this way, the algorithm constructs both an approximate value table and an approximate policy table over the grid.

Once this offline computation has been completed, online execution is comparatively simple. Starting from the initial prior belief $b_0$, the agent evaluates the current decision rule at the belief $b_t$. If the output is a declaration action $\delta_i$, the process stops and hypothesis $i$ is reported. If the output is a measurement action $a_t \in A_{\mathrm{meas}}$, the agent performs that measurement, receives an observation $o_{t+1}$, and updates the belief according to
\[
b_{t+1} = \tau(b_t,a_t,o_{t+1}).
\]
The process then continues at time $t+1$ with the updated belief.

\section{Approximation Error and Lipschitz Constant Bounds}

We now turn to the approximation-theoretic content of the projected dynamic programming. The previous section introduced a computational planner based on a finite belief grid and a finite measurement library. The natural mathematical question is therefore: how far can the resulting projected value function deviate from the exact Bellman value function defined on the continuous belief simplex? This is a central issue in point-based and belief-set-based POMDP approximations, where approximate planners are constructed from finitely many representative beliefs rather than from the full continuous simplex \cite{PineauGordonThrun2006,SpaanVlassis2005Perseus, Shani2013SurveyPBVI}.

In the following subsections, we first isolate the effects of belief-space
discretization and action-space discretization separately. We then combine the
resulting estimates into a total approximation bound for the fully projected
finite-grid finite-library planner.

\subsection{Belief-Space Discretization Error}

Recall that, once a finite measurement library $A_{\mathrm{meas}}$ has been fixed, the exact finite-library Bellman recursion is
\begin{equation}
\label{eq:exact-finite-library-terminal}
V_H(b) := \mathrm{StopVal}(b),
\end{equation}
and, for $t = 0,\dots,H-1$,
\begin{equation}
\label{eq:exact-finite-library-recursion}
V_t(b)
:=
\max\left\{
\mathrm{StopVal}(b),\;
\max_{a \in A_{\mathrm{meas}}}
\left(
-c_{\mathrm{meas}}
+
\sum_{o \in O}
\mathrm{ObsProb}(o \mid b,a)\,
V_{t+1}(\tau(b,a,o))
\right)
\right\}.
\end{equation}

The projected finite-library Bellman recursion is defined only on the grid $B \subset \Delta_M$:
\begin{equation}
\label{eq:projected-finite-library-terminal}
\hat V_H(b) := \mathrm{StopVal}(b), \qquad b \in B,
\end{equation}
and, for $t = 0,\dots,H-1$,
\begin{equation}
\label{eq:projected-finite-library-recursion}
\hat V_t(b)
:=
\max\left\{
\mathrm{StopVal}(b),\;
\max_{a \in A_{\mathrm{meas}}}
\left(
-c_{\mathrm{meas}}
+
\sum_{o \in O}
\mathrm{ObsProb}(o \mid b,a)\,
\hat V_{t+1}(\mathrm{Proj}_B(\tau(b,a,o)))
\right)
\right\},
\,\, b \in B.
\end{equation}

To quantify the resolution of the belief grid, define the projection radius
\begin{equation}
\label{eq:grid-radius}
\delta_B
:=
\sup_{x \in \Delta_M}
\|x-\mathrm{Proj}_B(x)\|_\infty .
\end{equation}

\begin{theorem}[Belief-space discretization error bound on belief grid]
\label{thm:belief-discretization}
Assume that the exact finite-library value functions $V_t$ defined by \eqref{eq:exact-finite-library-terminal}--\eqref{eq:exact-finite-library-recursion} are Lipschitz on $\Delta_M$ with respect to $\|\cdot\|_\infty$, namely,
\begin{equation}
\label{eq:belief-lipschitz-assumption}
|V_t(b)-V_t(b')|
\le
L_t\|b-b'\|_\infty,
\qquad
\forall b,b' \in \Delta_M,
\end{equation}
for some constants $L_t \ge 0$. Then, for every $t=0,\dots,H$,
\begin{equation}
\label{eq:belief-discretization-bound}
\|V_t-\hat V_t\|_{\infty,B}
:=
\sup_{b \in B}|V_t(b)-\hat V_t(b)|
\le
\delta_B \sum_{s=t+1}^{H} L_s.
\end{equation}
\end{theorem}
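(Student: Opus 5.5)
The plan is a backward induction on $t$, from $t=H$ down to $t=0$, with the error quantity $e_t := \|V_t-\hat V_t\|_{\infty,B}$. The claim to be proved is $e_t \le \delta_B \sum_{s=t+1}^H L_s$.

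\textbf{Base case.} At $t=H$ both $V_H$ and $\hat V_H$ equal $\mathrm{StopVal}$ on $B$, by \eqref{eq:exact-finite-library-terminal} and \eqref{eq:projected-finite-library-terminal}. Hence $e_H=0$, which matches the empty sum.

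\textbf{Inductive step.} Fix $t<H$ and $b\in B$. Both $V_t(b)$ and $\hat V_t(b)$ are maxima over the same finite index set: the stopping branch and the actions $a\in A_{\mathrm{meas}}$. The stopping branches coincide. I will use the elementary inequality $|\max_k x_k-\max_k y_k|\le\max_k|x_k-y_k|$, so it suffices to bound, for each fixed $a$, the difference of the two continuation values. The constant $-c_{\mathrm{meas}}$ cancels, leaving
\[
\Bigl|\sum_{o}\mathrm{ObsProb}(o\mid b,a)\bigl[V_{t+1}(\tau(b,a,o))-\hat V_{t+1}(\mathrm{Proj}_B(\tau(b,a,o)))\bigr]\Bigr| .
\]
Only outcomes with positive probability contribute, matching the convention in Algorithm~\ref{alg:projected-planner}. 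For each such $o$, write $x=\tau(b,a,o)\in\Delta_M$ and $\hat x=\mathrm{Proj}_B(x)\in B$. I split the bracket by the triangle inequality:
\[
|V_{t+1}(x)-\hat V_{t+1}(\hat x)|\le |V_{t+1}(x)-V_{t+1}(\hat x)| + |V_{t+1}(\hat x)-\hat V_{t+1}(\hat x)|.
\]
The first term is at most $L_{t+1}\|x-\hat x\|_\infty\le L_{t+1}\delta_B$, by the Lipschitz assumption \eqref{eq:belief-lipschitz-assumption} and the definition \eqref{eq:grid-radius}. The second term is at most $e_{t+1}$, because $\hat x\in B$. The observation probabilities are nonnegative and sum to at most $1$, so the whole difference is bounded by $L_{t+1}\delta_B+e_{t+1}$.

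\textbf{Conclusion.} Taking the supremum over $b\in B$ gives the recursion $e_t\le \delta_B L_{t+1}+e_{t+1}$. Unrolling it from $e_H=0$ gives $e_t\le\delta_B\sum_{s=t+1}^H L_s$, which is \eqref{eq:belief-discretization-bound}.

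\textbf{Main obstacle.} There is no deep difficulty here. The step that needs care is the triangle-inequality split: the Lipschitz property must be applied to the \emph{exact} $V_{t+1}$, which is defined on all of $\Delta_M$. The inductive hypothesis must be applied only at the grid point $\hat x\in B$, since $\hat V_{t+1}$ is defined only on $B$. Applying either ingredient in the wrong place would require regularity of $\hat V_{t+1}$ or an error bound off the grid, neither of which is available. The remaining bookkeeping is routine: zero-probability outcomes are dropped consistently in both recursions, and the observation weights sum to one.
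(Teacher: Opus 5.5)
Your proposal is correct and follows essentially the same route as the paper: backward induction on $e_t$ with $e_H=0$, a max-difference inequality to reduce to per-action continuation values, and adding and subtracting $V_{t+1}(\mathrm{Proj}_B(\tau(b,a,o)))$. Lipschitz continuity of the exact $V_{t+1}$ then gives $L_{t+1}\delta_B$, and the inductive bound at the grid point gives $e_{t+1}$. The only cosmetic differences are that you fold the stopping branch into a single max over a common index set instead of the paper's two-stage inequality, and that you explicitly discard zero-probability outcomes.
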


\begin{proof}
Define the error sequence
\[
e_t := \|V_t-\hat V_t\|_{\infty,B}
= \sup_{b \in B}|V_t(b)-\hat V_t(b)|.
\]
At terminal time $t=H$, both value functions agree on the grid:
\[
V_H(b)=\mathrm{StopVal}(b)=\hat V_H(b), \qquad b \in B.
\]
Hence
\[
e_H = 0.
\]

Fix $t \in \{0,\dots,H-1\}$ and let $b \in B$. Introduce the exact and projected continuation values
\[
Q_t(b,a)
:=
-c_{\mathrm{meas}}
+
\sum_{o \in O}
\mathrm{ObsProb}(o \mid b,a)\,
V_{t+1}(\tau(b,a,o)),
\]
and
\[
\hat Q_t(b,a)
:=
-c_{\mathrm{meas}}
+
\sum_{o \in O}
\mathrm{ObsProb}(o \mid b,a)\,
\hat V_{t+1}(\mathrm{Proj}_B(\tau(b,a,o))).
\]
Then
\[
V_t(b)=\max\left\{\mathrm{StopVal}(b),\max_{a \in A_{\mathrm{meas}}}Q_t(b,a)\right\},
\]
\[
\hat V_t(b)=\max\left\{\mathrm{StopVal}(b),\max_{a \in A_{\mathrm{meas}}}\hat Q_t(b,a)\right\}.
\]
Since both expressions share the same stopping term, we may use the elementary inequality
\[
|\max\{\alpha,x\}-\max\{\alpha,y\}| \le |x-y|,
\]
which yields
\[
|V_t(b)-\hat V_t(b)|
\le
\left|
\max_{a \in A_{\mathrm{meas}}}Q_t(b,a)
-
\max_{a \in A_{\mathrm{meas}}}\hat Q_t(b,a)
\right|.
\]
Applying the inequality
\[
\left|
\max_a u_a-\max_a v_a
\right|
\le
\max_a |u_a-v_a|,
\]
we obtain
\begin{equation}
\label{eq:max-step}
|V_t(b)-\hat V_t(b)|
\le
\max_{a \in A_{\mathrm{meas}}}|Q_t(b,a)-\hat Q_t(b,a)|.
\end{equation}

Now fix $a \in A_{\mathrm{meas}}$. By adding and subtracting
\[
V_{t+1}(\mathrm{Proj}_B(\tau(b,a,o)))
\]
inside each summand, we obtain
\begin{align*}
|Q_t(b,a)-\hat Q_t(b,a)|
&\le
\sum_{o \in O}
\mathrm{ObsProb}(o \mid b,a)
\Bigl(
|V_{t+1}(\tau(b,a,o))-V_{t+1}(\mathrm{Proj}_B(\tau(b,a,o)))| \\
&\hspace{4.2cm}
+
|V_{t+1}(\mathrm{Proj}_B(\tau(b,a,o))) - \hat V_{t+1}(\mathrm{Proj}_B(\tau(b,a,o)))|
\Bigr).
\end{align*}
By the Lipschitz assumption \eqref{eq:belief-lipschitz-assumption} and the definition of $\delta_B$,
\[
|V_{t+1}(\tau(b,a,o))-V_{t+1}(\mathrm{Proj}_B(\tau(b,a,o)))|
\le
L_{t+1}\delta_B.
\]
Moreover, since $\mathrm{Proj}_B(\tau(b,a,o)) \in B$,
\[
|V_{t+1}(\mathrm{Proj}_B(\tau(b,a,o))) - \hat V_{t+1}(\mathrm{Proj}_B(\tau(b,a,o)))|
\le
e_{t+1}.
\]
Therefore,
\[
|Q_t(b,a)-\hat Q_t(b,a)|
\le
\sum_{o \in O}
\mathrm{ObsProb}(o \mid b,a)\,
\bigl(L_{t+1}\delta_B + e_{t+1}\bigr)
=
L_{t+1}\delta_B + e_{t+1}.
\]
Combining this with \eqref{eq:max-step}, we conclude that
\[
|V_t(b)-\hat V_t(b)|
\le
L_{t+1}\delta_B + e_{t+1}.
\]
Taking the supremum over $b \in B$ gives the recursion
\[
e_t \le L_{t+1}\delta_B + e_{t+1}.
\]
Since $e_H=0$, backward iteration yields
\[
e_t \le \delta_B \sum_{s=t+1}^{H} L_s,
\]
which is exactly \eqref{eq:belief-discretization-bound}.
\end{proof}

\begin{corollary}[Uniform Lipschitz case]
\label{cor:belief-discretization-uniform}
If, in addition, $L_s \le L$ for all $s=t+1,\dots,H$, then
\begin{equation}
\label{eq:belief-discretization-bound-uniform}
\|V_t-\hat V_t\|_{\infty,B}
\le
(H-t)L\,\delta_B.
\end{equation}
\end{corollary}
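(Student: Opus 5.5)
This is a direct specialization of Theorem~\ref{thm:belief-discretization}, so I would not redo the backward induction. Instead I would start from the bound \eqref{eq:belief-discretization-bound}, which holds for every $t=0,\dots,H$ under the Lipschitz hypothesis \eqref{eq:belief-lipschitz-assumption}:
\[
\|V_t-\hat V_t\|_{\infty,B}\le \delta_B\sum_{s=t+1}^{H}L_s .
\]

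The only remaining step is to bound the sum of Lipschitz constants. Under the extra assumption $L_s\le L$ for $s=t+1,\dots,H$, each of the $H-t$ terms is at most $L$. Hence
\[
\sum_{s=t+1}^{H}L_s\le (H-t)L.
\]
Since $\delta_B\ge 0$ by its definition \eqref{eq:grid-radius} as a supremum of norms, multiplying through by $\delta_B$ preserves the inequality. This gives \eqref{eq:belief-discretization-bound-uniform}.

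There is no real obstacle. The only boundary case worth noting is $t=H$: the sum is empty and the right-hand side is $0$, which is consistent with $e_H=0$ as established in the proof of Theorem~\ref{thm:belief-discretization}. Note also that only the constants $L_{t+1},\dots,L_H$ need to be bounded by $L$. The constant $L_t$ at the current stage does not enter the bound, and the statement is formulated accordingly.
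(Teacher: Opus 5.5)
Your proposal is correct and matches the paper's proof exactly: apply Theorem~\ref{thm:belief-discretization} and bound $\sum_{s=t+1}^{H} L_s \le (H-t)L$. Your remarks on $\delta_B \ge 0$ and on the empty-sum case $t=H$ are extra care that the paper leaves implicit.
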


\begin{proof}
Apply Theorem~\ref{thm:belief-discretization} and bound the sum by
\[
\sum_{s=t+1}^{H} L_s \le (H-t)L.
\]
\end{proof}

\begin{corollary}[Belief-space discretization error bound for arbitrary belief state]
\label{cor:belief-discretization-final-bound}
    For an arbitrary belief $b \in \Delta_M$, let us define the projected estimator
    \begin{equation}
        \tilde{V}_t(b) := \hat{V}_t(\operatorname{Proj}_B(b)).
    \end{equation}
    Then,
    \begin{equation}
        |V_t(b) - \tilde{V}_t(b)| \leq \delta_B \sum_{s=t}^H L_s.
    \end{equation}
\end{corollary}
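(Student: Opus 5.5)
The plan is a single triangle inequality around the grid point $\operatorname{Proj}_B(b)$. Fix $t\in\{0,\dots,H\}$ and $b\in\Delta_M$, and set $\hat b:=\operatorname{Proj}_B(b)\in B$. By definition $\tilde V_t(b)=\hat V_t(\hat b)$, so we split
\[
|V_t(b)-\tilde V_t(b)|
\le
|V_t(b)-V_t(\hat b)| + |V_t(\hat b)-\hat V_t(\hat b)|.
\]
We then bound the two terms separately.

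The first term is the error from moving $b$ onto the grid, measured through the exact value function. By the Lipschitz assumption \eqref{eq:belief-lipschitz-assumption} at stage $t$ and the definition \eqref{eq:grid-radius} of the projection radius,
\[
|V_t(b)-V_t(\hat b)|\le L_t\|b-\hat b\|_\infty\le L_t\,\delta_B .
\]
The second term compares exact and projected values at a grid point. Since $\hat b\in B$, Theorem~\ref{thm:belief-discretization} applies directly and gives
\[
|V_t(\hat b)-\hat V_t(\hat b)|\le \|V_t-\hat V_t\|_{\infty,B}\le \delta_B\sum_{s=t+1}^H L_s .
\]
Adding the two bounds yields $\delta_B\bigl(L_t+\sum_{s=t+1}^H L_s\bigr)=\delta_B\sum_{s=t}^H L_s$, which is the claim.

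No step is a real obstacle; the only point to check is the edge case $t=H$. There Theorem~\ref{thm:belief-discretization} gives an empty sum, i.e.\ $V_H=\hat V_H=\mathrm{StopVal}$ on $B$. The bound then reduces to $|V_H(b)-V_H(\hat b)|\le L_H\delta_B$, which is consistent with the stated formula. (Indeed $\mathrm{StopVal}$ is $1$-Lipschitz in $\|\cdot\|_\infty$, so one may take $L_H=1$.)
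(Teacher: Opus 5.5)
Your proof is correct and follows the paper's argument exactly: the triangle inequality around $\operatorname{Proj}_B(b)$, then $L_t\delta_B$ for the first term and Theorem~\ref{thm:belief-discretization} applied at the grid point for the second. Your remark on the $t=H$ case is a harmless addition; the paper leaves it implicit.
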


\begin{proof}
    Due to the triangle inequality,
    \begin{equation}
        |V_t(b) - \tilde{V}_t(b)| \leq |V_t(b) - V_t(\operatorname{Proj}_B(b))| + |V_t(\operatorname{Proj}_B(b)) - \hat{V}_t(\operatorname{Proj}_B(b))|.
    \end{equation}
    Then each term reduces to 
    \begin{equation}
        |V_t(b) - \tilde{V}_t(b)| \leq L_t\delta_B + \delta_B \sum_{s = t + 1}^H L_s,
    \end{equation}
    which proves the claim.
\end{proof}

Theorem~\ref{thm:belief-discretization} has a simple interpretation. Each time the projected planner replaces an exact posterior belief by a nearby grid point, it incurs a local geometric error of size at most $\delta_B$. The impact of that perturbation on future return is then amplified by the sensitivity of the next-stage value function, namely by the Lipschitz constant $L_{t+1}$. The total error is therefore a horizon-wise accumulation of \emph{resolution times sensitivity}. In this sense, the theorem says that belief projection is harmless whenever the value landscape varies smoothly across the simplex.

Theorem~\ref{thm:belief-discretization} reduces the approximation question to a regularity question: how large can the belief-Lipschitz constants $L_t$ be? The stopping branch itself is benign, but the continuation branch contains the posterior map
\[
\tau(b,a,o),
\]
which is a rational function of the belief variable and may become sensitive if its denominator behaves poorly. We therefore next derive sufficient conditions under which the exact finite-library value functions are Lipschitz on $\Delta_M$ with respect to $\|\cdot\|_\infty$.

\subsection{Lipschitz Continuity of the Exact Value Function}

To control the constants $L_t$ recursively, we assume that both the observation probability and the posterior map are regular in the belief variable.

\begin{assumption}[Belief-regularity assumptions]
\label{ass:belief-regularity}
Assume that there exist constants $C_P^b \ge 0$ and $C_\tau^b \ge 0$ such that, for every measurement action $a \in A_{\mathrm{meas}}$, every outcome $o \in O$, and all beliefs $b,b' \in \Delta_M$,
\begin{equation}
\label{eq:obsprob-belief-lipschitz}
|\mathrm{ObsProb}(o \mid b,a)-\mathrm{ObsProb}(o \mid b',a)|
\le
C_P^b \|b-b'\|_\infty,
\end{equation}
and
\begin{equation}
\label{eq:posterior-belief-lipschitz}
\|\tau(b,a,o)-\tau(b',a,o)\|_\infty
\le
C_\tau^b \|b-b'\|_\infty,
\end{equation}
whenever both posteriors are well-defined.
\end{assumption}

\begin{remark}
Assumption~\ref{ass:belief-regularity} separates the easy part from the delicate part. The observation probability is linear in the belief variable, so \eqref{eq:obsprob-belief-lipschitz} is typically immediate. The substantive requirement is \eqref{eq:posterior-belief-lipschitz}, because the Bayesian posterior is a normalized rational map. In concrete models, this posterior regularity is often obtained from a nondegeneracy condition preventing observation probabilities from becoming arbitrarily small. 
\end{remark}

\begin{lemma}[Lipschitz continuity of the continuation value in the belief variable]
\label{lem:continuation-belief-lipschitz}
Assume Assumption~\ref{ass:belief-regularity}. Let $t \in \{0,\dots,H-1\}$, and suppose that $V_{t+1}$ is bounded on $\Delta_M$ and $L_{t+1}$-Lipschitz with respect to $\|\cdot\|_\infty$. Then, for every fixed $a \in A_{\mathrm{meas}}$, the continuation value
\[
Q_t(\cdot,a):\Delta_M \to \mathbb{R},
\qquad
Q_t(b,a)
=
-c_{\mathrm{meas}}
+
\sum_{o \in O}
\mathrm{ObsProb}(o \mid b,a)\,
V_{t+1}(\tau(b,a,o))
\]
is Lipschitz on $\Delta_M$. More precisely,
\begin{equation}
\label{eq:continuation-belief-lipschitz}
|Q_t(b,a)-Q_t(b',a)|
\le
\widetilde L_t(a)\|b-b'\|_\infty,
\qquad
b,b' \in \Delta_M,
\end{equation}
where one may take
\begin{equation}
\label{eq:continuation-belief-lipschitz-constant}
\widetilde L_t(a)
:=
\sum_{o \in O}
\left(
C_P^b \|V_{t+1}\|_\infty
+
L_{t+1} C_\tau^b
\right).
\end{equation}
In particular,
\begin{equation}
\label{eq:continuation-belief-lipschitz-constant-coarse}
\widetilde L_t(a)
\le
|O|
\left(
C_P^b \|V_{t+1}\|_\infty
+
L_{t+1} C_\tau^b
\right).
\end{equation}
\end{lemma}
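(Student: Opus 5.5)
The plan is the standard add-and-subtract argument applied term by term in the sum over outcomes. Fix $a \in A_{\mathrm{meas}}$ and $b,b' \in \Delta_M$. Abbreviate $p_o := \mathrm{ObsProb}(o\mid b,a)$, $p_o' := \mathrm{ObsProb}(o\mid b',a)$, $x_o := \tau(b,a,o)$ and $x_o' := \tau(b',a,o)$. The constant $-c_{\mathrm{meas}}$ cancels in the difference, so
\[
Q_t(b,a)-Q_t(b',a)=\sum_{o\in O}\bigl(p_o V_{t+1}(x_o)-p_o' V_{t+1}(x_o')\bigr).
\]
For each $o$ I will insert the mixed term $p_o' V_{t+1}(x_o)$ and write
\[
p_oV_{t+1}(x_o)-p_o'V_{t+1}(x_o') = (p_o-p_o')V_{t+1}(x_o) + p_o'\bigl(V_{t+1}(x_o)-V_{t+1}(x_o')\bigr).
\]
The first piece is bounded by $C_P^b\|b-b'\|_\infty\,\|V_{t+1}\|_\infty$, using \eqref{eq:obsprob-belief-lipschitz} and boundedness of $V_{t+1}$. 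For the second piece, $p_o'\le 1$, the $L_{t+1}$-Lipschitz property of $V_{t+1}$, and \eqref{eq:posterior-belief-lipschitz} give the bound $L_{t+1}C_\tau^b\|b-b'\|_\infty$. Summing over $o\in O$ with the triangle inequality yields \eqref{eq:continuation-belief-lipschitz} with the constant \eqref{eq:continuation-belief-lipschitz-constant}. Since the summand does not depend on $o$, this constant equals $|O|(C_P^b\|V_{t+1}\|_\infty+L_{t+1}C_\tau^b)$, which gives \eqref{eq:continuation-belief-lipschitz-constant-coarse}.

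The step needing care is outcomes whose probability vanishes, where the posterior is undefined and \eqref{eq:posterior-belief-lipschitz} is only asserted ``whenever both posteriors are well-defined''. I adopt the convention of Algorithm~\ref{alg:projected-planner}: a term with zero probability contributes $0$ to the sum, so no posterior is needed there. I then split into cases for each $o$.

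\emph{Case 1: $p_o,p_o'>0$.} The decomposition above applies directly.

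\emph{Case 2: both are zero.} The summand difference is $0$.

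\emph{Case 3: exactly one is zero, say $p_o'=0<p_o$.} The difference is $p_oV_{t+1}(x_o)=(p_o-p_o')V_{t+1}(x_o)$, which is bounded by $C_P^b\|b-b'\|_\infty\|V_{t+1}\|_\infty$. The symmetric subcase $p_o=0<p_o'$ is handled the same way, choosing the mixed term so that only the well-defined posterior appears.

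In every case the per-outcome bound is at most $C_P^b\|V_{t+1}\|_\infty+L_{t+1}C_\tau^b$ times $\|b-b'\|_\infty$. This is the only potential obstacle, and it is a bookkeeping matter rather than a genuine analytic difficulty.
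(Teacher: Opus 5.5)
Your proposal is correct and follows the paper's proof essentially verbatim. It uses the same per-outcome identity $px-p'x'=(p-p')x+p'(x-x')$, the same two bounds $C_P^b\|V_{t+1}\|_\infty\|b-b'\|_\infty$ and $L_{t+1}C_\tau^b\|b-b'\|_\infty$ (via $p'\le 1$), and summation over $O$. Your case split for zero-probability outcomes is extra care that the paper's proof omits, since it implicitly assumes both posteriors are well-defined, and your choice of mixed term in each subcase handles it correctly.
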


\begin{proof}
Fix $a \in A_{\mathrm{meas}}$, and for each outcome $o \in O$ define
\[
f_o(b)
:=
\mathrm{ObsProb}(o \mid b,a)\,
V_{t+1}(\tau(b,a,o)).
\]
Then
\[
Q_t(b,a) = -c_{\mathrm{meas}} + \sum_{o \in O} f_o(b),
\]
so it suffices to bound $|f_o(b)-f_o(b')|$ for a fixed outcome $o$.

Set
\[
p := \mathrm{ObsProb}(o \mid b,a), \qquad
p' := \mathrm{ObsProb}(o \mid b',a),
\]
and
\[
x := V_{t+1}(\tau(b,a,o)), \qquad
x' := V_{t+1}(\tau(b',a,o)).
\]
Then
\[
f_o(b)-f_o(b') = px - p'x'.
\]
Using the identity
\[
px-p'x' = (p-p')x + p'(x-x'),
\]
we obtain
\[
|f_o(b)-f_o(b')|
\le
|p-p'|\,|x| + p'|x-x'|.
\]
Since $|x| \le \|V_{t+1}\|_\infty$ and $0 \le p' \le 1$, Assumption~\ref{ass:belief-regularity} and the Lipschitz property of $V_{t+1}$ yield
\begin{align*}
|f_o(b)-f_o(b')|
&\le
|\mathrm{ObsProb}(o \mid b,a)-\mathrm{ObsProb}(o \mid b',a)|\,\|V_{t+1}\|_\infty \\
&\quad
+
|V_{t+1}(\tau(b,a,o))-V_{t+1}(\tau(b',a,o))| \\
&\le
C_P^b \|V_{t+1}\|_\infty \|b-b'\|_\infty
+
L_{t+1}\|\tau(b,a,o)-\tau(b',a,o)\|_\infty \\
&\le
\left(
C_P^b \|V_{t+1}\|_\infty
+
L_{t+1}C_\tau^b
\right)
\|b-b'\|_\infty.
\end{align*}
Summing over all $o \in O$ proves \eqref{eq:continuation-belief-lipschitz}--\eqref{eq:continuation-belief-lipschitz-constant-coarse}.
\end{proof}

Lemma~\ref{lem:continuation-belief-lipschitz} is useful because it isolates the two distinct sources of sensitivity in the continuation branch. The first contribution,
\[
C_P^b \|V_{t+1}\|_\infty,
\]
comes from the fact that changing the belief also changes the outcome probabilities themselves. The second contribution,
\[
L_{t+1}C_\tau^b,
\]
comes from the movement of the posterior state inside the future value function. Thus the continuation branch is sensitive not only because the weights in the expectation move, but also because the evaluation points of the next-stage value function move.

\begin{proposition}[Recursive belief Lipschitz bound for the exact value function]
\label{prop:value-belief-lipschitz}
Assume Assumption~\ref{ass:belief-regularity}. Suppose that $V_{t+1}$ is bounded and $L_{t+1}$-Lipschitz on $\Delta_M$ with respect to $\|\cdot\|_\infty$. Then $V_t$ is also Lipschitz on $\Delta_M$ with respect to $\|\cdot\|_\infty$, and one may take
\begin{equation}
\label{eq:value-belief-lipschitz-recursive}
L_t
\le
\max\left\{
1,\;
\max_{a \in A_{\mathrm{meas}}} \widetilde L_t(a)
\right\},
\end{equation}
where $\widetilde L_t(a)$ is given by \eqref{eq:continuation-belief-lipschitz-constant}. In particular, using \eqref{eq:continuation-belief-lipschitz-constant-coarse},
\begin{equation}
\label{eq:value-belief-lipschitz-recursive-coarse}
L_t
\le
\max\left\{
1,\;
|O|
\left(
C_P^b \|V_{t+1}\|_\infty
+
L_{t+1}C_\tau^b
\right)
\right\}.
\end{equation}
\end{proposition}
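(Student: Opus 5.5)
The plan is to write $V_t$ as a pointwise maximum of finitely many Lipschitz functions and then use the fact that a maximum of $L$-Lipschitz functions is again $L$-Lipschitz. By the recursion \eqref{eq:exact-finite-library-recursion}, for every $b\in\Delta_M$,
\[
V_t(b)=\max\Bigl\{\mathrm{StopVal}(b),\ \max_{a\in A_{\mathrm{meas}}} Q_t(b,a)\Bigr\},
\]
where $Q_t$ is the continuation value of Lemma~\ref{lem:continuation-belief-lipschitz}. The library $A_{\mathrm{meas}}$ is finite in this section, so all maxima are attained and no measurability or attainment issues arise.

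First I will check the stopping branch. For each $i$, the map $b\mapsto b(i)$ satisfies $|b(i)-b'(i)|\le\|b-b'\|_\infty$. Applying the elementary inequality $|\max_i u_i-\max_i v_i|\le\max_i|u_i-v_i|$ from the proof of Theorem~\ref{thm:belief-discretization}, I get
\[
|\mathrm{StopVal}(b)-\mathrm{StopVal}(b')|\le\|b-b'\|_\infty,
\]
so $\mathrm{StopVal}$ is $1$-Lipschitz. Second, under the stated hypotheses (Assumption~\ref{ass:belief-regularity}, and $V_{t+1}$ bounded and $L_{t+1}$-Lipschitz), Lemma~\ref{lem:continuation-belief-lipschitz} gives directly that each $Q_t(\cdot,a)$ is $\widetilde L_t(a)$-Lipschitz.

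Third, I combine the branches. Set $K:=\max\{1,\max_a\widetilde L_t(a)\}$. Each of the finitely many functions $\mathrm{StopVal}$ and $Q_t(\cdot,a)$, $a\in A_{\mathrm{meas}}$, is then $K$-Lipschitz. Applying the same max inequality to this finite family yields $|V_t(b)-V_t(b')|\le K\|b-b'\|_\infty$, which is \eqref{eq:value-belief-lipschitz-recursive}. The coarse form \eqref{eq:value-belief-lipschitz-recursive-coarse} then follows by inserting \eqref{eq:continuation-belief-lipschitz-constant-coarse}, since $\widetilde L_t(a)$ has a uniform bound over $a$.

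The one delicate point is the posterior map $\tau(b,a,o)$, which is undefined when $\mathrm{ObsProb}(o\mid b,a)=0$. The lemma's proof implicitly assumes both posteriors exist. I would handle this by following the convention of Algorithm~\ref{alg:projected-planner} and dropping summands with zero probability; such a summand is then $0$ regardless of how $\tau$ is defined. If only one of $p,p'$ vanishes, say $p'=0$, then $|f_o(b)-f_o(b')|=p|x|\le|p-p'|\,\|V_{t+1}\|_\infty$, which is already covered by the first term of the lemma's bound. The Lipschitz estimate therefore holds in every case. I also need boundedness of $V_t$ so the recursion can continue: $\mathrm{StopVal}\in[0,1]$ and $|Q_t|\le c_{\mathrm{meas}}+\|V_{t+1}\|_\infty$ give $\|V_t\|_\infty<\infty$. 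By induction from $V_H=\mathrm{StopVal}$, which has $L_H=1$, this yields all the constants $L_t$.
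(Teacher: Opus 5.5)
Your proof is correct and follows the paper's argument. Like the paper, you write $V_t$ as the pointwise maximum of the $1$-Lipschitz stopping value and the $\widetilde L_t(a)$-Lipschitz continuation values from Lemma~\ref{lem:continuation-belief-lipschitz}, then use that a finite maximum of Lipschitz functions is Lipschitz with the largest constant. Your treatment of zero-probability outcomes and your check that $V_t$ stays bounded (so the recursion can be iterated) go slightly beyond what the paper spells out, but they do not change the route.
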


\begin{proof}
Define
\[
f_0(b):=\mathrm{StopVal}(b),
\qquad
f_a(b):=Q_t(b,a), \quad a \in A_{\mathrm{meas}}.
\]
Then
\[
V_t(b)
=
\max\left\{
f_0(b),\;
\max_{a \in A_{\mathrm{meas}}}f_a(b)
\right\}.
\]

First, $f_0=\mathrm{StopVal}$ is $1$-Lipschitz with respect to $\|\cdot\|_\infty$, since
\[
|\max_i b(i)-\max_i b'(i)|
\le
\|b-b'\|_\infty.
\]
Second, by Lemma~\ref{lem:continuation-belief-lipschitz}, each $f_a$ is $\widetilde L_t(a)$-Lipschitz.

We now use the elementary fact that the pointwise maximum of finitely many Lipschitz functions is again Lipschitz, with constant bounded by the maximum of the individual constants:
\begin{equation}
    \left|\max_j g_j(x) - \max_j g_j(y)\right| \leq \max_j |g_j(x) - g_j(y)| \leq \max_j L_j ||x - y||_\infty
\end{equation}
Applying this to the finite family
\[
\{f_0\}\cup\{f_a : a \in A_{\mathrm{meas}}\}
\]
yields \eqref{eq:value-belief-lipschitz-recursive}. The coarser estimate \eqref{eq:value-belief-lipschitz-recursive-coarse} follows immediately from \eqref{eq:continuation-belief-lipschitz-constant-coarse}.
\end{proof}

\begin{corollary}[Recursive sufficient condition for belief Lipschitz regularity]
\label{cor:recursive-belief-lipschitz}
Under Assumption~\ref{ass:belief-regularity}, the exact finite-library value functions $(V_t)_{t=0}^H$ are recursively Lipschitz on $\Delta_M$ with respect to $\|\cdot\|_\infty$. More precisely,
\[
V_H = \mathrm{StopVal}
\]
is $1$-Lipschitz, and if $V_{t+1}$ is Lipschitz, then $V_t$ is Lipschitz with constant satisfying \eqref{eq:value-belief-lipschitz-recursive}.
\end{corollary}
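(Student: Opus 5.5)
The plan is a backward induction on $t$, from the terminal stage $t=H$ down to $t=0$, built on Proposition~\ref{prop:value-belief-lipschitz}.

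\textbf{Base case.} At $t=H$ we have $V_H=\mathrm{StopVal}$. This function is $1$-Lipschitz with respect to $\|\cdot\|_\infty$, because
\[
|\max_i b(i)-\max_i b'(i)|\le\max_i|b(i)-b'(i)|.
\]
It is also bounded by $1$ on $\Delta_M$, since every coordinate of a belief lies in $[0,1]$.

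\textbf{Inductive step.} Suppose $V_{t+1}$ is bounded and $L_{t+1}$-Lipschitz. Proposition~\ref{prop:value-belief-lipschitz} then gives that $V_t$ is Lipschitz, with the constant in \eqref{eq:value-belief-lipschitz-recursive}.

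The proposition needs boundedness of $V_{t+1}$ as a hypothesis, so the induction must also carry boundedness forward. I would prove directly that each $V_t$ takes values in $[-Hc_{\mathrm{meas}},1]$, or more crudely that $\|V_t\|_\infty\le 1+(H-t)c_{\mathrm{meas}}$.
\begin{itemize}
\item The stop branch lies in $[0,1]$.
\item Each $Q_t(b,a)$ is $-c_{\mathrm{meas}}$ plus a convex combination of values of $V_{t+1}$, because the weights $\mathrm{ObsProb}(o\mid b,a)$ sum to $1$.
\end{itemize}
Hence $\|V_t\|_\infty\le\max\{1,\;c_{\mathrm{meas}}+\|V_{t+1}\|_\infty\}$. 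In fact $V_t\ge \mathrm{StopVal}\ge 1/M>0$, so $\|V_t\|_\infty\le 1$ always holds, which is even simpler.

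\textbf{Main obstacle: zero-probability outcomes.} The posterior $\tau(b,a,o)$ is undefined when $\mathrm{ObsProb}(o\mid b,a)=0$. I would adopt the convention that such terms contribute zero to the sum in $Q_t$, matching Algorithm~\ref{alg:projected-planner}.

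Under this convention the term $f_o$ from the proof of Lemma~\ref{lem:continuation-belief-lipschitz} is still controlled, by cases:
\begin{itemize}
\item If both $p$ and $p'$ are zero, the difference vanishes.
\item If only $p'=0$, the identity $px-p'x'=(p-p')x+p'(x-x')$ reduces to $(p-p')x$, and the bound $C_P^b\|V_{t+1}\|_\infty\|b-b'\|_\infty$ survives.
\item If both are positive, the posterior estimate \eqref{eq:posterior-belief-lipschitz} applies as stated.
\end{itemize}
With this convention checked, the induction closes and yields the corollary.
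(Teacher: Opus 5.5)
Your proposal is correct and follows the paper's argument. The base case is the $1$-Lipschitz property of $\mathrm{StopVal}$ in $\|\cdot\|_\infty$, and the recursive step is a direct application of Proposition~\ref{prop:value-belief-lipschitz}. Your extra bookkeeping is also sound and fills in points the paper leaves implicit: boundedness of $V_{t+1}$ (in fact automatic, since a Lipschitz function on the bounded set $\Delta_M$ is bounded), and the convention that zero-probability outcomes contribute nothing to $Q_t$. Under that convention your case analysis keeps the same constant $\widetilde L_t(a)$; the case $p=0<p'$ follows by swapping $b$ and $b'$.
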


\begin{proof}
The terminal claim follows immediately from
\[
V_H(b)=\mathrm{StopVal}(b)=\max_i b(i),
\]
which is $1$-Lipschitz in $\|\cdot\|_\infty$. The recursive step is exactly Proposition~\ref{prop:value-belief-lipschitz}.
\end{proof}

Proposition~\ref{prop:value-belief-lipschitz} and Corollary~\ref{cor:recursive-belief-lipschitz} show that the dynamic programming preserves regularity backward in time. The stopping branch is geometrically simple, and the continuation branch inherits regularity from the next-stage value function provided the observation probabilities and posterior map are themselves regular in belief. The recursive formula
\[
L_t
\le
\max\left\{
1,\;
\max_{a \in A_{\mathrm{meas}}}\widetilde L_t(a)
\right\}
\]
should therefore be read as a stability statement for the Bellman recursion: as long as belief perturbations do not produce wild changes in the posterior update, the value function remains controlled throughout the horizon.

\subsection{Action-Space Discretization Error}

We now turn to the second approximation mechanism: discretization of the measurement action space. Up to this point, measurement actions have been denoted abstractly by $a \in A_{\mathrm{meas}}$. That level of abstraction was sufficient for the POMDP formulation, for the projected finite-library planner, and for the belief-space discretization analysis carried out above. It is also the natural notation in settings where the measurement set is taken to be primitive and finite from the outset. In the present subsection, however, the main object of analysis is precisely the approximation of a richer measurement family by a finite library. For this reason, we now make the parameterization of the measurement action explicit and write $\theta \in \Theta$, with a finite sublibrary $\Theta^h \subset \Theta$. This is not a change of model, but a refinement of notation needed to speak meaningfully about action-space distance, covering radius, and Lipschitz continuity in the action variable.

In this subsection we isolate only the effect of action discretization. Belief projection is not yet imposed, and both the exact and approximate Bellman recursions are evaluated on the full belief simplex $\Delta_M$. Thus the sole approximation mechanism under consideration here is the restriction of the continuous measurement family $\Theta$ to the finite library $\Theta^h$.

Let $(\Theta,d_\Theta)$ be a metric space of measurement parameters. For a bounded function $W:\Delta_M \to \mathbb{R}$, define the continuation functional
\begin{equation}
\label{eq:continuation-functional}
G_t(b,\theta;W)
:=
-c_{\mathrm{meas}}
+
\sum_{o \in O}
\mathrm{ObsProb}(o \mid b,\theta)\,
W(\tau(b,\theta,o)).
\end{equation}
The exact continuous-action Bellman recursion is
\begin{equation}
\label{eq:continuous-action-terminal}
V_H^\ast(b):=\mathrm{StopVal}(b),
\end{equation}
and, for $t=0,\dots,H-1$,
\begin{equation}
\label{eq:continuous-action-recursion}
V_t^\ast(b)
:=
\max\left\{
\mathrm{StopVal}(b),\;
\sup_{\theta \in \Theta} G_t(b,\theta;V_{t+1}^\ast)
\right\}.
\end{equation}

Let $\Theta^h \subset \Theta$ be a finite measurement library. The corresponding action-discretized value function is
\begin{equation}
\label{eq:finite-library-terminal}
V_H^h(b):=\mathrm{StopVal}(b),
\end{equation}
and, for $t=0,\dots,H-1$,
\begin{equation}
\label{eq:finite-library-recursion}
V_t^h(b)
:=
\max\left\{
\mathrm{StopVal}(b),\;
\max_{\bar\theta \in \Theta^h} G_t(b,\bar\theta;V_{t+1}^h)
\right\}.
\end{equation}

It is convenient to introduce the one-step Bellman operators
\begin{equation}
\label{eq:continuous-action-operator}
(T_t^\Theta W)(b)
:=
\max\left\{
\mathrm{StopVal}(b),\;
\sup_{\theta \in \Theta} G_t(b,\theta;W)
\right\},
\end{equation}
and
\begin{equation}
\label{eq:finite-library-operator}
(T_t^{\Theta^h} W)(b)
:=
\max\left\{
\mathrm{StopVal}(b),\;
\max_{\bar\theta \in \Theta^h} G_t(b,\bar\theta;W)
\right\}.
\end{equation}
The action covering radius of the library $\Theta^h$ is defined by
\begin{equation}
\label{eq:action-covering-radius}
\delta_A
:=
\sup_{\theta \in \Theta}
\min_{\bar\theta \in \Theta^h}
d_\Theta(\theta,\bar\theta).
\end{equation}
Thus $\delta_A$ measures how well the finite library $\Theta^h$ covers the continuous measurement family $\Theta$: the smaller $\delta_A$ is, the more closely every admissible measurement can be approximated by a nearby library element.

\begin{lemma}[One-step action discretization error]
\label{lem:one-step-action-discretization}
Let $W:\Delta_M \to \mathbb{R}$ be bounded. Assume that there exists a constant $K_t(W)\ge 0$ such that
\begin{equation}
\label{eq:continuation-action-lipschitz}
|G_t(b,\theta;W)-G_t(b,\theta';W)|
\le
K_t(W)\,d_\Theta(\theta,\theta'),
\qquad
\forall b \in \Delta_M,\ \forall \theta,\theta' \in \Theta.
\end{equation}
Then, for every $b \in \Delta_M$,
\begin{equation}
\label{eq:one-step-action-error}
0
\le
(T_t^\Theta W)(b)-(T_t^{\Theta^h}W)(b)
\le
K_t(W)\,\delta_A.
\end{equation}
\end{lemma}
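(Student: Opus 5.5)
The plan is to compare the two operators pointwise at a fixed belief $b\in\Delta_M$. Both share the stopping term $\mathrm{StopVal}(b)$ and differ only in whether the continuation branch takes a supremum over $\Theta$ or a maximum over the finite library $\Theta^h\subset\Theta$. Write
\[
S(b):=\sup_{\theta\in\Theta}G_t(b,\theta;W),\qquad S^h(b):=\max_{\bar\theta\in\Theta^h}G_t(b,\bar\theta;W).
\]
First I will note that $S(b)$ is finite. Since $W$ is bounded and $\mathrm{ObsProb}(\cdot\mid b,\theta)$ is a probability vector, $|G_t(b,\theta;W)|\le c_{\mathrm{meas}}+\|W\|_\infty$. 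The maximum $S^h(b)$ exists because $\Theta^h$ is finite, and I will assume it is nonempty.

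The lower bound is immediate. Because $\Theta^h\subset\Theta$, every value $G_t(b,\bar\theta;W)$ with $\bar\theta\in\Theta^h$ is at most $S(b)$, so $S^h(b)\le S(b)$. Since $x\mapsto\max\{\alpha,x\}$ is nondecreasing, it follows that $(T_t^{\Theta^h}W)(b)\le(T_t^\Theta W)(b)$.

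For the upper bound, I will first use the elementary inequality $|\max\{\alpha,x\}-\max\{\alpha,y\}|\le|x-y|$, already used in the proof of Theorem~\ref{thm:belief-discretization}. This reduces the claim to showing $S(b)-S^h(b)\le K_t(W)\delta_A$. The main point needing care is that the supremum over $\Theta$ need not be attained, so I will use an $\varepsilon$-argument. Fix $\varepsilon>0$ and choose $\theta_\varepsilon\in\Theta$ with $G_t(b,\theta_\varepsilon;W)\ge S(b)-\varepsilon$. By the definition \eqref{eq:action-covering-radius}, $\min_{\bar\theta\in\Theta^h}d_\Theta(\theta_\varepsilon,\bar\theta)\le\delta_A$. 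The minimum over the finite set $\Theta^h$ is attained, so there is $\bar\theta_\varepsilon\in\Theta^h$ with $d_\Theta(\theta_\varepsilon,\bar\theta_\varepsilon)\le\delta_A$. Applying the action-Lipschitz hypothesis \eqref{eq:continuation-action-lipschitz} gives
\[
S^h(b)\ge G_t(b,\bar\theta_\varepsilon;W)\ge G_t(b,\theta_\varepsilon;W)-K_t(W)\delta_A\ge S(b)-\varepsilon-K_t(W)\delta_A .
\]
Letting $\varepsilon\downarrow0$ yields $S(b)-S^h(b)\le K_t(W)\delta_A$. Combined with the monotonicity and contraction property of $\max\{\mathrm{StopVal}(b),\cdot\}$, this gives \eqref{eq:one-step-action-error}. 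The bound holds uniformly in $b$ because the Lipschitz constant in \eqref{eq:continuation-action-lipschitz} is uniform over $\Delta_M$.
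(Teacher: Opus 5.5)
Your proposal is correct and follows essentially the same route as the paper. The lower bound comes from $\Theta^h\subset\Theta$; the upper bound uses an $\varepsilon$-maximizer $\theta_\varepsilon$, a library point $\bar\theta_\varepsilon$ within $\delta_A$ supplied by the covering radius, the action-Lipschitz hypothesis, and finally the shared stopping term. Your extra remarks (finiteness of the supremum from boundedness of $W$, attainment of the minimum over the finite library, and the explicit inequality $|\max\{\alpha,x\}-\max\{\alpha,y\}|\le|x-y|$) only spell out details the paper leaves implicit.
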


\begin{proof}
Because $\Theta^h \subset \Theta$, we immediately have
\[
\max_{\bar\theta \in \Theta^h} G_t(b,\bar\theta;W)
\le
\sup_{\theta \in \Theta} G_t(b,\theta;W),
\]
which implies
\[
(T_t^{\Theta^h}W)(b) \le (T_t^\Theta W)(b).
\]
This proves the left inequality in \eqref{eq:one-step-action-error}.

To prove the right inequality, fix $b \in \Delta_M$ and let $\varepsilon>0$. By the definition of the supremum, there exists $\theta_\varepsilon \in \Theta$ such that
\[
\sup_{\theta \in \Theta} G_t(b,\theta;W)
\le
G_t(b,\theta_\varepsilon;W)+\varepsilon.
\]
By the definition of $\delta_A$, there exists $\bar\theta_\varepsilon \in \Theta^h$ such that
\[
d_\Theta(\theta_\varepsilon,\bar\theta_\varepsilon)\le \delta_A.
\]
Using the Lipschitz assumption \eqref{eq:continuation-action-lipschitz}, we obtain
\[
G_t(b,\theta_\varepsilon;W)
\le
G_t(b,\bar\theta_\varepsilon;W)+K_t(W)\delta_A.
\]
Therefore,
\[
\sup_{\theta \in \Theta} G_t(b,\theta;W)
\le
\max_{\bar\theta \in \Theta^h} G_t(b,\bar\theta;W)
+
K_t(W)\delta_A
+
\varepsilon.
\]
Since both Bellman operators share the same stopping term, it follows that
\[
(T_t^\Theta W)(b)
\le
(T_t^{\Theta^h}W)(b)
+
K_t(W)\delta_A
+
\varepsilon.
\]
Letting $\varepsilon \downarrow 0$ proves the right inequality in \eqref{eq:one-step-action-error}.
\end{proof}

Lemma~\ref{lem:one-step-action-discretization} is the action-space analogue of the belief-projection estimate proved earlier. The quantity $\delta_A$ is purely geometric: it measures how closely the continuous measurement family can be covered by the finite library. The constant $K_t(W)$ is analytic: it measures how sharply the continuation functional changes when the measurement parameter moves. The one-step loss is therefore controlled by the product of a geometric discretization scale and an action-sensitivity constant.

We now extend this one-step estimate to the full finite horizon.

\begin{theorem}[Finite-horizon action discretization error bound]
\label{thm:action-discretization}
Assume that, for each $t=0,\dots,H-1$, there exists a constant $K_t\ge 0$ such that
\begin{equation}
\label{eq:action-discretization-hypothesis}
|G_t(b,\theta;V_{t+1}^\ast)-G_t(b,\theta';V_{t+1}^\ast)|
\le
K_t\,d_\Theta(\theta,\theta'),
\qquad
\forall b \in \Delta_M,\ \forall \theta,\theta' \in \Theta.
\end{equation}
Then, for every $t=0,\dots,H$,
\begin{equation}
\label{eq:action-discretization-bound}
\|V_t^\ast - V_t^h\|_\infty
\le
\delta_A\,\sum_{s=t}^{H-1} K_s.
\end{equation}
\end{theorem}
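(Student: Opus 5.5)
The plan is a backward induction on the error sequence $\varepsilon_t := \|V_t^\ast - V_t^h\|_\infty$, taken over the whole simplex $\Delta_M$. The first step is to write both value functions as operator iterates: $V_t^\ast = T_t^\Theta V_{t+1}^\ast$ and $V_t^h = T_t^{\Theta^h} V_{t+1}^h$. At the terminal time both equal $\mathrm{StopVal}$, so $\varepsilon_H = 0$. This matches the empty sum $\sum_{s=H}^{H-1}K_s=0$ in \eqref{eq:action-discretization-bound}.

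For the inductive step, fix $t\le H-1$ and split the error with the triangle inequality through the intermediate function $T_t^{\Theta^h}V_{t+1}^\ast$:
\[
|V_t^\ast(b)-V_t^h(b)| \le |(T_t^\Theta V_{t+1}^\ast)(b)-(T_t^{\Theta^h}V_{t+1}^\ast)(b)| + |(T_t^{\Theta^h}V_{t+1}^\ast)(b)-(T_t^{\Theta^h}V_{t+1}^h)(b)|.
\]
The first term is a pure action-discretization error evaluated on the exact next-stage value. The hypothesis \eqref{eq:action-discretization-hypothesis} is exactly the condition \eqref{eq:continuation-action-lipschitz} of Lemma~\ref{lem:one-step-action-discretization} with $W=V_{t+1}^\ast$ and $K_t(W)=K_t$. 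That lemma therefore bounds the first term by $K_t\delta_A$; boundedness of $V_{t+1}^\ast$ holds since all values lie in $[-Hc_{\mathrm{meas}},1]$.

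The second term is handled by a nonexpansiveness argument for the finite-library operator in the sup norm. Both operators share the stopping term, so we use $|\max\{\alpha,x\}-\max\{\alpha,y\}|\le|x-y|$ together with $|\max_{\bar\theta}u_{\bar\theta}-\max_{\bar\theta}v_{\bar\theta}|\le\max_{\bar\theta}|u_{\bar\theta}-v_{\bar\theta}|$, exactly as in the proof of Theorem~\ref{thm:belief-discretization}. Then, for each fixed $\bar\theta$,
\[
|G_t(b,\bar\theta;V_{t+1}^\ast)-G_t(b,\bar\theta;V_{t+1}^h)| \le \sum_o \mathrm{ObsProb}(o\mid b,\bar\theta)\,|V_{t+1}^\ast-V_{t+1}^h|(\tau(b,\bar\theta,o)) \le \varepsilon_{t+1}.
\]
This uses that the outcome probabilities sum to one. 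Outcomes with zero probability contribute nothing, so the undefined posterior never matters.

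Combining the two bounds and taking the supremum over $b$ gives $\varepsilon_t\le K_t\delta_A+\varepsilon_{t+1}$. Unrolling this from $\varepsilon_H=0$ yields \eqref{eq:action-discretization-bound}.

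The main point to get right is the choice of intermediate function. The Lipschitz-in-$\theta$ hypothesis is only assumed for $W=V_{t+1}^\ast$, not for $V_{t+1}^h$. So the action-discretization lemma must be applied with the exact value $V_{t+1}^\ast$, and the propagated error must be passed through the finite-library operator, not the other way around. Splitting instead through $T_t^\Theta V_{t+1}^h$ would require an action-Lipschitz constant for $V_{t+1}^h$, which we do not have.
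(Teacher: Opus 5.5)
Your proposal is correct and follows the paper's proof essentially step for step. Both split the error through $T_t^{\Theta^h}V_{t+1}^\ast$ and bound the first term by Lemma~\ref{lem:one-step-action-discretization} with $W=V_{t+1}^\ast$. Both then use sup-norm nonexpansiveness of $T_t^{\Theta^h}$ for the second term and unroll $e_t^A\le K_t\delta_A+e_{t+1}^A$ from $e_H^A=0$.
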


\begin{proof}
Define
\[
e_t^A := \|V_t^\ast - V_t^h\|_\infty.
\]
Since
\[
V_H^\ast = V_H^h = \mathrm{StopVal},
\]
we have
\[
e_H^A = 0.
\]

For $t<H$, use the operator notation
\[
V_t^\ast = T_t^\Theta V_{t+1}^\ast,
\qquad
V_t^h = T_t^{\Theta^h} V_{t+1}^h.
\]
Hence
\begin{align*}
e_t^A
&=
\|T_t^\Theta V_{t+1}^\ast - T_t^{\Theta^h}V_{t+1}^h\|_\infty \\
&\le
\|T_t^\Theta V_{t+1}^\ast - T_t^{\Theta^h}V_{t+1}^\ast\|_\infty
+
\|T_t^{\Theta^h}V_{t+1}^\ast - T_t^{\Theta^h}V_{t+1}^h\|_\infty.
\end{align*}
By Lemma~\ref{lem:one-step-action-discretization} and the hypothesis \eqref{eq:action-discretization-hypothesis}, the first term is bounded by
\[
K_t\,\delta_A.
\]

For the second term, fix $b \in \Delta_M$ and $\bar\theta \in \Theta^h$. Then
\begin{align*}
|G_t(b,\bar\theta;W)-G_t(b,\bar\theta;W')|
&=
\left|
\sum_{o \in O}
\mathrm{ObsProb}(o \mid b,\bar\theta)
\bigl(
W(\tau(b,\bar\theta,o))-W'(\tau(b,\bar\theta,o))
\bigr)
\right| \\
&\le
\sum_{o \in O}
\mathrm{ObsProb}(o \mid b,\bar\theta)\,\|W-W'\|_\infty \\
&=
\|W-W'\|_\infty.
\end{align*}
Taking the maximum over $\bar\theta \in \Theta^h$ and then the outer maximum with the common stopping term shows that
\[
\|T_t^{\Theta^h}W - T_t^{\Theta^h}W'\|_\infty
\le
\|W-W'\|_\infty.
\]
Thus
\[
\|T_t^{\Theta^h}V_{t+1}^\ast - T_t^{\Theta^h}V_{t+1}^h\|_\infty
\le
e_{t+1}^A.
\]
Combining the two estimates gives
\[
e_t^A \le K_t\,\delta_A + e_{t+1}^A.
\]
Since $e_H^A=0$, backward iteration yields
\[
e_t^A \le \delta_A\,\sum_{s=t}^{H-1} K_s,
\]
which proves \eqref{eq:action-discretization-bound}.
\end{proof}

\begin{corollary}[Uniform action Lipschitz case]
\label{cor:uniform-action-discretization}
If, in addition, $K_s \le K$ for all $s=t,\dots,H-1$, then
\begin{equation}
\label{eq:uniform-action-discretization-bound}
\|V_t^\ast - V_t^h\|_\infty
\le
(H-t)\,K\,\delta_A.
\end{equation}
\end{corollary}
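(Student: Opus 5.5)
This corollary follows directly from Theorem~\ref{thm:action-discretization}, in the same way that Corollary~\ref{cor:belief-discretization-uniform} followed from Theorem~\ref{thm:belief-discretization}. No new analytic estimate is needed. The plan is to check that the hypotheses of Theorem~\ref{thm:action-discretization} hold and then bound the resulting sum crudely.

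Step one is to invoke Theorem~\ref{thm:action-discretization}. This requires the action-Lipschitz hypothesis \eqref{eq:action-discretization-hypothesis} with constants $K_s$ for every stage $s=0,\dots,H-1$, and it is implicit in the corollary's ``in addition''. The theorem then gives
\[
\|V_t^\ast - V_t^h\|_\infty \le \delta_A \sum_{s=t}^{H-1} K_s .
\]

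Step two uses the extra assumption $K_s \le K$ for $s=t,\dots,H-1$. The sum has exactly $H-t$ terms, indexed $s=t,t+1,\dots,H-1$, so
\[
\sum_{s=t}^{H-1} K_s \le (H-t)K .
\]
Since $\delta_A \ge 0$, multiplying through preserves the inequality and yields \eqref{eq:uniform-action-discretization-bound}.

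For the edge case $t=H$ the sum is empty and $H-t=0$, so both sides are zero. This is consistent with $V_H^\ast = V_H^h = \mathrm{StopVal}$.

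There is no substantive obstacle. The only point needing care is the index count. The action-side sum runs from $s=t$ to $H-1$, unlike the belief-side sum, which runs from $t+1$ to $H$. In both cases it has $H-t$ terms, so the factor $(H-t)$ is correct.
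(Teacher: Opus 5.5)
Your proof is correct and follows the paper's argument exactly: invoke Theorem~\ref{thm:action-discretization}, then bound $\sum_{s=t}^{H-1}K_s \le (H-t)K$. Your remarks on the index count and on the trivial case $t=H$ are correct but add nothing essential.
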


\begin{proof}
Apply Theorem~\ref{thm:action-discretization} and bound the sum by
\[
\sum_{s=t}^{H-1} K_s \le (H-t)K.
\]
\end{proof}

Theorem~\ref{thm:action-discretization} shows that the finite-horizon error caused by restricting the measurement family to a finite library accumulates one stage at a time, with a per-stage loss no larger than $K_t\delta_A$. This has the same general structure as the belief-projection estimate obtained earlier: a discretization radius is multiplied by a regularity constant, and the resulting local losses are propagated through the horizon. The crucial difference is that the geometric quantity is now the action-space covering radius $\delta_A$, while the analytic quantity is the action-sensitivity constant $K_t$ of the continuation functional.

\subsection{Regularity in the Action Variable and the Constant $K_t$}

Theorem~\ref{thm:action-discretization} shows that the finite-horizon action-discretization error is controlled by the constants $K_t$. The remaining task is therefore to verify that the continuation functional
\[
G_t(b,\theta;V_{t+1}^\ast)
\]
is Lipschitz in the action variable and to obtain an explicit upper bound for an admissible choice of $K_t$. Since the stopping branch does not depend on the measurement parameter, the entire action-side sensitivity is concentrated in the continuation branch. Accordingly, the problem reduces to understanding how observation probabilities and posterior beliefs vary when the measurement parameter $\theta$ changes.

For each hypothesis $i \in \{1,\dots,M\}$, outcome $o \in O$, and measurement parameter $\theta \in \Theta$, define the likelihood
\begin{equation}
\label{eq:action-likelihood}
\ell_i(\theta,o)
:=
\mathrm{Tr}\!\left(E_o(\theta)\rho_i\right).
\end{equation}
Then the observation probability may be written as
\begin{equation}
\label{eq:action-obsprob-likelihood}
\mathrm{ObsProb}(o \mid b,\theta)
=
\sum_{i=1}^M b(i)\,\ell_i(\theta,o),
\end{equation}
and the posterior update becomes
\begin{equation}
\label{eq:action-posterior-likelihood}
\tau(b,\theta,o)(i)
=
\frac{b(i)\,\ell_i(\theta,o)}
{\sum_{j=1}^M b(j)\,\ell_j(\theta,o)}.
\end{equation}

We now impose regularity assumptions directly on the parameterized measurement family.

\begin{assumption}[Action-regularity assumptions]
\label{ass:action-regularity}
Assume the following.
\begin{enumerate}
    \item[(A1)] There exists a constant $L_\ell \ge 0$ such that, for all $i \in \{1,\dots,M\}$, all $o \in O$, and all $\theta,\theta' \in \Theta$,
    \begin{equation}
    \label{eq:likelihood-action-lipschitz}
    |\ell_i(\theta,o)-\ell_i(\theta',o)|
    \le
    L_\ell\, d_\Theta(\theta,\theta').
    \end{equation}
    \item[(A2)] There exists $\eta>0$ such that, throughout the region of interest,
    \begin{equation}
    \label{eq:action-nondegeneracy}
    \mathrm{ObsProb}(o \mid b,\theta)\ge \eta
    \end{equation}
    whenever the posterior $\tau(b,\theta,o)$ is considered.
\end{enumerate}
\end{assumption}

\begin{remark}
Assumption~\ref{ass:action-regularity}(A2) is a nondegeneracy condition. Without a lower bound on the observation probability, the posterior update may become arbitrarily sensitive because it is a rational map whose denominator can become very small. In practice, one typically either works on a region where such a lower bound is available, or treats extremely low-probability outcomes separately. Thus the role of $\eta$ is not merely technical convenience, but stability of Bayesian normalization itself.
\end{remark}

We first control the observation probability in the action variable. This is the easy part of the analysis, because the observation probability is an average of the likelihoods with respect to the belief weights.

\begin{lemma}[Lipschitz continuity of the observation probability in the action variable]
\label{lem:obsprob-action-lipschitz}
Under Assumption~\ref{ass:action-regularity}(A1), for every $b \in \Delta_M$ and every $o \in O$,
\begin{equation}
\label{eq:obsprob-action-lipschitz}
|\mathrm{ObsProb}(o \mid b,\theta)-\mathrm{ObsProb}(o \mid b,\theta')|
\le
L_\ell\, d_\Theta(\theta,\theta'),
\qquad
\forall \theta,\theta' \in \Theta.
\end{equation}
\end{lemma}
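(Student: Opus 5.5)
The plan is to exploit that $\mathrm{ObsProb}(o\mid b,\theta)$ is a convex combination of the likelihoods $\ell_i(\theta,o)$, with weights given by the belief $b$. The Lipschitz bound of Assumption~\ref{ass:action-regularity}(A1) should then pass through this averaging unchanged. Fix $b\in\Delta_M$, $o\in O$, and $\theta,\theta'\in\Theta$. Using the representation \eqref{eq:action-obsprob-likelihood}, write the difference as a single sum:
\[
\mathrm{ObsProb}(o\mid b,\theta)-\mathrm{ObsProb}(o\mid b,\theta')
=\sum_{i=1}^M b(i)\bigl(\ell_i(\theta,o)-\ell_i(\theta',o)\bigr).
\]

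Next, apply the triangle inequality and use $b(i)\ge 0$ to bound the absolute value by
\[
\sum_{i=1}^M b(i)\,\bigl|\ell_i(\theta,o)-\ell_i(\theta',o)\bigr|.
\]
By (A1), each term is at most $b(i)\,L_\ell\, d_\Theta(\theta,\theta')$. The constant $L_\ell$ is uniform in $i$ and $o$, so it factors out of the sum. Since $\sum_i b(i)=1$ on the simplex \eqref{eq:belief-simplex-definition}, the sum collapses to exactly $L_\ell\, d_\Theta(\theta,\theta')$, which is \eqref{eq:obsprob-action-lipschitz}.

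There is no real obstacle. The only points to check are that nonnegativity of $b$ justifies moving the absolute value inside the weighted sum, and that normalization of $b$ gives a constant with no $M$-dependence. Assumption~\ref{ass:action-regularity}(A2) is not needed here; it will only matter for the posterior map.
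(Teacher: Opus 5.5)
Your proposal is correct and follows the paper's proof step for step. You expand the difference via the likelihood representation, move the absolute value inside using $b(i)\ge 0$, apply (A1) termwise, and collapse the sum using $\sum_i b(i)=1$.
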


\begin{proof}
Using \eqref{eq:action-obsprob-likelihood},
\begin{align*}
|\mathrm{ObsProb}(o \mid b,\theta)-\mathrm{ObsProb}(o \mid b,\theta')|
&=
\left|
\sum_{i=1}^M
b(i)\bigl(\ell_i(\theta,o)-\ell_i(\theta',o)\bigr)
\right| \\
&\le
\sum_{i=1}^M
b(i)\,|\ell_i(\theta,o)-\ell_i(\theta',o)| \\
&\le
\sum_{i=1}^M
b(i)\,L_\ell\, d_\Theta(\theta,\theta') \\
&=
L_\ell\, d_\Theta(\theta,\theta'),
\end{align*}
since $\sum_i b(i)=1$.
\end{proof}

The posterior map is more delicate, because unlike the observation probability it involves normalization by the denominator $\mathrm{ObsProb}(o \mid b,\theta)$.

\begin{lemma}[Lipschitz continuity of the posterior map in the action variable]
\label{lem:posterior-action-lipschitz}
Under Assumption~\ref{ass:action-regularity}, for every fixed $b \in \Delta_M$ and $o \in O$, the posterior map $\theta \mapsto \tau(b,\theta,o)$ is Lipschitz with respect to $\|\cdot\|_\infty$. More precisely, for each coordinate $i$,
\begin{equation}
\label{eq:posterior-action-coordinate-lipschitz}
|\tau(b,\theta,o)(i)-\tau(b,\theta',o)(i)|
\le
\left(
\frac{L_\ell}{\eta}
+
\frac{L_\ell}{\eta^2}
\right)
d_\Theta(\theta,\theta'),
\end{equation}
and hence
\begin{equation}
\label{eq:posterior-action-lipschitz}
\|\tau(b,\theta,o)-\tau(b,\theta',o)\|_\infty
\le
\left(
\frac{L_\ell}{\eta}
+
\frac{L_\ell}{\eta^2}
\right)
d_\Theta(\theta,\theta').
\end{equation}
\end{lemma}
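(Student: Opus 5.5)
We fix $b\in\Delta_M$, $o\in O$, $\theta,\theta'\in\Theta$ and a coordinate $i$, and abbreviate
\[
N:=b(i)\,\ell_i(\theta,o),\quad N':=b(i)\,\ell_i(\theta',o),\quad D:=\mathrm{ObsProb}(o\mid b,\theta),\quad D':=\mathrm{ObsProb}(o\mid b,\theta').
\]
Then $\tau(b,\theta,o)(i)=N/D$ and $\tau(b,\theta',o)(i)=N'/D'$. We only treat the case in which both posteriors are considered, so Assumption~\ref{ass:action-regularity}(A2) gives $D,D'\ge\eta$.

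We use the standard add-and-subtract decomposition of a difference of quotients:
\[
\frac{N}{D}-\frac{N'}{D'}=\frac{N-N'}{D}+N'\Bigl(\frac{1}{D}-\frac{1}{D'}\Bigr)=\frac{N-N'}{D}+\frac{N'(D'-D)}{DD'}.
\]
This splits the error into a numerator-perturbation term and a denominator-perturbation term.

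\textbf{Numerator term.} We have $|N-N'|=b(i)\,|\ell_i(\theta,o)-\ell_i(\theta',o)|\le L_\ell\,d_\Theta(\theta,\theta')$ by (A1) and $b(i)\le 1$. Dividing by $D\ge\eta$ bounds the first term by $(L_\ell/\eta)\,d_\Theta(\theta,\theta')$.

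\textbf{Denominator term.} Lemma~\ref{lem:obsprob-action-lipschitz} gives $|D-D'|\le L_\ell\,d_\Theta(\theta,\theta')$. Since $0\le N'\le 1$ (because $b(i)\le1$ and $\ell_i\in[0,1]$ as a Born probability) and $DD'\ge\eta^2$, the second term is at most $(L_\ell/\eta^2)\,d_\Theta(\theta,\theta')$.

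We could sharpen the denominator term using $N'/D'=\tau(b,\theta',o)(i)\le 1$, which yields $L_\ell/\eta$ instead of $L_\ell/\eta^2$. However, the cruder bound $N'\le1$ already gives exactly the stated constant, so we keep it.

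Summing the two terms proves \eqref{eq:posterior-action-coordinate-lipschitz}. Since that constant does not depend on $i$, taking the maximum over $i$ gives \eqref{eq:posterior-action-lipschitz}.

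The only point requiring care is the scope of (A2): the denominator bound $DD'\ge\eta^2$ needs the lower bound $\eta$ at both $\theta$ and $\theta'$. We therefore state explicitly that the inequality is claimed only on the region where both posteriors are well-defined and covered by (A2). Beyond that, every step is elementary.
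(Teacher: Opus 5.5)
Your proof is correct and follows the paper's argument essentially line for line. Both use the same add-and-subtract split of the quotient difference, bound the numerator term by (A1) with $b(i)\le 1$, and bound the denominator term by Lemma~\ref{lem:obsprob-action-lipschitz} together with $|N'|\le 1$ and $DD'\ge\eta^2$. Your side remark is also valid: using $N'/D'=\tau(b,\theta',o)(i)\le 1$ would tighten the second constant to $L_\ell/\eta$, though this is not needed for the stated bound.
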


\begin{proof}
Fix $b \in \Delta_M$ and $o \in O$. For each coordinate $i$, define
\[
n_i(\theta):=b(i)\,\ell_i(\theta,o),
\qquad
D(\theta):=\sum_{j=1}^M b(j)\,\ell_j(\theta,o)=\mathrm{ObsProb}(o \mid b,\theta).
\]
Then
\[
\tau(b,\theta,o)(i)=\frac{n_i(\theta)}{D(\theta)}.
\]
Therefore,
\begin{align*}
|\tau(b,\theta,o)(i)-\tau(b,\theta',o)(i)|
&=
\left|
\frac{n_i(\theta)}{D(\theta)}-\frac{n_i(\theta')}{D(\theta')}
\right| = \left|\frac{n_i(\theta) - n_i(\theta')}{D(\theta)} + n_i(\theta') \left(\frac{1}{D(\theta)} - \frac{1}{D(\theta')}\right)\right|\\
&\le
\frac{|n_i(\theta)-n_i(\theta')|}{D(\theta)}
+
|n_i(\theta')|
\left|
\frac{1}{D(\theta)}-\frac{1}{D(\theta')}
\right|.
\end{align*}

By Assumption~\ref{ass:action-regularity}(A2), both denominators are at least $\eta$. Moreover,
\[
|n_i(\theta)-n_i(\theta')|
=
b(i)\,|\ell_i(\theta,o)-\ell_i(\theta',o)|
\le
L_\ell\, d_\Theta(\theta,\theta'),
\]
so the first term is bounded by
\[
\frac{L_\ell}{\eta}\, d_\Theta(\theta,\theta').
\]

For the second term, note first that $|n_i(\theta')|\le 1$. Also,
\[
\left|
\frac{1}{D(\theta)}-\frac{1}{D(\theta')}
\right|
=
\frac{|D(\theta)-D(\theta')|}{D(\theta)D(\theta')}
\le
\frac{|D(\theta)-D(\theta')|}{\eta^2}.
\]
By Lemma~\ref{lem:obsprob-action-lipschitz},
\[
|D(\theta)-D(\theta')|
=
|\mathrm{ObsProb}(o \mid b,\theta)-\mathrm{ObsProb}(o \mid b,\theta')|
\le
L_\ell\, d_\Theta(\theta,\theta').
\]
Hence the second term is bounded by
\[
\frac{L_\ell}{\eta^2}\, d_\Theta(\theta,\theta').
\]

Combining the two estimates gives \eqref{eq:posterior-action-coordinate-lipschitz}. Taking the maximum over coordinates yields \eqref{eq:posterior-action-lipschitz}.
\end{proof}

Lemma~\ref{lem:posterior-action-lipschitz} is the technical heart of the action-regularity analysis. The term $L_\ell/\eta$ reflects direct variation of the posterior numerator under a stable denominator, while the term $L_\ell/\eta^2$ reflects movement of the denominator itself. Thus the posterior becomes difficult to control precisely when the measurement family changes rapidly with $\theta$ or when the observation probability is allowed to become too small.

We now combine the previous two lemmas with the Lipschitz regularity of the next-stage value function.

\begin{proposition}[An explicit action Lipschitz bound for the continuation functional]
\label{prop:explicit-Kt-bound}
Assume Assumption~\ref{ass:action-regularity}. Suppose further that $V_{t+1}^\ast$ is bounded and $L_{t+1}$-Lipschitz on $\Delta_M$ with respect to $\|\cdot\|_\infty$. Then the continuation functional is uniformly Lipschitz in the action variable: for every $b \in \Delta_M$ and all $\theta,\theta' \in \Theta$,
\begin{equation}
\label{eq:continuation-action-explicit-lipschitz}
|G_t(b,\theta;V_{t+1}^\ast)-G_t(b,\theta';V_{t+1}^\ast)|
\le
K_t\, d_\Theta(\theta,\theta'),
\end{equation}
where one admissible choice is
\begin{equation}
\label{eq:explicit-Kt}
K_t
=
|O|
\left[
L_\ell \|V_{t+1}^\ast\|_\infty
+
L_{t+1}
\left(
\frac{L_\ell}{\eta}
+
\frac{L_\ell}{\eta^2}
\right)
\right].
\end{equation}
\end{proposition}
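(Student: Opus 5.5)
The plan is to mirror the proof of Lemma~\ref{lem:continuation-belief-lipschitz}, now in the action variable. Fix $b \in \Delta_M$ and write the continuation functional as $G_t(b,\theta;V_{t+1}^\ast) = -c_{\mathrm{meas}} + \sum_{o\in O} f_o(\theta)$ with
\[
f_o(\theta) := \mathrm{ObsProb}(o \mid b,\theta)\, V_{t+1}^\ast(\tau(b,\theta,o)).
\]
The constant $-c_{\mathrm{meas}}$ cancels in the difference. It therefore suffices to show, for each fixed $o$, that
\[
|f_o(\theta)-f_o(\theta')| \le \Bigl[L_\ell \|V_{t+1}^\ast\|_\infty + L_{t+1}\bigl(\tfrac{L_\ell}{\eta}+\tfrac{L_\ell}{\eta^2}\bigr)\Bigr] d_\Theta(\theta,\theta'),
\]
and then sum over $o$, which produces the factor $|O|$ in \eqref{eq:explicit-Kt}.

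For a fixed outcome, set $p=\mathrm{ObsProb}(o\mid b,\theta)$, $p'=\mathrm{ObsProb}(o\mid b,\theta')$, $x=V_{t+1}^\ast(\tau(b,\theta,o))$ and $x'=V_{t+1}^\ast(\tau(b,\theta',o))$. The decomposition $px-p'x'=(p-p')x+p'(x-x')$ gives
\[
|f_o(\theta)-f_o(\theta')|\le |p-p'|\,\|V_{t+1}^\ast\|_\infty + p'\,|x-x'|.
\]
Lemma~\ref{lem:obsprob-action-lipschitz} bounds the first term by $L_\ell\|V_{t+1}^\ast\|_\infty d_\Theta(\theta,\theta')$. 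For the second term I use $p'\le 1$, the $L_{t+1}$-Lipschitz property of $V_{t+1}^\ast$ in $\|\cdot\|_\infty$, and Lemma~\ref{lem:posterior-action-lipschitz}. Together these give $|x-x'|\le L_{t+1}\bigl(\tfrac{L_\ell}{\eta}+\tfrac{L_\ell}{\eta^2}\bigr)d_\Theta(\theta,\theta')$. Adding the two bounds yields the per-outcome estimate.

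The main obstacle is outcomes for which Assumption~\ref{ass:action-regularity}(A2) fails, so that $p$ or $p'$ is zero or below $\eta$ and Lemma~\ref{lem:posterior-action-lipschitz} does not apply.

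I would handle this by reading (A2) as the paper does, namely that the bound is asserted on the region of interest where every posterior entering $G_t$ is considered. In that region both denominators are at least $\eta$ for every outcome with a nonzero contribution. If one wants to cover an outcome with $p'=0$, the term $p'|x-x'|$ vanishes anyway, so only $|p-p'|\,|x|$ remains, and it is already controlled. I would treat the symmetric case $p=0$ by swapping the roles of $\theta$ and $\theta'$ in the decomposition, writing $px-p'x'=(p-p')x'+p(x-x')$. With this remark noted, the bound \eqref{eq:continuation-action-explicit-lipschitz} holds uniformly in $b$, since none of the constants depend on $b$.
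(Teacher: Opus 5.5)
Your proposal is correct and follows essentially the same route as the paper. You use the same per-outcome decomposition $px-p'x'=(p-p')x+p'(x-x')$, bound the first term via Lemma~\ref{lem:obsprob-action-lipschitz}, bound the second via $p'\le 1$, the Lipschitz property of $V_{t+1}^\ast$ and Lemma~\ref{lem:posterior-action-lipschitz}, and then sum over $O$. Your extra handling of zero-probability outcomes is a harmless refinement the paper omits, since its proof simply invokes (A2) for every posterior entering $G_t$.
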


\begin{proof}
Fix $b \in \Delta_M$, $\theta,\theta' \in \Theta$, and an outcome $o \in O$. Define
\[
p_\theta := \mathrm{ObsProb}(o \mid b,\theta),
\qquad
p_{\theta'} := \mathrm{ObsProb}(o \mid b,\theta'),
\]
and
\[
x_\theta := V_{t+1}^\ast(\tau(b,\theta,o)),
\qquad
x_{\theta'} := V_{t+1}^\ast(\tau(b,\theta',o)).
\]
Then
\[
p_\theta x_\theta - p_{\theta'}x_{\theta'}
=
(p_\theta-p_{\theta'})x_\theta + p_{\theta'}(x_\theta-x_{\theta'}).
\]
Taking absolute values gives
\begin{align*}
&\bigl|
\mathrm{ObsProb}(o \mid b,\theta)\,V_{t+1}^\ast(\tau(b,\theta,o))
-
\mathrm{ObsProb}(o \mid b,\theta')\,V_{t+1}^\ast(\tau(b,\theta',o))
\bigr| \\
&\qquad\le
|\mathrm{ObsProb}(o \mid b,\theta)-\mathrm{ObsProb}(o \mid b,\theta')|\,\|V_{t+1}^\ast\|_\infty \\
&\qquad\quad
+
\mathrm{ObsProb}(o \mid b,\theta')\,
|V_{t+1}^\ast(\tau(b,\theta,o))-V_{t+1}^\ast(\tau(b,\theta',o))|.
\end{align*}
By Lemma~\ref{lem:obsprob-action-lipschitz}, the first term is bounded by
\[
L_\ell \|V_{t+1}^\ast\|_\infty\, d_\Theta(\theta,\theta').
\]
By the $L_{t+1}$-Lipschitz continuity of $V_{t+1}^\ast$, Lemma~\ref{lem:posterior-action-lipschitz}, and the fact that $0 \le \mathrm{ObsProb}(o \mid b,\theta') \le 1$, the second term is bounded by
\[
L_{t+1}
\left(
\frac{L_\ell}{\eta}
+
\frac{L_\ell}{\eta^2}
\right)
d_\Theta(\theta,\theta').
\]
Therefore, for each outcome $o$,
\begin{align*}
&\bigl|
\mathrm{ObsProb}(o \mid b,\theta)\,V_{t+1}^\ast(\tau(b,\theta,o))
-
\mathrm{ObsProb}(o \mid b,\theta')\,V_{t+1}^\ast(\tau(b,\theta',o))
\bigr| \\
&\qquad\le
\left[
L_\ell \|V_{t+1}^\ast\|_\infty
+
L_{t+1}
\left(
\frac{L_\ell}{\eta}
+
\frac{L_\ell}{\eta^2}
\right)
\right]
d_\Theta(\theta,\theta').
\end{align*}
Summing over all $o \in O$ proves \eqref{eq:continuation-action-explicit-lipschitz} with the choice \eqref{eq:explicit-Kt}.
\end{proof}

Proposition~\ref{prop:explicit-Kt-bound} makes the structure of the constant $K_t$ transparent. The factor $L_\ell$ measures how rapidly the parameterized POVM family changes with the action variable. The factor $\|V_{t+1}^\ast\|_\infty$ measures the scale of future rewards, while $L_{t+1}$ measures how sensitive the next-stage value function is to posterior perturbations. Finally, the parameter $\eta$ encodes the stability of Bayesian normalization: when $\eta$ is small, the posterior update becomes more fragile, and the admissible Lipschitz constant $K_t$ grows accordingly. Thus action discretization becomes difficult precisely when the measurement family varies rapidly, the future value landscape is steep, or the relevant observation probabilities are close to zero.

\subsection{Total Approximation Error Bound}

We now combine the two approximation techniques analyzed above.
The previous subsections treated belief-space discretization and action-space
discretization separately. The natural final question is therefore the following:
how far can the fully projected finite-grid finite-library planner deviate from
the exact continuous-action Bellman value function?

To answer this, we introduce one additional layer of notation. Recall that
$V_t^\ast$ denotes the exact value function for the continuous measurement family
$\Theta$, whereas $V_t^h$ denotes the value function obtained after restricting
the action space to the finite library $\Theta^h \subset \Theta$, without yet
projecting the belief state.

We now define the \emph{projected finite-library value function} $\hat V_t^h$
on the grid $B \subset \Delta_M$ by
\begin{equation}
\label{eq:projected-finite-library-terminal-h}
\hat V_H^h(b) := \mathrm{StopVal}(b), \qquad b \in B,
\end{equation}
and, for $t=0,\dots,H-1$,
\begin{equation}
\label{eq:projected-finite-library-recursion-h}
\hat V_t^h(b)
:=
\max\left\{
\mathrm{StopVal}(b),\;
\max_{\bar\theta \in \Theta^h}
\left(
-c_{\mathrm{meas}}
+
\sum_{o \in O}
\mathrm{ObsProb}(o \mid b,\bar\theta)\,
\hat V_{t+1}^h\!\bigl(\mathrm{Proj}_B(\tau(b,\bar\theta,o))\bigr)
\right)
\right\},
\,\, b \in B.
\end{equation}
Thus $\hat V_t^h$ is the value function produced by the fully projected planner:
the action space has been discretized to $\Theta^h$, and posterior beliefs are
projected back to the finite grid $B$.

To control the belief-projection component for this finite-library model, we
invoke the finite-library belief regularity established in
Proposition~\ref{prop:value-belief-lipschitz} and
Corollary~\ref{cor:recursive-belief-lipschitz}. More precisely, that analysis
was carried out for an arbitrary fixed finite measurement library, and therefore
applies in particular to the library $\Theta^h$. Accordingly, we let $L_t^h$
denote Lipschitz constants for the finite-library value functions $V_t^h$, so that the following theorem holds.

\begin{theorem}[Total approximation error bound]
\label{thm:total-approximation-error}
Assume the hypotheses of Theorem~\ref{thm:action-discretization}. Let  $\{L_t^h\}_{t=0}^H$
denote belief-Lipschitz constants for the action-discretized value functions
$(V_t^h)_{t=0}^H$, furnished by
Proposition~\ref{prop:value-belief-lipschitz} and
Corollary~\ref{cor:recursive-belief-lipschitz} when those results are applied
to the finite library $\Theta^h$.
Then, for every $t=0,\dots,H$,
\begin{equation}
\label{eq:total-approximation-error-grid}
\|V_t^\ast-\hat V_t^h\|_{\infty,B}
:=
\sup_{b \in B}|V_t^\ast(b)-\hat V_t^h(b)|
\le
\delta_A \sum_{s=t}^{H-1} K_s
+
\delta_B \sum_{s=t+1}^{H} L_s^h.
\end{equation}
\end{theorem}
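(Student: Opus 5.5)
The plan is a triangle inequality through the intermediate function $V_t^h$, the action-discretized but unprojected value function. For every $b \in B$ we split
\[
|V_t^\ast(b)-\hat V_t^h(b)|
\le
|V_t^\ast(b)-V_t^h(b)|
+
|V_t^h(b)-\hat V_t^h(b)|,
\]
and bound the two terms by the two earlier discretization theorems. We then take the supremum over $b \in B$.

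For the first term, the hypotheses of Theorem~\ref{thm:action-discretization} are assumed. That theorem gives the bound on all of $\Delta_M$, so a fortiori on $B$:
\[
\sup_{b\in B}|V_t^\ast(b)-V_t^h(b)| \le \|V_t^\ast-V_t^h\|_\infty \le \delta_A\sum_{s=t}^{H-1}K_s .
\]

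For the second term, we observe that Theorem~\ref{thm:belief-discretization} was proved for an arbitrary fixed finite measurement library $A_{\mathrm{meas}}$. We instantiate it with $A_{\mathrm{meas}}=\Theta^h$. With this choice the exact finite-library recursion \eqref{eq:exact-finite-library-terminal}--\eqref{eq:exact-finite-library-recursion} coincides with \eqref{eq:finite-library-terminal}--\eqref{eq:finite-library-recursion}, so its value functions are $V_t^h$. Likewise the projected recursion \eqref{eq:projected-finite-library-terminal}--\eqref{eq:projected-finite-library-recursion} coincides with \eqref{eq:projected-finite-library-terminal-h}--\eqref{eq:projected-finite-library-recursion-h}, so its value functions are $\hat V_t^h$. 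The Lipschitz hypothesis \eqref{eq:belief-lipschitz-assumption} is supplied by the constants $L_t^h$ furnished by Proposition~\ref{prop:value-belief-lipschitz} and Corollary~\ref{cor:recursive-belief-lipschitz}. Theorem~\ref{thm:belief-discretization} then yields
\[
\|V_t^h-\hat V_t^h\|_{\infty,B}\le \delta_B\sum_{s=t+1}^H L_s^h .
\]
Adding the two bounds gives \eqref{eq:total-approximation-error-grid}. The terminal case $t=H$ is trivial, since both sums vanish and all three functions equal $\mathrm{StopVal}$ on $B$.

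The only point I expect to need care is this identification step: checking that the earlier belief-discretization argument used nothing about $A_{\mathrm{meas}}$ beyond finiteness, so that it transfers verbatim to $\Theta^h$ with the same outcome alphabet $O$. In fact finiteness is only used to make the maxima attained, and the max-difference inequality would hold for suprema anyway. A second check is that the $L_s^h$ are genuinely available, i.e.\ that Assumption~\ref{ass:belief-regularity} holds uniformly over $\theta\in\Theta^h$. This is implicit in the theorem's phrasing, which takes the $L_t^h$ as given by those results, so I would state it explicitly as part of the hypotheses rather than re-derive it.
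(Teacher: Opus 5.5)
Your proposal is correct and follows the paper's proof: you use the same triangle inequality through $V_t^h$. The action-discretization theorem bounds the first term, passing from $\Delta_M$ to $B$, and the belief-discretization theorem, instantiated with the finite library $\Theta^h$, bounds the second. Your checks on identifying the two recursions and on the availability of the constants $L_t^h$ only make explicit what the paper leaves implicit.
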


\begin{proof}
Fix $t \in \{0,\dots,H\}$. For every $b \in B$, apply the triangle inequality:
\[
|V_t^\ast(b)-\hat V_t^h(b)|
\le
|V_t^\ast(b)-V_t^h(b)|
+
|V_t^h(b)-\hat V_t^h(b)|.
\]
Taking the supremum over $b \in B$ gives
\begin{equation}
\label{eq:total-error-triangle}
\|V_t^\ast-\hat V_t^h\|_{\infty,B}
\le
\|V_t^\ast-V_t^h\|_{\infty,B}
+
\|V_t^h-\hat V_t^h\|_{\infty,B}.
\end{equation}

We now bound the two terms separately.

For the first term, since $B \subset \Delta_M$,
\[
\|V_t^\ast-V_t^h\|_{\infty,B}
\le
\|V_t^\ast-V_t^h\|_\infty.
\]
Therefore, by Theorem~\ref{thm:action-discretization},
\begin{equation}
\label{eq:total-error-action-part}
\|V_t^\ast-V_t^h\|_{\infty,B}
\le
\delta_A \sum_{s=t}^{H-1} K_s.
\end{equation}

For the second term, applying Theorem~\ref{thm:belief-discretization} to the pair
$(V_t^h,\hat V_t^h)$ yields
\begin{equation}
\label{eq:total-error-belief-part}
\|V_t^h-\hat V_t^h\|_{\infty,B}
\le
\delta_B \sum_{s=t+1}^{H} L_s^h.
\end{equation}

Combining \eqref{eq:total-error-triangle},
\eqref{eq:total-error-action-part}, and
\eqref{eq:total-error-belief-part} proves
\eqref{eq:total-approximation-error-grid}.
\end{proof}

\begin{corollary}[Total approximation error bound for arbitrary belief state]
\label{cor:total-approximation-error-arbitrary-belief}
For an arbitrary belief $b \in \Delta_M$, define the projected estimator
\begin{equation}
\label{eq:projected-estimator-h}
\tilde V_t^h(b)
:=
\hat V_t^h(\mathrm{Proj}_B(b)).
\end{equation}
Under the assumptions of Theorem~\ref{thm:total-approximation-error},
\begin{equation}
\label{eq:total-approximation-error-arbitrary-belief}
|V_t^\ast(b)-\tilde V_t^h(b)|
\le
\delta_A \sum_{s=t}^{H-1} K_s
+
\delta_B \sum_{s=t}^{H} L_s^h.
\end{equation}
\end{corollary}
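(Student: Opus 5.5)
We will prove the bound by mimicking the proof of Corollary~\ref{cor:belief-discretization-final-bound}, with an extra triangle-inequality step that separates the action error from the belief error. Fix $t$ and $b \in \Delta_M$, and write $\hat b := \mathrm{Proj}_B(b) \in B$. We split
\[
|V_t^\ast(b)-\tilde V_t^h(b)|
\le
|V_t^\ast(b)-V_t^h(b)|
+
|V_t^h(b)-V_t^h(\hat b)|
+
|V_t^h(\hat b)-\hat V_t^h(\hat b)|.
\]
The key choice is where to place the grid-point comparison. We pass from $b$ to $\hat b$ inside the \emph{action-discretized} function $V_t^h$, rather than inside $V_t^\ast$. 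This is deliberate: only the Lipschitz constants $L_s^h$ of $V_s^h$ appear on the right-hand side of \eqref{eq:total-approximation-error-arbitrary-belief}, and no belief-Lipschitz constant for $V_t^\ast$ is assumed.

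The three terms are handled as follows.

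\textbf{First term.} Theorem~\ref{thm:action-discretization} gives a sup-norm bound on all of $\Delta_M$, not only on the grid. Hence this term is at most $\delta_A\sum_{s=t}^{H-1}K_s$.

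\textbf{Second term.} By the $L_t^h$-Lipschitz property of $V_t^h$ (Proposition~\ref{prop:value-belief-lipschitz} and Corollary~\ref{cor:recursive-belief-lipschitz} applied to $\Theta^h$) and the definition \eqref{eq:grid-radius} of $\delta_B$, this term is at most $L_t^h\,\|b-\hat b\|_\infty \le L_t^h\delta_B$.

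\textbf{Third term.} Since $\hat b\in B$, Theorem~\ref{thm:belief-discretization} applied to the pair $(V_t^h,\hat V_t^h)$ bounds it by $\delta_B\sum_{s=t+1}^{H}L_s^h$. This is exactly the estimate \eqref{eq:total-error-belief-part} used in the proof of Theorem~\ref{thm:total-approximation-error}.

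Adding the last two contributions gives $\delta_B\big(L_t^h+\sum_{s=t+1}^H L_s^h\big)=\delta_B\sum_{s=t}^{H}L_s^h$. Adding the first term then yields \eqref{eq:total-approximation-error-arbitrary-belief}.

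The only real care point is the ordering just described. A naive route would use Theorem~\ref{thm:total-approximation-error} at $\hat b$ and then compare $V_t^\ast(b)$ with $V_t^\ast(\hat b)$. That route would require a Lipschitz constant for $V_t^\ast$, which is not among the hypotheses. Inserting $V_t^h$ at both $b$ and $\hat b$ avoids this. Beyond that, we only need to check the edge case $t=H$: there the action sum is empty, and the bound reduces to $L_H^h\delta_B$, consistent with the $1$-Lipschitz terminal stop value. We also need the hypotheses of Theorem~\ref{thm:action-discretization} to hold on the whole simplex, which they do as stated.
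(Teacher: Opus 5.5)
Your proposal is correct and matches the paper's proof: the same three-term split $|V_t^\ast(b)-V_t^h(b)|+|V_t^h(b)-V_t^h(\mathrm{Proj}_B(b))|+|V_t^h(\mathrm{Proj}_B(b))-\hat V_t^h(\mathrm{Proj}_B(b))|$. The first term is bounded by Theorem~\ref{thm:action-discretization} on all of $\Delta_M$, the second by $L_t^h\delta_B$, and the third by Theorem~\ref{thm:belief-discretization} applied to the library $\Theta^h$, with the grid comparison made inside $V_t^h$ exactly as you describe.
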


\begin{proof}
The argument is the same as in
Corollary~\ref{cor:belief-discretization-final-bound}, except that one first
decomposes the error into an action-discretization part and a belief-projection
part, and then applies the finite-library belief-projection estimate to the pair
$(V_t^h,\hat V_t^h)$. Indeed,
\begin{align*}
|V_t^\ast(b)-\tilde V_t^h(b)|
&\le
|V_t^\ast(b)-V_t^h(b)|
+
|V_t^h(b)-\hat V_t^h(\mathrm{Proj}_B(b))| \\
&\le
\|V_t^\ast-V_t^h\|_\infty
+
|V_t^h(b)-V_t^h(\mathrm{Proj}_B(b))| \\
&\quad
+
|V_t^h(\mathrm{Proj}_B(b))-\hat V_t^h(\mathrm{Proj}_B(b))|.
\end{align*}
The first term is bounded by Theorem~\ref{thm:action-discretization}, while the
second and third terms are controlled exactly as in
Corollary~\ref{cor:belief-discretization-final-bound}, with the finite-library
Lipschitz constants $L_t^h$. This yields
\[
|V_t^\ast(b)-\tilde V_t^h(b)|
\le
\delta_A \sum_{s=t}^{H-1} K_s
+
\delta_B \sum_{s=t}^{H} L_s^h.
\]
\end{proof}

Theorem~\ref{thm:total-approximation-error} and
Corollary~\ref{cor:total-approximation-error-arbitrary-belief} give the final
error decomposition for the projected approximation architecture. The total loss
is the sum of two contributions. The first is the \emph{action-side} loss,
controlled by the covering radius $\delta_A$ of the finite measurement library
and by the action-sensitivity constants $K_t$. The second is the
\emph{belief-side} loss, controlled by the projection radius $\delta_B$ of the
belief grid and by the belief-Lipschitz constants $L_t^h$ of the finite-library
value functions.

Thus the projected planner is accurate whenever both discretizations are fine and
the corresponding Bellman objects are sufficiently regular. In this sense, the
total approximation error retains the same structural form as the separate bounds
derived earlier: a geometric discretization scale is multiplied by an analytic
sensitivity constant, and the resulting local losses accumulate through the
finite horizon.

\section{Complexity and the Curse of Dimensionality}

We now complement the approximation-theoretic analysis with a computational one.
The previous sections established that the projected dynamic-programming architecture
is mathematically well posed and that its approximation error can be controlled in terms
of belief-space and action-space discretization radii. A different but equally natural
question is therefore the following: what are the offline and online computational costs of the projected planner, and how do they scale with the horizon and discretization sizes? Related complexity issues for approximate and point-based POMDP planning have been studied extensively in the literature \cite{Shani2013SurveyPBVI, Smith2005PointBased}.

The answer has two layers. First, at the level of the projected finite-grid
finite-library algorithm, one may derive an explicit symbolic cost formula by
inspecting the Bellman recursion step by step. Second, once this symbolic
formula is specialized to a simple projection implementation, one obtains an
asymptotic complexity law that makes the computational bottleneck transparent.
In particular, Theorem~\ref{thm:total-approximation-error} and
Corollary~\ref{cor:total-approximation-error-arbitrary-belief} show that
planner accuracy is jointly governed by the action-library covering radius
$\delta_A$ and the belief-grid projection radius $\delta_B$. When the belief
grid is refined in order to reduce approximation error, the resulting planner
therefore exhibits a clear curse of dimensionality through the geometry of the
belief simplex.

\subsection{Offline Planning Complexity of the Projected Finite-Grid Planner}

We begin with the offline phase, namely the backward-induction computation of the
projected value and policy tables on the finite belief grid $B$. Throughout this subsection,
we work with Algorithm~1 and with the projected Bellman recursion introduced in
Section~2. The outcome alphabet is the common finite set $O$, and the measurement
library is the finite set $A_{\mathrm{meas}}$.

To express the complexity transparently, it is convenient to isolate the elementary
operations that occur inside one Bellman update.

\paragraph{Atomic Costs}
Fix a stage $t \in \{0,\dots,H-1\}$ and a grid belief $b \in B$. We introduce the following
primitive costs:
\[
C_{\mathrm{stop}}(M)
\]
for evaluating
\[
\mathrm{StopVal}(b)=\max_i b(i),
\]
\[
C_{\mathrm{obs}}(a,o;M)
\]
for evaluating the observation probability
\[
\mathrm{ObsProb}(o\mid b,a),
\]
\[
C_{\tau}(a,o;M)
\]
for constructing the posterior
\[
\tau(b,a,o),
\]
\[
C_{\mathrm{proj}}(|B|,M)
\]
for projecting a posterior belief to the grid,
\[
C_{\mathrm{lookup}}
\]
for retrieving the next-stage value at the projected belief,
\[
C_{\mathrm{agg}}
\]
for incorporating one outcome contribution into the continuation expectation,
\[
C_{\mathrm{init}}
\]
for per-action initialization of the continuation branch, and
\[
C_{\mathrm{actmax}}(|A_{\mathrm{meas}}|)
\]
for selecting the best measurement action after all continuation branches have been evaluated.

For each measurement action $a \in A_{\mathrm{meas}}$, define the continuation-evaluation cost
\[
C_{\mathrm{cont}}(a;B,M)
:=
C_{\mathrm{init}}
+
\sum_{o\in O}
\Bigl[
C_{\mathrm{obs}}(a,o;M)
+
C_{\tau}(a,o;M)
+
C_{\mathrm{proj}}(|B|,M)
+
C_{\mathrm{lookup}}
+
C_{\mathrm{agg}}
\Bigr].
\]

\begin{proposition}[Whole-planner symbolic cost formula]
Let $C_{\mathrm{term\mbox{-}node}}(M)$ denote the terminal-stage cost per grid node.
Then the total offline planning cost of the projected finite-grid finite-library algorithm is
\[
C_{\mathrm{off}}
=
|B|\, C_{\mathrm{term\mbox{-}node}}(M)
+
\sum_{t=0}^{H-1} \sum_{b\in B} C_{\mathrm{Bell}}(t,b).
\]
If the same grid $B$, action library $A_{\mathrm{meas}}$, and outcome alphabet $O$ are used
at every nonterminal stage, then
\[
C_{\mathrm{off}}
=
|B|\, C_{\mathrm{term\mbox{-}node}}(M)
+
H |B|\, C_{\mathrm{Bell\mbox{-}node}},
\]
where
\[
C_{\mathrm{Bell\mbox{-}node}}
=
C_{\mathrm{stop}}(M)
+
|A_{\mathrm{meas}}|\, C_{\mathrm{init}}
+
\sum_{a\in A_{\mathrm{meas}}}\sum_{o\in O}
\Bigl[
C_{\mathrm{obs}}(a,o;M)
+
C_{\tau}(a,o;M)
\Bigr]
\]
\[
\qquad\qquad
+
|A_{\mathrm{meas}}||O|
\Bigl[
C_{\mathrm{proj}}(|B|,M)
+
C_{\mathrm{lookup}}
+
C_{\mathrm{agg}}
\Bigr]
+
C_{\mathrm{actmax}}(|A_{\mathrm{meas}}|).
\]
\end{proposition}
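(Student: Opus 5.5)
The proof is a direct accounting argument over the loop structure of Algorithm~\ref{alg:projected-planner}. Total cost is additive over the algorithm's disjoint phases, so I will decompose $C_{\mathrm{off}}$ according to the loop nesting. First, the terminal loop runs once over each $b \in B$ and performs the same work at each node: evaluating $\mathrm{StopVal}(b)$ and recording $\delta_{\arg\max_i b(i)}$. Calling this per-node cost $C_{\mathrm{term\mbox{-}node}}(M)$, the terminal phase contributes $|B|\,C_{\mathrm{term\mbox{-}node}}(M)$. The backward loop then runs over $t=H-1,\dots,0$ and, inside it, over $b\in B$. Defining $C_{\mathrm{Bell}}(t,b)$ as the cost of the body executed at $(t,b)$ gives the first displayed formula immediately, since the loops are sequential and share no work.

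Next I will expand $C_{\mathrm{Bell}}(t,b)$ line by line. The stopping branch costs $C_{\mathrm{stop}}(M)$; I will absorb the $\arg\max$ bookkeeping and the constant-time initializations of $v_{\mathrm{meas}}^\star, a_{\mathrm{meas}}^\star$ into this term. For each $a\in A_{\mathrm{meas}}$, the inner block costs $C_{\mathrm{init}}$ for setting $q(a)$, plus, for each $o\in O$, the costs $C_{\mathrm{obs}}+C_\tau+C_{\mathrm{proj}}+C_{\mathrm{lookup}}+C_{\mathrm{agg}}$. This is exactly $C_{\mathrm{cont}}(a;B,M)$. The running comparisons against $v_{\mathrm{meas}}^\star$ and the final comparison of stop versus measure together make up $C_{\mathrm{actmax}}(|A_{\mathrm{meas}}|)$. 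Hence
\[
C_{\mathrm{Bell}}(t,b)=C_{\mathrm{stop}}(M)+\sum_{a\in A_{\mathrm{meas}}} C_{\mathrm{cont}}(a;B,M)+C_{\mathrm{actmax}}(|A_{\mathrm{meas}}|).
\]

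For the second formula, assume the same $B$, $A_{\mathrm{meas}}$ and $O$ at every stage. Then none of the atomic costs depends on $t$. I will also take the atomic costs as upper (worst-case) counts that do not depend on the particular $b$, with the $p_o>0$ test either charged uniformly or bounded above. Under these conventions $C_{\mathrm{Bell}}(t,b)=C_{\mathrm{Bell\mbox{-}node}}$ is constant, and the double sum equals $H|B|\,C_{\mathrm{Bell\mbox{-}node}}$. Expanding $\sum_a C_{\mathrm{cont}}(a;B,M)$ gives $|A_{\mathrm{meas}}|C_{\mathrm{init}}$ plus the double sum of $C_{\mathrm{obs}}+C_\tau$. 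Since $C_{\mathrm{proj}}(|B|,M)$, $C_{\mathrm{lookup}}$ and $C_{\mathrm{agg}}$ do not depend on $(a,o)$, their double sum collapses to $|A_{\mathrm{meas}}||O|[\,C_{\mathrm{proj}}+C_{\mathrm{lookup}}+C_{\mathrm{agg}}\,]$, which is the stated expression.

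The only delicate point is conventional rather than mathematical. Branches with $p_o=0$ skip the posterior, projection and lookup steps, so the exact cost can depend on $b$. I will handle this by stating that the formula counts the worst case, in which every branch is executed. Equivalently, the atomic costs are understood as upper bounds, so the identity holds with equality under that accounting and as an upper bound otherwise.
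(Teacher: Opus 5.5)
Your proposal is correct and follows the paper's proof essentially line by line. The terminal pass contributes $|B|\,C_{\mathrm{term\mbox{-}node}}(M)$, each nonterminal node costs $C_{\mathrm{stop}}(M)+\sum_{a}C_{\mathrm{cont}}(a;B,M)+C_{\mathrm{actmax}}(|A_{\mathrm{meas}}|)$, and stage/node independence plus expanding $C_{\mathrm{cont}}$ gives $C_{\mathrm{Bell\mbox{-}node}}$. Your explicit worst-case convention for the skipped $p_o=0$ branches, and for the $b$-independence of the atomic costs, tightens a point the paper passes over silently when it asserts equality.
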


\begin{proof}
At the terminal stage $t=H$, Algorithm~1 performs no measurement loop. For each grid
belief $b\in B$, it computes the terminal stopping value and stores the corresponding
declaration action. Thus the terminal pass contributes
\[
|B|\, C_{\mathrm{term\mbox{-}node}}(M).
\]

Now fix a nonterminal stage $t<H$ and a grid node $b\in B$. At this node, the algorithm
first evaluates the stopping branch once, at cost $C_{\mathrm{stop}}(M)$. It then loops over
all measurement actions $a\in A_{\mathrm{meas}}$. For each such action, it initializes the
continuation value, then for every outcome $o\in O$ it computes the observation probability,
constructs the posterior, projects that posterior back to the grid, retrieves the next-stage
value, and aggregates the resulting contribution into the continuation expectation. By
definition, the total cost of this continuation evaluation is $C_{\mathrm{cont}}(a;B,M)$.
After all actions have been processed, the algorithm chooses the best measurement branch,
at cost $C_{\mathrm{actmax}}(|A_{\mathrm{meas}}|)$.

Therefore the Bellman-update cost at one nonterminal node $(t,b)$ is exactly
\[
C_{\mathrm{Bell}}(t,b)
=
C_{\mathrm{stop}}(M)
+
\sum_{a\in A_{\mathrm{meas}}} C_{\mathrm{cont}}(a;B,M)
+
C_{\mathrm{actmax}}(|A_{\mathrm{meas}}|).
\]
Summing this over all nonterminal stages and all grid nodes proves the first formula.

If the same finite sets $B$, $A_{\mathrm{meas}}$, and $O$ are used at every nonterminal stage,
then the symbolic Bellman-update cost is stage-independent and node-independent, so it may
be written as a common quantity $C_{\mathrm{Bell\mbox{-}node}}$. Expanding the definition of
$C_{\mathrm{cont}}$ and grouping identical terms yields the displayed expression for
$C_{\mathrm{Bell\mbox{-}node}}$.
\end{proof}

The preceding proposition is purely structural. It does not yet impose any particular
implementation model for projection, posterior computation, or lookup. Its purpose is
instead to make explicit which parts of the planner contribute to the cost and how those
parts are nested inside the Bellman recursion. In particular, the proposition shows that
the projection step enters once per outcome and per measurement action, which will become
crucial after asymptotic specialization.

\subsubsection{Linear-Scan Projection and Asymptotic Specialization}

We now specialize to the simplest implementation of the projection operator.

\begin{proposition}[Linear-scan nearest-neighbor projection cost]
Suppose $\mathrm{Proj}_B$ is implemented by nearest-neighbor search over the entire grid $B$
using linear scan and deterministic tie-breaking rules, and suppose distances are computed from the full $M$-coordinate belief
representation. Then
\[
C_{\mathrm{proj}}(|B|,M)=\Theta(|B|M).
\]
\end{proposition}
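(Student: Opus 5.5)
We bound the cost of the linear-scan procedure from above and from below, under the standard unit-cost arithmetic model in which each addition, subtraction, comparison and absolute value is $O(1)$. First we fix the algorithm. Given a posterior $x=\tau(b,a,o)\in\Delta_M$, the scan visits every grid point $g\in B$ once. For each $g$ it computes the distance $\|x-g\|$ (for concreteness the $\|\cdot\|_\infty$ distance used in the definition of $\delta_B$; the Euclidean or $\ell_1$ distances behave identically). It then compares this distance with the current best value and updates a running minimizer, using the deterministic tie-breaking rule. At the end it returns the stored grid point, or its index, so that the lookup $\hat V_{t+1}(\cdot)$ can be performed.

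\textbf{Upper bound.} For a fixed $g$, computing the distance from the full $M$-coordinate representation takes $M$ coordinatewise subtractions and $M$ absolute values. It then needs $M-1$ comparisons for a maximum, or $M-1$ additions for a sum. This costs $O(M)$. The comparison with the running best, the tie-breaking test (for example, comparing indices), and the possible update each cost $O(1)$. Summing over the $|B|$ grid points, and adding $O(1)$ initialization and return, gives $C_{\mathrm{proj}}(|B|,M)=O(|B|M)$.

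\textbf{Lower bound.} This is the only point that needs care, because the statement claims $\Theta$ rather than $O$. The argument is about the specified algorithm, not about nearest-neighbor search in general. By hypothesis the linear scan evaluates the distance to every $g\in B$, with no early termination. Each such evaluation reads all $M$ coordinates of both $x$ and $g$ and performs at least $M$ arithmetic operations, since the distance depends on every coordinate difference. So the work is at least $|B|\cdot M$ elementary operations, which gives $\Omega(|B|M)$.

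Combining the two bounds gives $C_{\mathrm{proj}}(|B|,M)=\Theta(|B|M)$. The main obstacle is stating the cost model precisely enough that the lower bound is legitimate. We will therefore say explicitly that the $\Theta$ statement refers to this implementation, which reads the full coordinate representation and uses no pruning, indexing structure, or lattice rounding. A structured grid could in principle be projected faster, for example by coordinate rounding on a regular simplex grid. That possibility is exactly why the proposition is phrased for the linear-scan implementation.
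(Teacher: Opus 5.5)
Your proposal is correct and follows the same argument as the paper: the linear scan inspects all $|B|$ grid points, each full-coordinate distance evaluation costs $\Theta(M)$, and tie-breaking is lower-order bookkeeping. Your explicit split into upper and lower bounds under a stated unit-cost model, and your remark that the $\Theta$ claim concerns this specific no-pruning implementation, just make precise what the paper's proof leaves implicit.
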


\begin{proof}
To project a posterior belief $x$ onto the grid $B$, a linear-scan nearest-neighbor
implementation compares $x$ with every candidate grid point $y\in B$. Thus all $|B|$
candidates must be inspected.

For each candidate $y$, a standard norm-based distance computation, such as
$\ell_1$, $\ell_2$, or $\ell_\infty$, requires examining all $M$ coordinates of the two
belief vectors. Hence one candidate comparison costs $\Theta(M)$. Since the algorithm
scans all $|B|$ candidates and tie-breaking adds only lower-order bookkeeping, the total
projection cost is
\[
C_{\mathrm{proj}}(|B|,M)=\Theta(|B|M).
\]
\end{proof}

Moreover, let us impose the simplest asymptotic model for the remaining primitives:
\[
C_{\mathrm{stop}}(M)=\Theta(M), \qquad
C_{\mathrm{obs}}(a,o;M)=\Theta(M), \qquad
C_{\tau}(a,o;M)=\Theta(M),
\]
\[
C_{\mathrm{lookup}}=\Theta(1), \qquad
C_{\mathrm{agg}}=\Theta(1), \qquad
C_{\mathrm{init}}=\Theta(1), \qquad
C_{\mathrm{actmax}}(|A_{\mathrm{meas}}|)=\Theta(|A_{\mathrm{meas}}|).
\]
The interpretation is straightforward. Evaluating the stopping value scans an
$M$-dimensional belief vector. Observation probabilities are weighted sums over the
$M$ hypotheses. Posterior normalization is likewise linear in $M$. Lookup, one-term
aggregation, and branch initialization are constant-size tasks, while action maximization
scans the finite action library once.

\begin{theorem}[Offline planning complexity under linear-scan projection]
Under the asymptotic assumptions above and under linear-scan nearest-neighbor projection,
\[
C_{\mathrm{Bell\mbox{-}node}}
=
\Theta(|A_{\mathrm{meas}}||O||B|M),
\]
and therefore
\[
C_{\mathrm{off}}
=
\Theta(H|A_{\mathrm{meas}}||O|M|B|^2).
\]
\end{theorem}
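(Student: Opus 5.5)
The plan is to substitute the asymptotic primitive costs, together with the linear-scan projection cost $C_{\mathrm{proj}}(|B|,M)=\Theta(|B|M)$ from the preceding proposition, into the symbolic expression for $C_{\mathrm{Bell\mbox{-}node}}$ given in the whole-planner cost formula. Each term is then evaluated and the dominant one identified; since every primitive is $\Theta(\cdot)$, we get matching upper and lower bounds at once.

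Term by term, the contributions are as follows:
\begin{itemize}
\item The stopping term gives $\Theta(M)$.
\item Initialization gives $|A_{\mathrm{meas}}|\cdot\Theta(1)$.
\item Observation probabilities and posteriors give $\sum_{a}\sum_{o}\Theta(M)=\Theta(|A_{\mathrm{meas}}||O|M)$.
\item Projection, lookup and aggregation give $|A_{\mathrm{meas}}||O|\bigl[\Theta(|B|M)+\Theta(1)+\Theta(1)\bigr]=\Theta(|A_{\mathrm{meas}}||O||B|M)$.
\item Action maximization gives $\Theta(|A_{\mathrm{meas}}|)$.
\end{itemize}
All terms are nonnegative. The projection term is bounded below by itself, which gives the lower bound. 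For the upper bound, each remaining term is dominated by it, using $|B|\ge 1$, $|O|\ge 1$ and $M\ge 1$: for example $M\le |A_{\mathrm{meas}}||O||B|M$ and $|A_{\mathrm{meas}}|\le |A_{\mathrm{meas}}||O||B|M$. This yields $C_{\mathrm{Bell\mbox{-}node}}=\Theta(|A_{\mathrm{meas}}||O||B|M)$.

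For the total cost, take $C_{\mathrm{term\mbox{-}node}}(M)=\Theta(M)$: the terminal pass evaluates $\max_i b(i)$ and stores an argmax, consistent with $C_{\mathrm{stop}}$. Then
\[
C_{\mathrm{off}} = |B|\,\Theta(M) + H|B|\,\Theta(|A_{\mathrm{meas}}||O||B|M).
\]
For $H\ge 1$ the second term dominates the first, since $|B|M\le H|A_{\mathrm{meas}}||O||B|^2M$. Hence $C_{\mathrm{off}}=\Theta(H|A_{\mathrm{meas}}||O|M|B|^2)$.

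The main obstacle is bookkeeping rather than depth. The $\Theta$-bounds must hold uniformly in all parameters simultaneously, and the nonempty-set and $H\ge1$ assumptions must be stated explicitly. The implicit constants in the per-$(a,o)$ costs $C_{\mathrm{obs}}(a,o;M)$ and $C_\tau(a,o;M)$ should be uniform in $a$ and $o$; I will take this as part of the asymptotic model, so that the double sum is genuinely $\Theta(|A_{\mathrm{meas}}||O|M)$.
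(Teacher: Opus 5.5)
Your proposal is correct and follows the paper's proof: substitute the primitive costs, including $C_{\mathrm{proj}}(|B|,M)=\Theta(|B|M)$, into $C_{\mathrm{Bell\mbox{-}node}}$, observe that the projection term dominates because $|B|\ge 1$, and then note that in $C_{\mathrm{off}}$ the nonterminal contribution $\Theta(H|A_{\mathrm{meas}}||O|M|B|^2)$ dominates the $\Theta(|B|M)$ terminal pass. Your explicit treatment of the lower bound via nonnegativity, of $H\ge 1$ and nonempty sets, and of uniformity of the constants in $(a,o)$ makes precise what the paper leaves implicit.
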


\begin{proof}
Substitute the asymptotic primitive costs into the Bellman-node formula of the previous
proposition. The terms involving observation probabilities and posterior computations
contribute
\[
\Theta(|A_{\mathrm{meas}}||O|M).
\]
By the previous proposition,
\[
C_{\mathrm{proj}}(|B|,M)=\Theta(|B|M),
\]
so the total contribution of the projection terms is
\[
|A_{\mathrm{meas}}||O|\, \Theta(|B|M)
=
\Theta(|A_{\mathrm{meas}}||O||B|M).
\]
Since $|B|\ge 1$, this projection term dominates the lower-order contributions
$\Theta(M)$, $\Theta(|A_{\mathrm{meas}}|)$, and $\Theta(|A_{\mathrm{meas}}||O|M)$.
Hence
\[
C_{\mathrm{Bell\mbox{-}node}}
=
\Theta(|A_{\mathrm{meas}}||O||B|M).
\]

Now return to the whole-planner formula:
\[
C_{\mathrm{off}}
=
|B|\, C_{\mathrm{term\mbox{-}node}}(M)
+
H|B|\, C_{\mathrm{Bell\mbox{-}node}}.
\]
The terminal contribution is $\Theta(|B|M)$, whereas the nonterminal contribution is
\[
H|B| \cdot \Theta(|A_{\mathrm{meas}}||O||B|M)
=
\Theta(H|A_{\mathrm{meas}}||O|M|B|^2).
\]
The latter dominates the former, and the claimed asymptotic law follows.
\end{proof}

The theorem has a simple interpretation. Under the present implementation assumptions,
the projected planner is linear in the horizon $H$, linear in the number of measurement
actions $|A_{\mathrm{meas}}|$, linear in the number of outcomes $|O|$, linear in the
hypothesis count $M$, and quadratic in the belief-grid cardinality $|B|$. The quadratic
dependence on $|B|$ is not an artifact of notation but a direct consequence of combining
a grid-based dynamic program with linear-scan belief projection.

\subsubsection{Curse of Dimensionality and Accuracy--Complexity Interpretation}

The asymptotic law obtained above is informative but not yet fully geometric, because
the grid cardinality $|B|$ is not itself a primitive feature of the problem. Rather, $|B|$
arises from discretizing the belief simplex
\[
\Delta_M = \bigl\{ b\in \mathbb{R}_{\ge 0}^M : \sum_{i=1}^M b(i)=1 \bigr\},
\]
whose geometric dimension is $M-1$.

To make this dependence explicit, we impose the standard regular-grid scaling law.

\begin{assumption}[Regular-grid belief scaling]
Suppose the finite belief grid $B \subset \Delta_M$ is constructed as a regular
discretization with covering radius $\delta_B$, and suppose its cardinality obeys
\[
|B| \asymp \delta_B^{-(M-1)}.
\]
\end{assumption}

\begin{proposition}[Curse-of-dimensionality interpretation]
Under the regular-grid scaling assumption and under the hypotheses of the previous theorem,
\[
C_{\mathrm{off}}
\asymp
H|A_{\mathrm{meas}}||O|M\, \delta_B^{-2(M-1)}.
\]
\end{proposition}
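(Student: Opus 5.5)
The plan is to substitute the regular-grid scaling law into the asymptotic offline cost established in the preceding theorem. That theorem, under linear-scan nearest-neighbor projection and the stated primitive-cost model, gives
\[
C_{\mathrm{off}}=\Theta\bigl(H|A_{\mathrm{meas}}||O|M|B|^2\bigr).
\]
So there are constants $0<c_1\le c_2$, independent of $H$, $|A_{\mathrm{meas}}|$, $|O|$, $M$ and $|B|$, such that $c_1 H|A_{\mathrm{meas}}||O|M|B|^2\le C_{\mathrm{off}}\le c_2 H|A_{\mathrm{meas}}||O|M|B|^2$. The regular-grid assumption gives constants $0<d_1\le d_2$ with $d_1\delta_B^{-(M-1)}\le |B|\le d_2\delta_B^{-(M-1)}$.

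The key step is to square the two-sided bound on $|B|$. Since all quantities are positive and squaring is monotone on $\mathbb{R}_{>0}$, this gives
\[
d_1^2\delta_B^{-2(M-1)}\le |B|^2\le d_2^2\delta_B^{-2(M-1)}.
\]
Inserting this into the two-sided bound for $C_{\mathrm{off}}$ yields
\[
c_1d_1^2\,H|A_{\mathrm{meas}}||O|M\,\delta_B^{-2(M-1)}\le C_{\mathrm{off}}\le c_2d_2^2\,H|A_{\mathrm{meas}}||O|M\,\delta_B^{-2(M-1)},
\]
which is exactly the claimed relation $\asymp$. I will also remark on the interpretation: the exponent $2(M-1)$ has two sources. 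The number of Bellman nodes contributes one factor $|B|$, and the linear-scan projection performed for each action--outcome pair contributes a second. Hence the cost grows exponentially in the number of hypotheses for fixed resolution $\delta_B$.

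The main obstacle is bookkeeping about which constants the symbol $\asymp$ hides. The grid-scaling constants $d_1,d_2$ may depend on $M$, for example through the volume of the simplex or the choice of norm. I will therefore state that the relation holds with $M$ regarded as fixed, or with constants allowed to depend on $M$. The explicit factor $M$ is still retained, because it comes from the coordinate-wise cost model rather than from the grid. I will also note that $\delta_B$ is assumed small enough that the scaling regime applies, since $|B|\ge1$ always. Beyond this caveat the argument is a direct substitution.
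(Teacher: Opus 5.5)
Your proposal is correct and follows the paper's proof: substitute $|B|\asymp\delta_B^{-(M-1)}$ into the law $C_{\mathrm{off}}=\Theta(H|A_{\mathrm{meas}}||O|M|B|^2)$ and square the two-sided bound on $|B|$. Your caveat that the hidden constants in the grid-scaling law may depend on $M$, so that $\asymp$ must be read with $M$ fixed or with $M$-dependent constants, is a sound refinement the paper leaves implicit.
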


\begin{proof}
By the previous theorem,
\[
C_{\mathrm{off}}=\Theta(H|A_{\mathrm{meas}}||O|M|B|^2).
\]
Substituting
\[
|B| \asymp \delta_B^{-(M-1)}
\]
gives
\[
|B|^2 \asymp \bigl(\delta_B^{-(M-1)}\bigr)^2 = \delta_B^{-2(M-1)}.
\]
Hence
\[
C_{\mathrm{off}}
\asymp
H|A_{\mathrm{meas}}||O|M\, \delta_B^{-2(M-1)}.
\]
\end{proof}

This proposition makes the computational meaning of belief-space refinement explicit.
If $M$ is fixed, then the planner scales polynomially in $\delta_B^{-1}$. However, the
exponent of that polynomial is $2(M-1)$, which itself grows linearly with the number of
hypotheses. In this sense, the projected architecture exhibits a clear curse of dimensionality:
as the hypothesis dimension increases, even modest improvements in the belief-grid resolution
become rapidly expensive.

The preceding complexity law becomes still more informative when combined with the
belief-space approximation error bound from Section~5.1. There, the projected value-function
error is controlled linearly in the projection radius $\delta_B$, up to stage-dependent
Lipschitz constants. Thus one may directly convert a target approximation accuracy into a
required computational budget.

\begin{corollary}[Accuracy--complexity law]
Assume the hypotheses of Theorem~\ref{thm:belief-discretization}, and suppose one targets approximation accuracy
$\varepsilon>0$ by choosing the belief-grid resolution so that $\delta_B = O(\varepsilon)$
at the level of scaling. Then
\[
C_{\mathrm{off}}
\asymp
H|A_{\mathrm{meas}}||O|M\, \varepsilon^{-2(M-1)}.
\]
\end{corollary}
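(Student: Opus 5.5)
The plan is to chain three earlier results: the belief-space error bound of Theorem~\ref{thm:belief-discretization} (or its uniform version, Corollary~\ref{cor:belief-discretization-uniform}), the regular-grid scaling assumption, and the curse-of-dimensionality proposition, which gives $C_{\mathrm{off}} \asymp H|A_{\mathrm{meas}}||O|M\,\delta_B^{-2(M-1)}$. The argument is a substitution. We first show that the choice $\delta_B \asymp \varepsilon$ meets the accuracy target, and then insert this scaling into the complexity law.

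\textbf{Step 1: the grid radius meets the target.} By Theorem~\ref{thm:belief-discretization},
\[
\|V_t-\hat V_t\|_{\infty,B} \le \delta_B \sum_{s=t+1}^{H} L_s .
\]
Corollary~\ref{cor:belief-discretization-final-bound} gives the analogous bound $\delta_B\sum_{s=t}^H L_s$ off the grid. Set $\Lambda := \sum_{s=0}^{H} L_s$. This is finite because, by Corollary~\ref{cor:recursive-belief-lipschitz}, each $L_s$ is finite under Assumption~\ref{ass:belief-regularity}; in any case it is finite under the hypotheses of the theorem. Choosing $\delta_B = \varepsilon/\Lambda$ makes every error at most $\varepsilon$. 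Since $\Lambda$ does not depend on $\varepsilon$, this choice satisfies $\delta_B \asymp \varepsilon$, i.e.\ $\delta_B = O(\varepsilon)$ at the level of scaling. The scaling should also be read as $\delta_B \gtrsim \varepsilon$, so that the grid is not refined beyond what the target requires. Without this lower bound the $\asymp$ in the claim would only hold as an upper bound.

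\textbf{Step 2: substitute into the complexity law.} Insert $\delta_B \asymp \varepsilon$ into $C_{\mathrm{off}} \asymp H|A_{\mathrm{meas}}||O|M\,\delta_B^{-2(M-1)}$. Because $x\mapsto x^{-2(M-1)}$ is a fixed power, two-sided bounds $c_1\varepsilon\le\delta_B\le c_2\varepsilon$ carry over to two-sided bounds on $\delta_B^{-2(M-1)}$, with constants $c_i^{-2(M-1)}$. This yields the claimed $C_{\mathrm{off}} \asymp H|A_{\mathrm{meas}}||O|M\,\varepsilon^{-2(M-1)}$.

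\textbf{Main obstacle.} The only delicate point is bookkeeping of what is held fixed. The constants absorbed by $\asymp$ depend on $M$, on $\Lambda$, and on the regular-grid constants, and $\Lambda$ itself grows with $H$. The statement should therefore be read with $M$, $H$ and the Lipschitz data treated as fixed parameters, and $\varepsilon\to 0$ as the asymptotic variable. With $\delta_B=\varepsilon/\Lambda$ this adds a factor $\Lambda^{2(M-1)}$. I would state this dependence explicitly in a remark rather than hide it, since it does not affect the $\varepsilon$-exponent $2(M-1)$ that carries the curse of dimensionality.
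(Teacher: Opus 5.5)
Your proposal is correct and follows the paper's proof. Like the paper, you use Theorem~\ref{thm:belief-discretization} to justify $\delta_B\asymp\varepsilon$, then substitute into $C_{\mathrm{off}}\asymp H|A_{\mathrm{meas}}||O|M\,\delta_B^{-2(M-1)}$; your explicit choice $\delta_B=\varepsilon/\Lambda$, with $H$ and the Lipschitz data held fixed, is just more careful bookkeeping of the paper's ``fixed horizon and relevant Lipschitz constants'' caveat. One small slip in your side remark: with only $\delta_B=O(\varepsilon)$ you get $\delta_B^{-2(M-1)}\gtrsim\varepsilon^{-2(M-1)}$, so the one-sided version would be a \emph{lower} bound on $C_{\mathrm{off}}$, not an upper bound.
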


\begin{proof}
By the previous proposition,
\[
C_{\mathrm{off}}
\asymp
H|A_{\mathrm{meas}}||O|M\, \delta_B^{-2(M-1)}.
\]
On the other hand, Theorem~\ref{thm:belief-discretization} shows that, for fixed horizon and relevant Lipschitz
constants, the belief-space discretization error scales linearly in $\delta_B$. Therefore,
to achieve target approximation accuracy $\varepsilon$, one may choose the grid resolution
so that $\delta_B = O(\varepsilon)$ at the level of scaling. Substituting
\[
\delta_B \asymp \varepsilon
\]
into the preceding complexity law gives
\[
C_{\mathrm{off}}
\asymp
H|A_{\mathrm{meas}}||O|M\, \varepsilon^{-2(M-1)}.
\]
\end{proof}

The corollary is perhaps the clearest computational summary of the projected approximation
architecture. The approximation error decreases only linearly with the belief-grid radius,
but the corresponding offline planning cost grows polynomially in $\varepsilon^{-1}$ with
an exponent that depends linearly on the belief-space dimension. Thus the projected method
is mathematically controlled, but improving its accuracy by brute-force grid refinement
rapidly becomes expensive in high dimension. Interestingly, this interpretation is also consistent with the broader complexity perspective in point-based POMDP planning, where both the curse of dimensionality and the curse of history play a central role in complexity analysis \cite{Smith2005PointBased}.

\subsection{Online Execution Complexity and Stopping Time}

We finally turn to online execution. The offline phase computes a policy table over the grid.
Once this table is available, the agent no longer sweeps over all beliefs or all branches of
the decision tree. Instead, it follows one realized trajectory: it evaluates the current policy,
performs the prescribed action, receives an observation, updates the belief, and repeats until
a declaration action is chosen. The natural complexity descriptor is therefore not the horizon
alone, but the stopping time of the induced trajectory.

Fix a deterministic policy
\[
\pi=(\mu_0,\dots,\mu_H)
\]
and an initial prior belief $b_0$. Let a sample path be denoted by
\[
\omega=(h,o_1,\dots,o_H),
\]
where $h$ is the hidden hypothesis drawn once from the prior and the outcomes $o_t$ are
generated under the observation law induced by the policy. The corresponding action and
belief trajectories are defined recursively by
\[
a_t(\omega)=\mu_t(b_t(\omega)),
\qquad
b_{t+1}(\omega)=\tau(b_t(\omega),a_t(\omega),o_{t+1}).
\]
The stopping time of the policy is
\[
\tau_\pi(\omega)
:=
\min\Bigl\{
t\in\{0,\dots,H\} :
\mu_t(b_t(\omega)) \in \{\delta_1,\dots,\delta_M\}
\Bigr\}.
\]

Let
\[
C_{\mathrm{pol}}(|B|,M)
\]
denote the policy-evaluation cost per step,
\[
C_{\mathrm{obs\mbox{-}recv}}
\]
the observation-handling cost per step,
\[
C_{\mathrm{upd}}(M)
\]
the posterior-update cost per step, and
\[
C_{\mathrm{term}}^{\mathrm{on}}
\]
the terminal declaration cost. Define the pathwise online execution cost by
\[
C_{\mathrm{on}}(\pi;\omega)
:=
\sum_{t=0}^{\tau_\pi(\omega)-1}
\Bigl[
C_{\mathrm{pol}}(|B|,M)
+
C_{\mathrm{obs\mbox{-}recv}}
+
C_{\mathrm{upd}}(M)
\Bigr]
+
C_{\mathrm{term}}^{\mathrm{on}}.
\]

\begin{proposition}[Expected online execution cost]
With the notation above,
\[
\mathbb{E}_\pi[C_{\mathrm{on}}]
=
\mathbb{E}_\pi[\tau_\pi]
\Bigl[
C_{\mathrm{pol}}(|B|,M)
+
C_{\mathrm{obs\mbox{-}recv}}
+
C_{\mathrm{upd}}(M)
\Bigr]
+
O(1).
\]
\end{proposition}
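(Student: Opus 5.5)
The argument is a direct application of linearity of expectation to the pathwise cost $C_{\mathrm{on}}(\pi;\omega)$. The key observation is that the per-step bracket
\[
c_{\mathrm{step}} := C_{\mathrm{pol}}(|B|,M)+C_{\mathrm{obs\mbox{-}recv}}+C_{\mathrm{upd}}(M)
\]
does not depend on the sample path $\omega$ or on the stage index $t$. It depends only on the fixed quantities $|B|$ and $M$, because it is an implementation cost model rather than a random quantity. Consequently, the sum over $t=0,\dots,\tau_\pi(\omega)-1$ collapses pathwise to $\tau_\pi(\omega)\,c_{\mathrm{step}}$. This gives the exact identity
\[
C_{\mathrm{on}}(\pi;\omega)=\tau_\pi(\omega)\,c_{\mathrm{step}}+C_{\mathrm{term}}^{\mathrm{on}}
\]
for every $\omega$.

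Next I verify that $\tau_\pi$ is a well-defined, bounded random variable on the finite sample space of paths $\omega=(h,o_1,\dots,o_H)$. Because $h$ and each $o_t$ range over finite sets, the path space is finite, and the law of $\omega$ is induced by the prior $b_0$, the deterministic policy, and the Born-rule observation kernel \eqref{eq:born-observation-kernel}. The stopping time is finite everywhere: $\mu_H$ is always a declaration action, as in the terminal assignment of Algorithm~\ref{alg:projected-planner} and \eqref{eq:terminal-value-qsd}, so the set in \eqref{eq:stopping-time-definition} is nonempty and $0\le\tau_\pi\le H$. Paths hitting zero-probability outcomes, where $\tau$ is undefined, carry zero mass and can be ignored, or the belief can be assigned arbitrarily there without affecting expectations. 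Hence every expectation involved exists and is finite.

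Taking $\mathbb{E}_\pi$ of the pathwise identity and using linearity then gives
\[
\mathbb{E}_\pi[C_{\mathrm{on}}]=\mathbb{E}_\pi[\tau_\pi]\,c_{\mathrm{step}}+C_{\mathrm{term}}^{\mathrm{on}}.
\]
It remains to absorb $C_{\mathrm{term}}^{\mathrm{on}}$ into $O(1)$. This is the only step needing care, since it must be clear what the $O(1)$ is uniform in. I would argue that the terminal declaration only reads off the label $i$ of the chosen $\delta_i$ and reports it, so its cost is a constant independent of $|B|$, $H$, and the path. If one instead charges an $\arg\max$ over the $M$ coordinates, the term becomes $O(M)$, and I would state this caveat explicitly. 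In either case the formula is actually an equality with an explicit remainder, which is stronger than the asymptotic statement.
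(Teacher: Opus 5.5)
Your proposal is correct and is essentially the paper's argument, since both rest on linearity of expectation applied to the pathwise cost. The paper writes the expected sum as $\sum_{t\ge 0}\mathbb{P}_\pi(t<\tau_\pi)$ times the bracket and invokes the tail-sum identity, whereas you first collapse the constant-summand sum pathwise to $\tau_\pi(\omega)\,c_{\mathrm{step}}$, which is marginally more direct. Your additional checks are sound and make explicit what the paper leaves implicit: the finite path space, $0\le\tau_\pi\le H$ because $\mu_H$ is a declaration, and the caveat on what the $O(1)$ term is uniform in.
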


\begin{proof}
By definition,
\[
C_{\mathrm{on}}(\pi;\omega)
=
\sum_{t=0}^{\tau_\pi(\omega)-1}
\Bigl[
C_{\mathrm{pol}}(|B|,M)
+
C_{\mathrm{obs\mbox{-}recv}}
+
C_{\mathrm{upd}}(M)
\Bigr]
+
C_{\mathrm{term}}^{\mathrm{on}}.
\]
Taking expectation with respect to the probability law induced by the prior and the
observation kernels under the fixed policy $\pi$, and using linearity of expectation, yields
\[
\mathbb{E}_\pi[C_{\mathrm{on}}]
=
\sum_{t\ge 0}
\mathbb{P}_\pi(t<\tau_\pi)
\Bigl[
C_{\mathrm{pol}}(|B|,M)
+
C_{\mathrm{obs\mbox{-}recv}}
+
C_{\mathrm{upd}}(M)
\Bigr]
+
\mathbb{E}_\pi[C_{\mathrm{term}}^{\mathrm{on}}].
\]
Since
\[
\sum_{t\ge 0}\mathbb{P}_\pi(t<\tau_\pi)=\mathbb{E}_\pi[\tau_\pi],
\]
and the terminal declaration overhead is constant-size, we obtain
\[
\mathbb{E}_\pi[C_{\mathrm{on}}]
=
\mathbb{E}_\pi[\tau_\pi]
\Bigl[
C_{\mathrm{pol}}(|B|,M)
+
C_{\mathrm{obs\mbox{-}recv}}
+
C_{\mathrm{upd}}(M)
\Bigr]
+
O(1).
\]
\end{proof}

The proposition shows that online execution is governed by a fundamentally different
complexity principle from offline planning. Offline computation sweeps over all stages and
all grid beliefs, whereas online computation processes only the realized pre-stopping segment
of one trajectory. Thus the natural online cost scale is the expected stopping time
$\mathbb{E}_\pi[\tau_\pi]$, multiplied by the per-step belief-processing cost.

\begin{corollary}[Linear-scan online execution law]
Suppose policy evaluation uses linear-scan projection over the grid, so that
\[
C_{\mathrm{pol}}(|B|,M)=\Theta(|B|M),
\]
and suppose posterior updating obeys
\[
C_{\mathrm{upd}}(M)=\Theta(M).
\]
Then the per-step online cost is dominated by policy evaluation, and
\[
\mathbb{E}_\pi[C_{\mathrm{on}}]
=
\mathbb{E}_\pi[\tau_\pi]\, \Theta(|B|M) + O(1).
\]
\end{corollary}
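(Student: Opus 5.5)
The plan is to start from the expected-cost identity of the preceding proposition,
\[
\mathbb{E}_\pi[C_{\mathrm{on}}]
=
\mathbb{E}_\pi[\tau_\pi]\Bigl[C_{\mathrm{pol}}(|B|,M)+C_{\mathrm{obs\mbox{-}recv}}+C_{\mathrm{upd}}(M)\Bigr]+O(1),
\]
and substitute the assumed asymptotic primitive costs. The statement then follows once we show that the bracket is $\Theta(|B|M)$.

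First, justify the hypothesis $C_{\mathrm{pol}}(|B|,M)=\Theta(|B|M)$. In online execution the current exact belief $b_t$ is generally off-grid. Evaluating the stored policy $\mu_t$ therefore means first computing $\mathrm{Proj}_B(b_t)$ and then performing an $O(1)$ table lookup. By the linear-scan nearest-neighbor projection proposition, the projection costs $\Theta(|B|M)$, which matches the stated hypothesis.

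Second, handle the remaining per-step terms. Under the model of the offline analysis, $C_{\mathrm{obs\mbox{-}recv}}=\Theta(1)$ is constant-size bookkeeping, and $C_{\mathrm{upd}}(M)=\Theta(M)$ is assumed. Since $|B|\ge 1$, we have
\[
\Theta(|B|M)+\Theta(1)+\Theta(M)=\Theta(|B|M),
\]
with the lower bound coming from the policy-evaluation term itself. This is exactly the claim that the per-step cost is dominated by policy evaluation.

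Third, multiply by $\mathbb{E}_\pi[\tau_\pi]$. Here $\tau_\pi\le H$ is finite and nonnegative, and the bracket is a deterministic, stage-independent quantity, so the $\Theta(\cdot)$ factors through the expectation. The terminal declaration overhead $C_{\mathrm{term}}^{\mathrm{on}}$ remains in the $O(1)$ term. This yields
\[
\mathbb{E}_\pi[C_{\mathrm{on}}]=\mathbb{E}_\pi[\tau_\pi]\,\Theta(|B|M)+O(1).
\]

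The only mildly delicate point is the convention on $C_{\mathrm{obs\mbox{-}recv}}$. It has no stated asymptotic form, so I would make the natural assumption that it is $O(|B|M)$ (for instance, constant). I would also check that the per-step bracket does not depend on $t$ or on the path, so that factoring out the expectation is legitimate. Both conditions hold by the definition of the pathwise cost.
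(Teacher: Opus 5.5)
Your proposal is correct and follows the paper's proof: substitute the assumed costs into the expected-cost identity of the preceding proposition. The bracket is then $\Theta(|B|M)$, because the linear-scan policy-evaluation term dominates the $\Theta(M)$ update cost and the constant-size observation-handling cost $C_{\mathrm{obs\mbox{-}recv}}$, and it multiplies $\mathbb{E}_\pi[\tau_\pi]$, with the terminal overhead left in the $O(1)$ term.
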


\begin{proof}
Under the stated assumptions,
\[
C_{\mathrm{pol}}(|B|,M)
+
C_{\mathrm{obs\mbox{-}recv}}
+
C_{\mathrm{upd}}(M)
=
\Theta(|B|M),
\]
because the linear-scan policy-evaluation term dominates the lower-order $\Theta(M)$
posterior-update term and the constant-size observation-handling overhead. Substituting
this into the previous proposition yields the claim.
\end{proof}

The offline and online complexity laws together clarify the computational role of the
projected architecture. The expensive part of the method is the offline construction of the
policy table, whose cost reflects both Bellman recursion over the belief grid and repeated
projection back to that grid. Online execution, by contrast, follows only one realized path
through the precomputed policy and is therefore controlled by the stopping time rather than
by the full horizon-wide branching structure.

\section{Working Example: Binary Quantum State Discrimination Case}
So far, we have established that sequential and adaptive quantum state discrimination (QSD) can be reframed as a Partially Observable Markov Decision Process (POMDP) within the context of classical decision and control theory. We have also conducted a detailed analysis of the approximation errors induced by naive algorithms--specifically projected dynamic programming using regular grids and finite action libraries--alongside the computational complexity of both offline planning and online execution. However, these discussions have focused primarily on problem redefinition and theoretical computational analysis. To address the lack of concrete application details, this section aims to demonstrate how the specific functions of the POMDP framework are constructed in practice and how the tradeoff between keep measuring and stopping arises in the realistic situations. 

\subsection{Problem Statements}
While the methodology presented thus far remains within a highly generalized framework, practical application necessitates further specification and a deliberate narrowing of the problem's scope. In this subsection, we provide a detailed explanation of how to address a concrete binary state discrimination task and how the associated measurements are parameterized. 

Specifically, let us consider the problem of discriminating between the following two states:
\begin{align*}
    & \left|\psi_1\right> = \left|0\right> \\
    & \left|\psi_2\right> = \cos \theta \left|0\right> + \sin \theta \left|1\right> \quad \text{where} \,\, \theta \in \left(0, \frac{\pi}{2}\right).
\end{align*}
Thus those two states are naturally non-orthogonal. Moreover, define the corresponding density matrices as follows:
\begin{equation*}
    \rho_1 = \left|\psi_1\right>\left<\psi_1\right| \quad \rho_2 = \left|\psi_2\right>\left<\psi_2\right|.
\end{equation*}
Since this is basically binary state discrimination case, the (prior) belief would be simplified into probability parameter $p$ (which denotes probability for state-1 from now on) and given in a explicit form of $b = (p, 1 - p)$ where $p \in \left[0, 1\right]$.

Consider the following two states, which would constitute the projective measurement operators parameterized by $\phi$ respectively.
\begin{align*}
    & \left|e_0(\phi)\right> = \cos \phi \left|0\right> + \sin \phi \left|1\right> \\
    & \left|e_1(\phi)\right> = -\sin \phi \left|0\right> + \cos \phi \left|1\right>.
\end{align*}
It is easily verified that those two states are actually orthonormal basis. Think of corresponding measurement operators
\begin{align*}
    E_0(\phi) & = \left|e_0(\phi)\right>\left<e_0(\phi)\right| = \left(\begin{matrix}\sin^2 \phi & -\cos \phi \sin \phi \\ -\cos \phi \sin \phi & \cos^2 \phi\end{matrix}\right) \\
    E_1(\phi) & = \left|e_1(\phi)\right>\left<e_1(\phi)\right| = \left(\begin{matrix}\cos^2 \phi & \cos \phi \sin \phi \\ \cos \phi \sin \phi & \sin^2 \phi\end{matrix}\right),
\end{align*}
and it follows directly that those two operators are valid POVM operators. While we formerly described a method for creating parameterized POVMs via parameterized unitaries, we will focus here on simple parameterized projective measurements to clearly demonstrate the core insights of this framework. We emphasize that this is a choice made for illustrative simplicity, and the framework remains compatible with more complex parameterization schemes. Furthermore, introducing projective measurements based on the states $\left|e_0\right>$ and $\left|e_1\right>$--and assigning the notations $E_0$ and $E_1$ accordingly--implies that we label the outcome as $o = 0$ when the system is projected onto $\left|e_0\right>$, and $o = 1$ when projected onto $\left|e_1\right>$. It is important to underscore, yet, that these labels are purely arbitrary for this specific case study only. They serve as intuitive indices and do not signify that the physical $\left|0\right>$ or $\left|1\right>$ states have been measured.

\subsection{Bayesian Belief Updating and Belief Simplex Geometry}
As defined earlier, let us denote the observation probability as follows.
\begin{equation*}
    \ell_i(\phi, o) = \operatorname{Tr}(E_o(\phi) \rho_i) \quad \text{for} \,\, i = 1, 2 \,\, \text{and} \,\, o = 0, 1
\end{equation*}

If we denote by $p'_o$ the posterior probability of state 1 after observing $o$, then it would be calculated as
\begin{equation*}
    p'_o = \frac{p \ell_1(\phi, o)}{p \ell_1(\phi, o) + (1 - p) \ell_2(\phi, o)}.
\end{equation*}
Naturally, the posterior probability of state 2 is $1 - p'_o$. If we write the observation probabilities explicitly as expectation values, then
\begin{itemize}
    \item For the case of $o = 0$,
    \begin{equation*}
        p'_0 = \frac{p\cos^2 \phi}{p\cos^2 \phi + (1 - p)\cos^2(\theta - \phi)}.
    \end{equation*}
    \item For the case of $o = 1$,
    \begin{equation*}
        p'_1 = \frac{p\sin^2 \phi}{p\sin^2 \phi + (1 - p)\sin^2(\theta - \phi)}.
    \end{equation*}
\end{itemize}

In the conventional odds form, 
\begin{equation*}
    \frac{p'_o}{1 - p'_o} = \frac{p}{1 - p} \cdot \frac{\ell_1(\phi, o)}{\ell_2(\phi, o)}.
\end{equation*}
Thus just by following the so-called \emph{posterior jump} of the crucial parameter $p$, we might reconstruct the full current belief update which could also be readily visualized as a point moving within the interval $\left[0, 1\right]$ on the $x$-axis.

\begin{figure}[t]
\centering
\begin{tikzpicture}[x=10cm,y=1cm,>=stealth]

    % baseline
    \draw[thick] (0,0) -- (1,0);

    % endpoints
    \fill (0,0) circle (1.2pt);
    \fill (1,0) circle (1.2pt);
    \node[below] at (0,0) {$0$};
    \node[below] at (1,0) {$1$};

    % prior and posteriors (arbitrary example positions)
    \fill (0.42,0) circle (1.6pt);
    \node[below=4pt] at (0.42,0) {$p$};

    \fill (0.72,0) circle (1.6pt);
    \node[above=4pt] at (0.72,0) {$p'_0$};

    \fill (0.18,0) circle (1.6pt);
    \node[above=4pt] at (0.18,0) {$p'_1$};

    % arrows
    \draw[->,thick] (0.44,0.08) to[bend left=20] (0.70,0.08);
    \node[above] at (0.575,0.25) {$o=0$};

    \draw[->,thick] (0.40,0.08) to[bend right=20] (0.20,0.08);
    \node[above] at (0.3,0.19) {$o=1$};

\end{tikzpicture}
\caption{Illustration of posterior jumps on the belief line. Starting from the prior belief $p \in [0,1]$, observing outcome $o=0$ moves the belief to $p'_0$, while observing outcome $o=1$ moves it to $p'_1$.}
\label{fig:belief-jump-binary}
\end{figure}
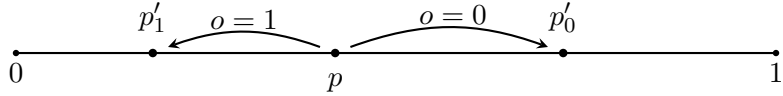

In a more mathematical way, the updating rule would be described as follows.
\begin{align*}
    \frac{\ell_1(\phi, o)}{\ell_2(\phi, o)} > 1 & \Leftrightarrow \frac{p'_o}{1 - p'_o} > \frac{p}{1 - p} \\
    & \Rightarrow p'_o > p.
\end{align*}
Speaking concretely, 
\begin{itemize}
    \item When $o = 0$, we only perform such an update when $\cos^2 \phi > \cos^2 (\theta - \phi)$.
    \item When $o = 1$, we only conduct such an update when $\sin^2 \phi > \sin^2 (\theta - \phi)$.
\end{itemize}
The boundary case $\phi = \frac{\theta}{2}$ corresponds to the situation in which the measurement outcome $o$ provides no additional information for the decision problem.

\subsection{Reward and Optimal One-Step Payoff} \label{seq:Reward and Optimal One-step Payoff}
By following the previous discussions, stop-and-declare gives us
\begin{equation*}
    R_{\text{stop}}(p, \delta_1) = p \quad R_{\text{stop}}(p, \delta_2) = 1 - p.
\end{equation*}
Thus the total $\text{StopVal}(p)$ is 
\begin{align*}
    \text{StopVal}(p) & = \max \{p, 1 - p\} \\
    & = \begin{cases}
        1 - p & \text{if } 0 \leq p < \frac{1}{2} \\
        p & \text{if } \frac{1}{2} \leq p \leq 1.
    \end{cases}
\end{align*}
The interpretation is quite intuitive. If we are less confident about the current hidden state (namely, located near the point $p = 1/2$), the immediate stopping gives us less payoff and vice versa (near the point $p = 0$ or $p = 1$).

Let us consider the case in which we perform one measurement and then declare optimally afterward. In that case, the total expected success probability would be given as
\begin{equation*}
    J_1(p, \phi) = \Pr(0 | p, \phi) \max \{p'_0, 1 - p'_0\} + \Pr(1 | p, \phi) \max \{p'_1, 1 - p'_1\}.
\end{equation*}
Since all of those components have already been discussed in detail in the above paragraph, the equation reduces to 
\begin{equation}
    J_1(p, \phi) = \max \{p\cos^2 \phi, (1 - p)\cos^2 (\theta - \phi)\} + \max \{p\sin^2 \phi, (1 - p)\sin^2 (\theta - \phi)\},
\end{equation}
and this is a concrete formulation of the prior one-step reduction formula.

Observe that
\begin{align*}
    & \max \{p\cos^2 \phi, (1 - p)\cos^2 (\theta - \phi)\} \geq p\cos^2 \phi \\
    & \max \{p\sin^2 \phi, (1 - p)\sin^2 (\theta - \phi)\} \geq p\sin^2 \phi.
\end{align*}
Adding both terms in the left-hand-side and right-hand-side respectively,
\begin{equation*}
    J_1(p, \phi) \geq p(\cos^2 \phi + \sin^2 \phi) = p.
\end{equation*}
Following the exact same logic, we also get
\begin{equation*}
    J_1(p, \phi) \geq 1 - p.
\end{equation*}
Thus, the following relation holds trivially.
\begin{equation*}
    J_1(p, \phi) \geq \max \{p, 1 - p\} = \text{StopVal}(p),
\end{equation*}
with the equality holds if $\phi = \frac{\theta}{2}$, where the measurement is non-informative. While the measurement cost $c_{\text{meas}}$ is excluded here, the inequality provides an intuitive insight: in the absence of costs, taking a measurement is always no worse than making an immediate decision. This aligns with the basic information-theoretic fact that measurements are never detrimental to the decision process; they either gain information or stay non-informative, but they never subtract from what is already known.

\subsection{Consistency Check with the Conventional Helstrom Bound}
Recall to Eq.~(76) and if we substitute prior $p = \frac{1}{2}$ (uniform prior) and perform one-step-further calculation of $\max$ operator, then we get
\begin{align*}
    J_1\left(\frac{1}{2}, \phi\right) & = \frac{1}{2} \cdot \frac{\cos^2 \phi + \cos^2 (\theta - \phi) + |\cos^2 \phi - \cos^2 (\theta - \phi)|}{2} \\ & + \frac{1}{2} \cdot \frac{\sin^2 \phi + \sin^2 (\theta - \phi) + |\sin^2 \phi - \sin^2 (\theta - \phi)|}{2} \\
    & = \frac{1}{2} + \frac{1}{4} \left(|\cos^2 \phi - \cos^2 (\theta - \phi)| + |\sin^2 \phi - \sin^2 (\theta - \phi)|\right).
\end{align*}
Observe
\begin{equation*}
    \sin^2 x - \sin^2 y = -(\cos^2 x - \cos^2 y)
\end{equation*}
thus in the perspective of absolute value, the both term is equal. Moreover, consider 
\begin{equation*}
    \cos^2 u - \cos^2 v = -\sin (u + v) \sin (u - v).
\end{equation*}
Then
\begin{align*}
    J_1\left(\frac{1}{2}, \phi\right) & = \frac{1}{2} + \frac{1}{4} \left(|\cos^2 \phi - \cos^2 (\theta - \phi)| + |\sin^2 \phi - \sin^2 (\theta - \phi)|\right) \\
    & = \frac{1}{2} + \frac{1}{2} |\cos^2 \phi - \cos^2 (\theta - \phi)| \\
    & = \frac{1}{2} + \frac{1}{2} \sin \theta |\sin (2\phi - \theta)|.
\end{align*}

Naturally, the immediate stopping value at belief $p = 1/2$ would be given as $\operatorname{StopVal}(1/2) = 1/2$. Given the previous $J_1$ denotes the expected payoff the agent could achieve when it performs one measurement and then declare optimally, the term 
\begin{equation*}
    \frac{1}{2} \sin \theta |\sin (2\phi - \theta)|
\end{equation*}
might be viewed as the possible gain from one measurement. Now we optimize measurement operator, namely through parameter $\phi$, as follows:
\begin{equation*}
    \phi^* := \operatorname{argmax}_{\phi \in \left[0, \pi\right)} |\sin(2\phi - \theta)|.
\end{equation*}
Simply, it would be given as
\begin{align*}
    2\phi^* - \theta & = \frac{\pi}{2} + k\pi \quad (k\text{ is integer.}) \\
    \phi^* & = \frac{\theta}{2} + \frac{\pi}{4} + \frac{k}{2} \pi \\
    & = \frac{\theta}{2} + \frac{\pi}{4} \quad \left(\operatorname{mod} \,\, \frac{1}{2}\pi\right).
\end{align*}
In that situation, the previous `measurement gain' term gets simplified as
\begin{equation*}
    \frac{1}{2} \sin \theta.
\end{equation*}
With the accompanying 1/2 term, $J^*_1$, which can be also interpreted as optimal success probability, would be written as
\begin{equation*}
    \frac{1}{2} + \frac{1}{2} \sin \theta.
\end{equation*}
Following the previous definitions of $\left|\psi_1\right>$ and $\left|\psi_2\right>$, 
\begin{equation*}
    |\left<\psi_1|\psi_1\right>|^2 = \cos^2 \theta
\end{equation*}
directly follows thus
\begin{equation*}
    \frac{1}{2} + \frac{1}{2} \sin \theta = \frac{1}{2} \left(1 + \sqrt{1 - |\left<\psi_1|\psi_2\right>|^2}\right)
\end{equation*}
which the right-hand-side term is actually well-known Helstrom Bound \cite{Helstrom1976QuantumDetection, BaeKwek2015} for discriminating two pure state with the uniform prior. This is a simple consistency check for the binary state case.

\subsection{Measurement Gain Function and Bellman Equations}
So far we have considered the case where measurement action causes no actual cost, which is quite unrealistic considering the expensiveness of current quantum devices \cite{TianEtAl2024}. Thus, from now on we will fix it by $c_{\text{meas}}$ and determine under which condition the agent might choose keep measuring as an optimum and how the actual Bellman equations are written in the finite horizon case.

Our claim of determining between stopping and keeping measuring is given in a following mathematical proposition.
\begin{proposition}
    In the binary state discrimination scenarios considered so far, the agent performs a measurement if and only if the following condition is met:
    \begin{equation*}
        \sup_\phi J_1(p, \phi) - \operatorname{StopVal}(p) > c_{\text{meas}}.
    \end{equation*}
\end{proposition}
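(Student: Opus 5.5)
The plan is to read the proposition as a statement about the last nonterminal stage $t=H-1$ of the Bellman recursion \eqref{eq:qsd-bellman-recursion}, specialized to the binary model with measurement family $\{E_o(\phi)\}$. At that stage the only continuation available is a single measurement followed by a forced declaration. The decision rule of Algorithm~\ref{alg:projected-planner} is: measure if and only if the best continuation value strictly exceeds $\mathrm{StopVal}$, since ties are resolved in favour of stopping. So the whole argument reduces to identifying the continuation value explicitly.

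First, I would substitute $V_H=\mathrm{StopVal}$ from \eqref{eq:terminal-value-qsd} into the continuation branch. For each $\phi$ this gives
\[
-c_{\mathrm{meas}}+\sum_{o\in\{0,1\}}\Pr(o\mid p,\phi)\,\max\{p'_o,1-p'_o\}.
\]
By Proposition~\ref{prop:one-step-reduction}, the sum equals $J_1(p,\phi)$. In the binary case this is the explicit expression
\[
\max\{p\cos^2\phi,(1-p)\cos^2(\theta-\phi)\}+\max\{p\sin^2\phi,(1-p)\sin^2(\theta-\phi)\}
\]
derived in Subsection~\ref{seq:Reward and Optimal One-step Payoff}. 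Outcomes with zero probability cause no difficulty, because they contribute $0$ both in the sum and in the cancelled form. Hence the best continuation value is
\[
-c_{\mathrm{meas}}+\sup_\phi J_1(p,\phi).
\]

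Second, I would compare the two branches. The agent measures exactly when $-c_{\mathrm{meas}}+\sup_\phi J_1(p,\phi)>\mathrm{StopVal}(p)$, which rearranges to the stated condition. With the tie convention, the agent stops when equality holds.

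I would also check that the supremum is attained. The map $\phi\mapsto J_1(p,\phi)$ is continuous, being a maximum of continuous trigonometric functions, and it is $\pi$-periodic, so it attains its maximum on $[0,\pi]$. There is therefore a genuine optimal measurement $\phi^\ast(p)$ realizing the strict comparison. This also guarantees that the stated inequality and the one with a maximum in place of the supremum coincide, and that the finite-library recursion \eqref{eq:finite-library-recursion} behaves the same way in the limit.

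The main obstacle is scope. For a horizon with more than one remaining step, the continuation uses $V_{t+1}\ge\mathrm{StopVal}$ rather than $\mathrm{StopVal}$. The true criterion is then $\sup_\phi\bigl(-c_{\mathrm{meas}}+\sum_o\Pr(o\mid p,\phi)V_{t+1}(p'_o)\bigr)>\mathrm{StopVal}(p)$, and the one-step gain condition is only sufficient for measuring, not necessary. I would therefore state explicitly that the proposition is the exact one-step look-ahead (stage $H-1$, or equivalently $H=1$) stopping rule. I would add a remark that, for earlier stages, the ``if'' direction persists because $V_{t+1}\ge V_H$ pointwise, which follows by induction from the $\max$ with $\mathrm{StopVal}$ in the recursion. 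The ``only if'' direction would need a monotonicity or one-step-lookahead argument not available in general.
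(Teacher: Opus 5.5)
Your proposal is correct and follows the paper's own one-line argument, which declares the claim ``trivial'' from the definitions. At the last nonterminal stage the continuation branch equals $-c_{\text{meas}}+\sup_\phi J_1(p,\phi)$ by Proposition~\ref{prop:one-step-reduction}, and comparing it with $\operatorname{StopVal}(p)$, with ties going to stopping as in Algorithm~\ref{alg:projected-planner}, gives exactly the stated criterion.

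Your scope caveat is a genuine sharpening that the paper leaves implicit. Your claim that only the ``if'' direction survives at earlier stages is right even in this binary model. Take $\theta=\pi/4$, $p=0.999$, and $c_{\text{meas}}$ equal to the one-step gain. A slightly tilted near-unambiguous measurement routes a rare branch to $p'\approx 1/2$, where further measurements resolve it. This strictly beats stopping once ten or more stages remain.
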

\begin{proof}
    It is trivial considering the definition of $J^*_1$, $\operatorname{StopVal}$, and $c_{\text{meas}}$.
\end{proof}
We are going to define the left-hand-side as $G(p)$, which would denote the one-step information gain. This function has following three notable properties.
\begin{proposition}
    The function $G(p)$ defined above has the following three properties.
    \begin{itemize}
        \item $G(p) \geq 0$, hence it is non-negative.
        \item $G(p) = G(1 - p)$, hence it is symmetric with $p = 1/2$.
        \item $\lim_{p \rightarrow 0+} G(p) = \lim_{p \rightarrow 1-} G(p) = 0$, hence the additional gain of measurement at the very end of the one-dimensional simplex $\left[0, 1\right]$ diminishes to zero.
    \end{itemize}
\end{proposition}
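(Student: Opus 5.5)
Here $G(p)$ denotes the one-step information gain $G(p)=\sup_\phi J_1(p,\phi)-\mathrm{StopVal}(p)$, with $J_1$ the explicit binary formula of Section~\ref{seq:Reward and Optimal One-step Payoff}. I will prove the three properties in order.

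\emph{Nonnegativity.} This is immediate from the pointwise inequality $J_1(p,\phi)\ge \max\{p,1-p\}=\mathrm{StopVal}(p)$, already established for every $\phi$ in Section~\ref{seq:Reward and Optimal One-step Payoff}. Taking the supremum over $\phi$ preserves the inequality, so $G(p)\ge 0$. As a check, the choice $\phi=\theta/2$ attains equality, so the supremum is also finite, being bounded by $\sum_o\sum_i b(i)\ell_i=1$.

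\emph{Symmetry.} I will exhibit a reparametrization of $\phi$ that swaps the roles of the two hypotheses. Set $\phi'=\theta-\phi$. Then $\cos^2\phi'=\cos^2(\theta-\phi)$ and $\cos^2(\theta-\phi')=\cos^2\phi$, and the same holds for the sines. Hence
\[
J_1(1-p,\theta-\phi)=\max\{(1-p)\cos^2(\theta-\phi),\,p\cos^2\phi\}+\max\{(1-p)\sin^2(\theta-\phi),\,p\sin^2\phi\}=J_1(p,\phi).
\]
The map $\phi\mapsto\theta-\phi$ is a bijection of the parameter range. We may take $\phi$ modulo $\pi$, since $J_1$ depends only on squared trigonometric functions. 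So the suprema agree, and $\sup_\phi J_1(1-p,\cdot)=\sup_\phi J_1(p,\cdot)$. Together with $\mathrm{StopVal}(1-p)=\mathrm{StopVal}(p)$, this gives $G(1-p)=G(p)$.

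\emph{Boundary limits.} By symmetry it suffices to treat $p\to0^+$. For $p<1/2$ we have $\mathrm{StopVal}(p)=1-p$. I use $\max\{x,y\}\le x+y$ for nonnegative arguments, applied only to bound the excess:
\[
J_1(p,\phi)\le (1-p)\bigl(\cos^2(\theta-\phi)+\sin^2(\theta-\phi)\bigr)+p(\cos^2\phi+\sin^2\phi)=1,
\]
which alone only gives $G(p)\le p$. That already suffices, since $0\le G(p)\le 1-(1-p)=p\to0$. More simply, any one-step success probability is at most $1$, and $\mathrm{StopVal}(p)\ge 1-p$, so $0\le G(p)\le p$. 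Combined with nonnegativity, squeezing gives $\lim_{p\to0^+}G(p)=0$, and symmetry gives the limit at $1^-$.

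The main point requiring care is the symmetry step, namely verifying that $\phi\mapsto\theta-\phi$ maps the admissible measurement family onto itself. The other two steps are one-line bounds.
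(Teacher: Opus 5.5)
Your proof is correct and follows the same three-part structure as the paper's. Nonnegativity comes from the pointwise bound $J_1(p,\phi)\ge\mathrm{StopVal}(p)$, symmetry from swapping the roles of the two hypotheses, and the boundary limits from the squeeze $\mathrm{StopVal}(p)\le\sup_\phi J_1(p,\phi)\le 1$. Your explicit substitution $\phi\mapsto\theta-\phi$, together with $\pi$-periodicity, makes precise the paper's informal relabeling argument, and your bound $0\le G(p)\le\min\{p,1-p\}$ makes quantitative the paper's claim that $J_1(p,\phi)\to 1$.
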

\begin{proof}
    For the first property, refer to
    \begin{equation*}
        \sup_\phi J_1(p, \phi) \geq \operatorname{StopVal}(p),
    \end{equation*}
    which was established in the Section~\ref{seq:Reward and Optimal One-step Payoff}.

    For the second one, consider assigning $p$ to $\psi_1$ was purely arbitrary, bounded by our choice. Thus even if we give $p'$ to $\psi_2$ and $1 - p'$ to $\psi_1$, it does not matter at all. Afterwards, if one follows the detailed expansions $J_1$ and $\operatorname{StopVal}$ which were covered in the sections above, one might readily verify this claim.

    For the third one, consider the case where $p$ goes to 1, for instance. It operationally signifies that the agent believes the hidden state is actually state 1 with almost maximum confidence, making $\operatorname{StopVal}(p) = p \rightarrow 1$. Also note that due to the very definition of $J_1(p, \phi)$, it can never go beyond the bound 1. Then following the previous definition and viewing the asymptotic situation $p \rightarrow 1$, it is easily seen that $J_1(p, \phi) \rightarrow 1$.
\end{proof}
Also, remark that at $p = 1/2$, $G(p) = J^*_1(1/2) - 1/2 = 1/2 \sin \theta > 0$. \\

All of these properties constitute the only characteristics of $G(p)$ that can be rigorous proven or shown at this stage. Ultimately, this confirms our earlier intuition: the gain from measurement is maximized near $p = 1/2$, where belief uncertainty is at its peak. Furthermore, the symmetry between favoring State 1 and State 2 is clearly maintained, while the additional one-step payoff vanishes at the boundaries ($p \rightarrow 0, 1$) where the state becomes nearly certain. Rigorous derivation of further properties would serve as an excellent direction for future research. 

Generally, for functions exhibiting such behavior, one might expect a concave-like structure; in such cases, the decision boundary between measurement and termination would likely manifest as a single interval determined by $c_{\text{meas}}$. Yet, the gain function may take a more intricate form, in which case the measurement region would be represented as a union of multiple disjoint intervals.

Finally, let us explicitly write the Bellman recursive equations for $H = 2$ case. For simplicity, introduce much simpler notation $S(p) := \operatorname{StopVal}(p)$. Then
\begin{align*}
    & V_2(p) = S(p), \\
    & V_1(p) = \max \{S(p), \sup_\phi\left[-c_{\text{meas}} + \Pr(0 | p, \phi) S(p'_0) + \Pr(1 | p, \phi) S(p'_1)\right]\}, \\
    & V_0(p) = \max \{S(p), \sup_\phi \left[-c_{\text{meas}} + \Pr(0 | p, \phi) V_1(p'_0) + \Pr(1 | p, \phi) V_1(p'_1)\right]\}.
\end{align*}

In $V_1(p)$, as only a single measurement opportunity remains, the agent evaluates the trade-off between immediate termination with an optimal decision and performing one final measurement prior to termination. In contrast, $V_0(p)$ embodies a fundamental control-theoretic perspective aligned with the philosophy of \emph{sequential posterior movement}. Here, the agent considers how the current measurement should steer the posterior distribution to maximize the future expected value, namely $V_1$.

\section{Numerical Simulation: Trine State Discrimination Case}
Having already examined the binary quantum state discrimnation case as a working example of the general framework, we now turn to a concrete trine state discrimination example in order to illustrate how the same POMDP-based sequential QSD formulation operates in a genuinely non-binary setting. The binary case is useful for displaying, in the simplest possible geometry, the basic mechancis of belief updating, stopping, and consistency with conventional one-step discrimination. The trine case, yet, leads naturally to a richer belief space geometry: because the hidden hypothesis now ranges over three candidate states, the posterior evolves on a two-dimensional simplex rather than on a one-dimensional interval. As a result, the value of additional measurements, the routing of posterior beliefs, and the resulting stop/continue structure becomes substantially more informative.

The purpose of the present section is not to present the entire numerical pipeline in exhaustive detail, but rather to isolate the aspects of the trine example that are most directly relevant to the theory developed earlier. More specifically, we first describe the trine ensemble and the associated belief-simplex geometry, then examine the one-step measurement gain structure, and next discuss representative posterior-routing behavior that clarifies the sequential interpretation of the model. Finally, we present a compact finite-horizon Bellman analysis showing how nontrivial continuation and stopping regions arise in this setting, together with a brief robustness check under discretization refinement. In this way, the trine case serves as a concise validation example showing that the static-hidden-state POMDP formulation and the projected planner remain meaningful in a genuinely richer multi-state discrimination problem.

\subsection{Trine Ensemble and Belief-Simplex Geometry}

We consider a trine ensemble consisting of three symmetrically arranged candidate states, residing on the xy-plane of the Bloch sphere,
indexed by $i \in \{1,2,3\}$, with phase labels
\[
\phi_1 = 0, \qquad \phi_2 = \frac{2\pi}{3}, \qquad \phi_3 = \frac{4\pi}{3}.
\]
In contrast with the binary case, the hidden hypothesis now takes three possible values,
so the posterior belief is no longer described by a one-dimensional interval.
Rather, the belief state is
\[
b = (b_1,b_2,b_3) \in \Delta_3
:= \Bigl\{ b \in \mathbb{R}_{\ge 0}^3 : b_1+b_2+b_3 = 1 \Bigr\},
\]
that is, a point on the two-dimensional probability simplex.
Accordingly, sequential belief updating in the trine case is naturally interpreted as motion
on a triangular region rather than on a line segment.

For numerical computation, we represent this simplex by a finite lattice of barycentric belief
states. More precisely, for a grid resolution parameter $N \in \mathbb{N}$, we use the set
\[
B_N
=
\left\{
\left(\frac{i}{N},\frac{j}{N},\frac{k}{N}\right)
:
i,j,k \in \mathbb{Z}_{\ge 0},\;
i+j+k=N
\right\}.
\]
Thus each grid point corresponds to a discrete barycentric triple, and the entire simplex is
approximated by a triangular mesh.
To visualize the belief geometry in the plane, we use the standard embedding
\[
x = b_2 + \frac{1}{2} b_3,
\qquad
y = \frac{\sqrt{3}}{2} b_3,
\]
which maps $\Delta_3$ bijectively onto an equilateral triangle.
Under this identification, the vertices correspond to certainty beliefs
$(1,0,0)$, $(0,1,0)$, and $(0,0,1)$, while the center
$\bigl(\frac13,\frac13,\frac13\bigr)$ represents the maximally symmetric prior.

Measurement actions are parameterized by a single angle
\[
\alpha \in \left[0,\frac{2\pi}{3}\right),
\]
which is discretized in computation by a finite measurement library.
For each outcome $o \in \{1,2,3\}$ and hypothesis $i \in \{1,2,3\}$,
the corresponding trine likelihood is written as
\[
\ell_i(\alpha,o)
=
\frac{1}{3}
\Bigl(
1+\cos\bigl(\phi_i-\alpha-\phi_o\bigr)
\Bigr).
\]
The posterior belief after observing outcome $o$ is then obtained by the corresponding Bayesian
normalization,
\[
\tau(b,\alpha,o)(i)
=
\frac{b_i \, \ell_i(\alpha,o)}
{\sum_{j=1}^3 b_j \, \ell_j(\alpha,o)}.
\]
All of these parameterization preserves the cyclic symmetry of the trine ensemble while yielding a
compact measurement family suitable for projected dynamic programming.

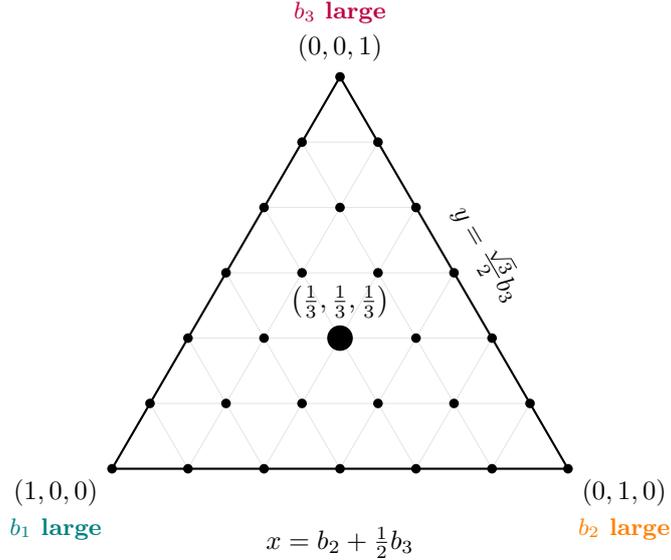
\begin{figure}[t]
\centering
\begin{tikzpicture}[scale=6, line join=round, line cap=round]

  % 1. 꼭짓점 좌표 정의
  \coordinate (v1) at (0,0);
  \coordinate (v2) at (1,0);
  \coordinate (v3) at (0.5,0.866);

  % 2. 내부 격자선 (매우 연한 회색으로 배경화)
  \foreach \i in {1,...,5} {
    \draw[gray!20, thin] ({0.5*\i/6},{0.866*\i/6}) -- ({1-0.5*\i/6},{0.866*\i/6});
    \draw[gray!20, thin] ({\i/6},0) -- ({0.5+0.5*\i/6},{0.866*(1-\i/6)});
    \draw[gray!20, thin] ({1-\i/6},0) -- ({0.5-0.5*\i/6},{0.866*(1-\i/6)});
  }

  % 3. 외곽 삼각형 (격자선보다 위로 오도록 나중에 그림)
  \draw[thick] (v1) -- (v2) -- (v3) -- cycle;

  % 4. 격자점 (크기를 줄여 깔끔하게 유지)
  \foreach \i in {0,...,6} {
    \foreach \j in {0,...,6} {
      \pgfmathtruncatemacro{\k}{6-\i-\j}
      \ifnum\k>-1
        \pgfmathsetmacro{\px}{\j/6 + 0.5*\k/6}
        \pgfmathsetmacro{\py}{0.866*\k/6}
        \fill[black] (\px,\py) circle (0.3pt);
      \fi
    }
  }

  % 5. 중앙 점 및 라벨 (겹치지 않게 처리)
  \fill[black] (0.5, 0.2887) circle (0.8pt);
  \node[above, yshift=4pt, font=\small] at (0.5, 0.2887) {$\left(\frac13, \frac13, \frac13\right)$};

  % 6. 꼭짓점 좌표 및 'Large' 라벨 (밖으로 배치)
  % b1 (왼쪽 아래)
  \node[below left, align=center] at (v1) {
    \small $(1,0,0)$ \\ \footnotesize \textcolor{teal}{\textbf{$b_1$ large}}
  };
  % b2 (오른쪽 아래)
  \node[below right, align=center] at (v2) {
    \small $(0,1,0)$ \\ \footnotesize \textcolor{orange}{\textbf{$b_2$ large}}
  };
  % b3 (위쪽)
  \node[above, align=center, yshift=2pt] at (v3) {
    \footnotesize \textcolor{purple}{\textbf{$b_3$ large}} \\ \small $(0,0,1)$
  };

  % 7. x축 좌표 설명 (아래쪽 중앙)
  \node[below, yshift=-18pt] at (0.5,0) {\small $x = b_2 + \frac12 b_3$};

  % 8. y축 좌표 설명 (빗변과 평행하게 바깥쪽 배치)
  % (0.5, 0.866)에서 (1, 0)으로 이어지는 빗변의 바깥쪽 0.08 만큼 떨어진 지점
  \node[rotate=-60, anchor=south, yshift=4pt] at (0.75, 0.433) {\small $y = \frac{\sqrt{3}}{2} b_3$};

\end{tikzpicture}
\caption{Belief-simplex geometry. Each component of the 3-tuple denotes the belief in a specific state, with the center representing maximal uncertainty (indifference) and the vertices representing certainty. The original 3D probability simplex is mapped onto a 2D equilateral triangle via the transformation shown in the figure.}
\label{fig:trine_simplex_grid_refined}
\end{figure}

\subsection{One-Step Measurement Gain on the Trine Simplex}
We now examine how the one-step optimal measurement value varies over the trine belief simplex introduced in Section~8.1. For each belief state \(b \in \Delta_3\), we optimize over the measurement orientation \(\alpha \in [0,2\pi/3)\) and define the corresponding one-step value by
\[
J_1^*(b)
:=
\sup_{\alpha \in [0,2\pi/3)}
\sum_{o\in\{0,1,2\}}
\max_{i\in\{1,2,3\}}
\bigl[
b_i\,\ell_i(\alpha,o)
\bigr].
\]
We compare this quantity with the immediate stopping value
\[
S(b):=\max\{b_1,b_2,b_3\},
\]
and define the one-step measurement gain by
\[
G(b):=J_1^*(b)-S(b).
\]
Thus $J^*_1(b)$ is the best success probability achievable after one additional measurement, while $G(b) = J^*_1(b) - S(b)$ measures how much that extra measurement improves the value upon immediate stopping.

\begin{figure}[t]
    \centering
    \includegraphics[width=0.82\textwidth]{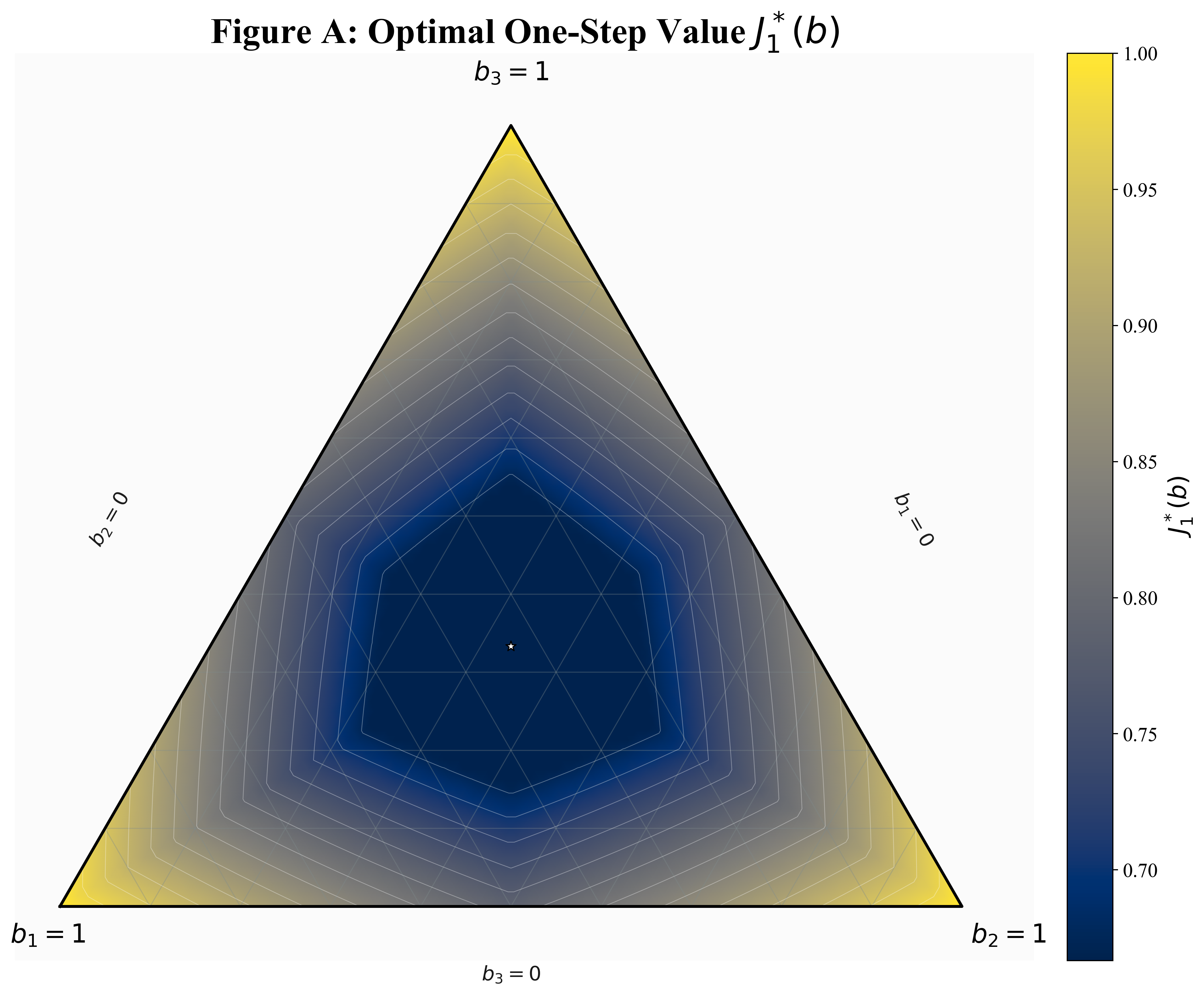}
    \caption{
    Optimal one-step value \(J_1^*(b)\) on the trine belief simplex. The value is highest near the vertices, where one hypothesis already dominates the others, and lowest near the symmetric center \(b=(1/3,1/3,1/3)\), where the three hypotheses are uniformly balanced.
    }
    \label{fig:trine-j1star}
\end{figure}

Figure~\ref{fig:trine-j1star} shows the spatial structure of the optimal one-step value \(J_1^*(b)\) on the simplex. The map is largest near the vertices and smallest near the symmetric center \(b=(1/3,1/3,1/3)\). This pattern is natural. Near a vertex, one hypothesis already dominates the belief state, so the overall one-step success probability is high. By contrast, near the center, the discrimination task is intrinsically harder because the three candidate states are nearly balanced, and the resulting one-step success probability is correspondingly smaller. The map also displays the expected threefold rotational symmetry inherited from the trine ensemble.

However, the direct one-step value \(J_1^*(b)\) does not by itself indicate where measurement is most useful. For that purpose, we compare it with the stopping benchmark \(S(b)\).

\begin{figure}[t]
    \centering
    \includegraphics[width=0.82\textwidth]{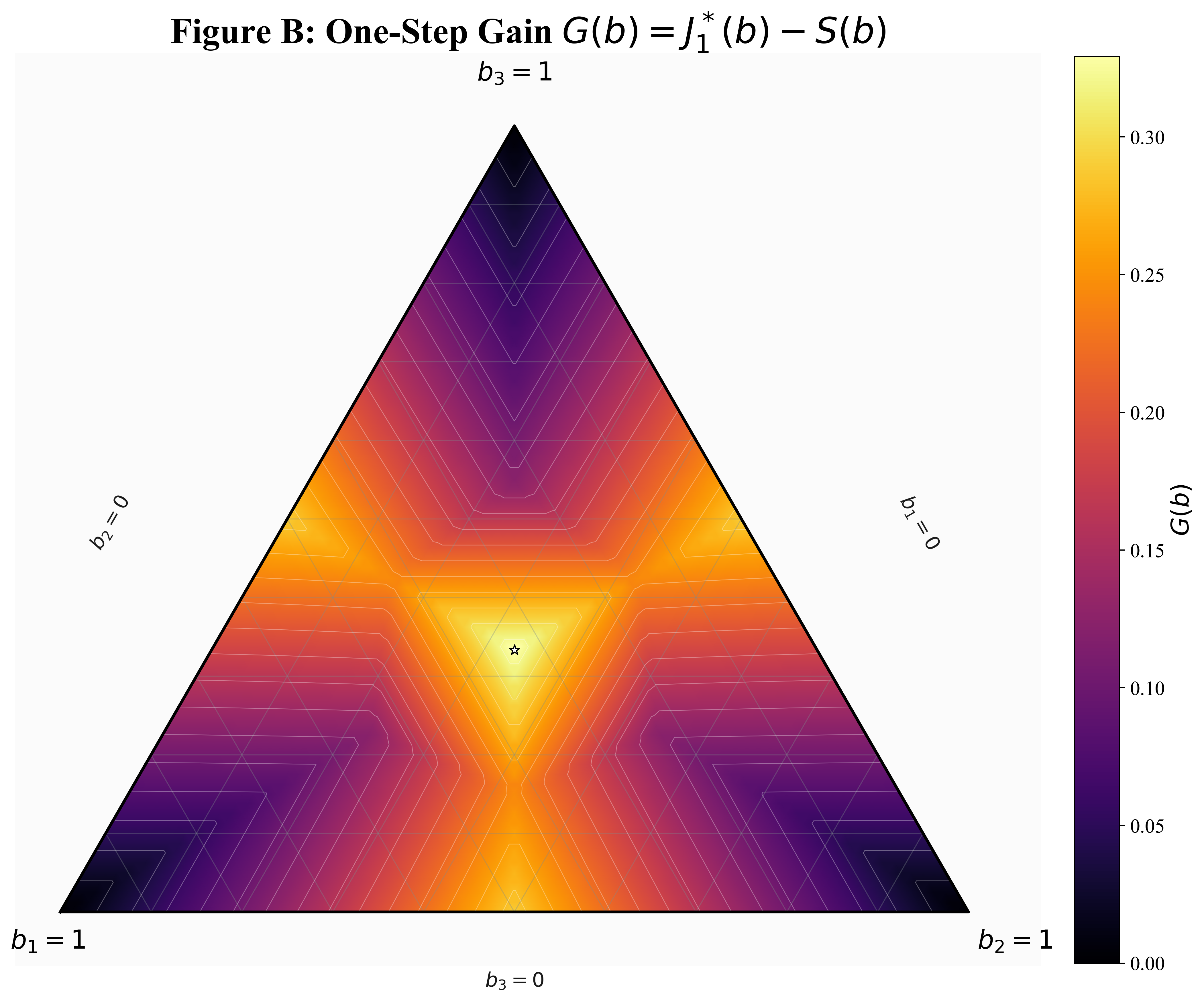}
    \caption{
    One-step gain \(G(b)=J_1^*(b)-S(b)\) on the trine simplex. The gain is largest in the central high-uncertainty region and decreases toward the vertices, where immediate stopping is already nearly optimal.
    }
    \label{fig:trine-gain}
\end{figure}

Figure~\ref{fig:trine-gain} displays the one-step gain map \(G(b)=J_1^*(b)-S(b)\), which measures the advantage of making one additional optimized measurement rather than stopping immediately. In sharp contrast with Figure~\ref{fig:trine-j1star}, the gain is largest in the central high-uncertainty region and decreases toward the vertices, where it approaches zero. This means that one further measurement is most valuable precisely when the current posterior is most ambiguous. Conversely, when the belief is already concentrated near a single hypothesis, further measurement offers little improvement over simply declaring that hypothesis immediately.

This contrast between \(J_1^*(b)\) and \(G(b)\) is conceptually important. The quantity \(J_1^*(b)\) is an absolute success-probability landscape, whereas \(G(b)\) is a marginal value-of-information landscape. On the trine simplex, these two maps exhibit qualitatively different spatial behavior: regions with high absolute one-step success need not be the regions in which measurement is especially worthwhile. In particular, the center of the simplex is not where one-step discrimination is easiest, but where one additional measurement is most worth taking. For sequential control, the gain map is therefore the more operationally relevant object, since it directly indicates where continuation can outperform immediate stopping.

\begin{figure}[t]
    \centering
    \includegraphics[width=0.82\textwidth]{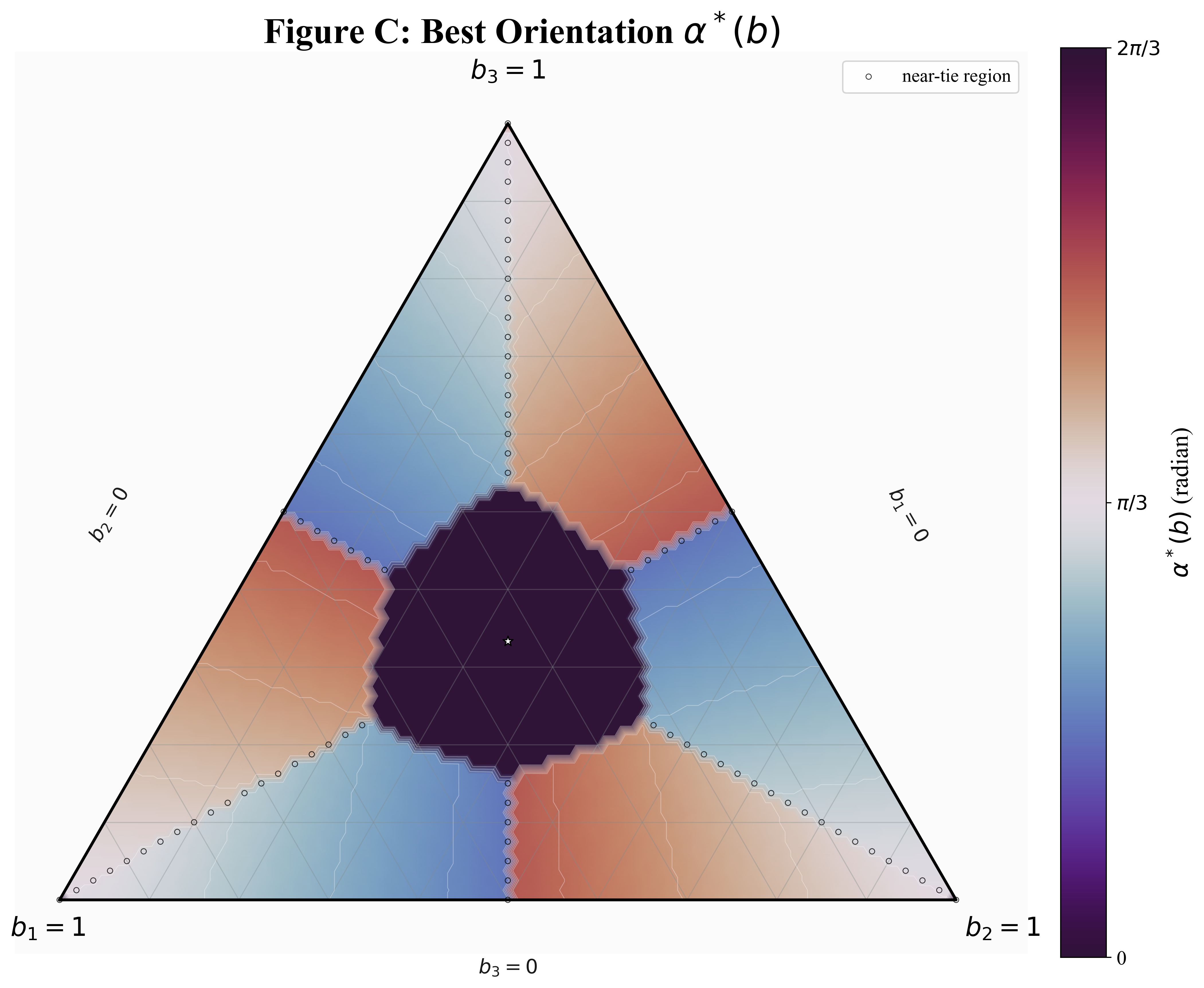}
    \caption{
    Best orientation \(\alpha^*(b)\) on the trine simplex. The maximizing orientation is arranged into symmetry-related sectors, with near-tie boundaries separating regions in which different measurement orientations are almost equally optimal.
    }
    \label{fig:trine-alpha-star}
\end{figure}

Figure~\ref{fig:trine-alpha-star} further shows the maximizing measurement orientation
\[
\alpha^*(b)
\in
\arg\max_{\alpha\in[0,2\pi/3)}
\sum_{o\in\{0,1,2\}}
\max_{i\in\{1,2,3\}}
\bigl[
b_i\,\ell_i(\alpha,o)
\bigr].
\]
The optimal orientation is organized into symmetry-related sectors across the simplex, rather than varying in an unstructured way. Near the center, one finds a broad region in which a common orientation remains optimal, while away from the center the preferred orientation changes according to which hypothesis is relatively favored by the current belief. The boundaries between these sectors are near-tie regions in which multiple orientations yield nearly identical one-step values. Thus the one-step optimization induces not only a scalar performance landscape but also a structured measurement-selection rule reflecting the geometry of the trine ensemble.

Taken together, Figures~\ref{fig:trine-j1star}--\ref{fig:trine-alpha-star} provide a compact one-step picture of the sequential trine discrimination problem. Figure~\ref{fig:trine-j1star} identifies where one-step performance is intrinsically high or low, Figure~\ref{fig:trine-gain} shows where an additional measurement is actually worth taking, and Figure~\ref{fig:trine-alpha-star} indicates which measurement orientation realizes that improvement. This prepares the ground for the next subsection, where we move beyond static maps and examine how representative posterior states are routed across the simplex after a single optimized measurement.

\subsection{Representative Posterior Routing and Sequential Interpretation}
We now move from the static one-step maps of Section~8.2 to the actual posterior routing induced by a single optimized measurement. For each representative belief state \(b\), we take the maximizing orientation \(\alpha^*(b)\) obtained in the previous subsection and compute the outcome-conditioned posterior states
\[
\tau\bigl(b,\alpha^*(b),o\bigr), \qquad o\in\{0,1,2\}.
\]
This produces a branching picture on the trine simplex: from a single starting belief, one optimized measurement generates three possible posterior destinations according to three possible outcomes, each weighted by its Born probability. The resulting routing geometry gives a more concrete interpretation of the innate `sequential' structure, because it shows how the posterior mass is redistributed across the simplex after the observation is revealed.

To summarize this structure, we examine five representative initial beliefs chosen from qualitatively different regions of the simplex: a nearly symmetric central belief, a quasi-binary edge regime, a near-certainty regime, a generic asymmetric interior regime, and an off-center interior point close to a switching boundary. These cases are not introduced as isolated examples, but rather as a compact qualitative atlas of the one-step posterior dynamics on the trine simplex. Enlarged case-by-case views and numerical diagnostics are deferred to the Appendix~\ref{sec:appendix-routing}.

\begin{figure}[t]
    \centering
    \includegraphics[width=\textwidth]{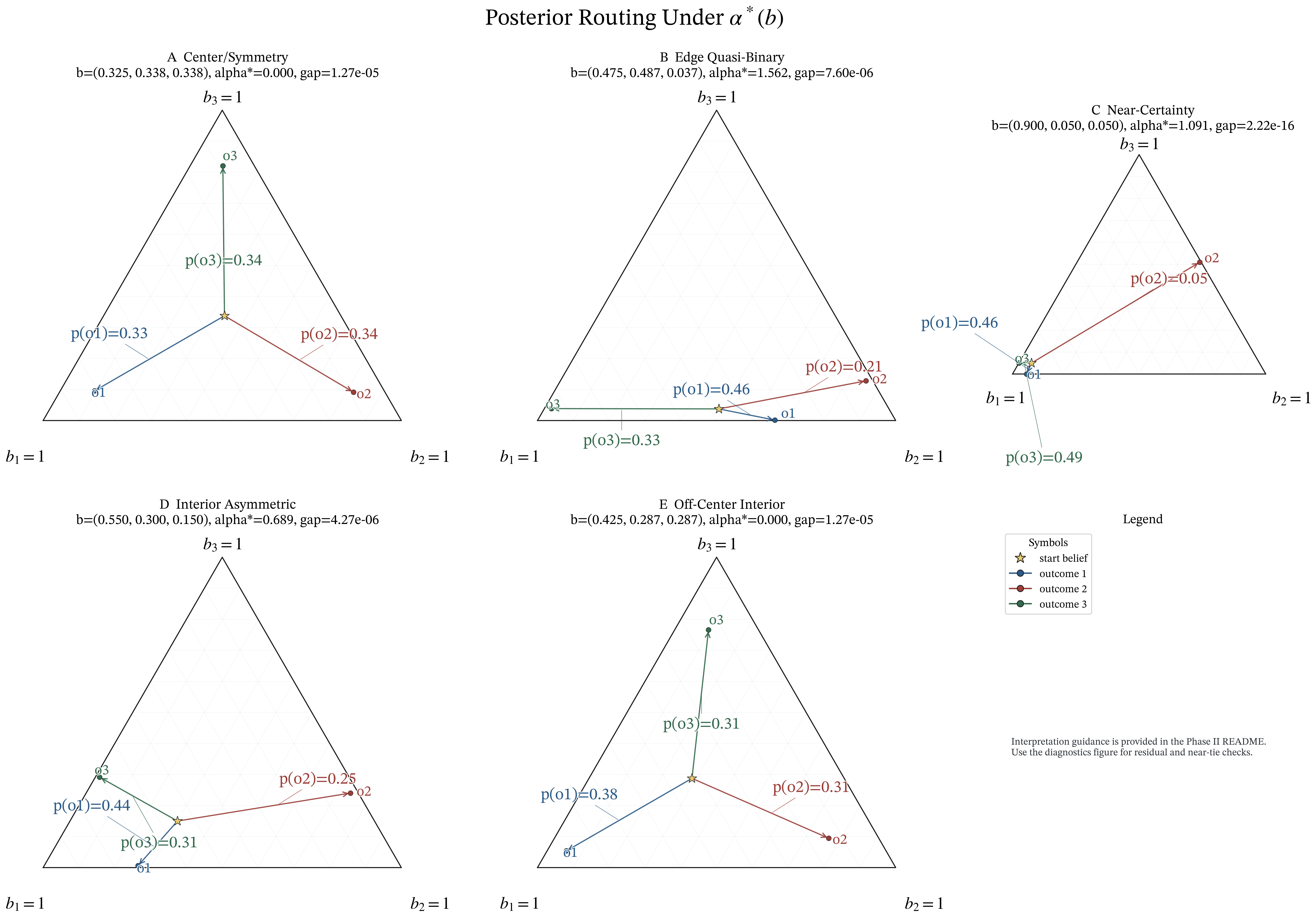}
    \caption{
    Posterior routing on the trine simplex for five representative belief states. In each panel, the star marks the starting belief, the arrows indicate the outcome-conditioned Bayesian updates under the optimized one-step measurement, and the terminal markers show the posterior landing points. The displayed branch probabilities indicate how likely each route is under the selected measurement orientation.
    }
    \label{fig:trine-routing-overview}
\end{figure}

Figure~\ref{fig:trine-routing-overview} summarizes the posterior-routing behavior for the five representative cases. The figure shows that one-step optimization is not merely a scalar comparison between stopping and measuring; it also induces a structured branching rule that depends strongly on the location of the current belief. In particular, different regions of the simplex generate qualitatively different posterior-routing patterns even under the same basic trine state discrimination situation.

Case A represents the nearly symmetric central regime. Here the three branch probabilities are close to balanced, and the three posterior landing points are routed toward different hypothesis-dominant regions in a nearly symmetric manner. This is the dynamic counterpart of the large gain observed near the center in Section~8.2. When the prior belief is highly ambiguous, one optimized measurement can meaningfully separate the three hypotheses by sending the posterior toward distinct sectors of the simplex. Thus the central high-gain region is not merely a static feature of the gain map; it corresponds dynamically to a balanced three-way information split after a single measurement.

Case B lies close to an edge of the simplex, where one hypothesis already has very small weight and the problem behaves almost like a binary discrimination task. In this quasi-binary regime, the routing remains concentrated near the corresponding edge rather than exploring the full two-dimensional interior. The measurement therefore acts primarily to separate the two leading hypotheses, with the third hypothesis remaining only weakly relevant throughout the update. This case makes precise the intuition that the trine problem continuously interpolates between genuinely ternary behavior in the interior and effectively binary behavior near the boundary.

Case C illustrates a near-certainty regime. The initial belief is already strongly concentrated on one hypothesis, and the optimized measurement therefore produces only a small incremental gain. Dynamically, this appears as a routing pattern in which the most likely branches keep the posterior close to the already dominant hypothesis, while only a low-probability branch produces a more substantial corrective shift. This behavior explains why the gain \(G(b)\) becomes small near the vertices even though the absolute one-step value \(J_1^*(b)\) itself remains high there. In other words, high absolute success probability near a certainty vertex does not imply that an additional measurement is especially valuable.

Cases D and E illustrate more delicate interior behavior. Case D is a generic asymmetric interior point, where the prior already favors one hypothesis but not overwhelmingly. In this regime the three outcome-conditioned branches no longer display full symmetry: some branches reinforce the currently favored hypothesis, whereas others provide a sharper shift toward a competing hypothesis. This shows that away from the symmetric center, the routing geometry begins to encode not only uncertainty level but also directional asymmetry in the current posterior.

Case E, by contrast, lies off-center but close to a switching region, so that the optimized orientation is more sensitive to small perturbations of the belief. The routing remains structured, but the associated measurement choice is closer to a boundary. This case is therefore useful for interpreting the transition regions already suggested by the optimal-orientation map in Section~8.2: near such regions, the posterior dynamics remain stable in the view of Bayesian updating, while the identity of the maximizing measurement can become comparatively delicate.

Taken together, these routing patterns complement the value and gain maps from Section~8.2 in an essential way. The one-step gain map indicates where measurement is worth taking, while the routing plot shows what that measurement actually does to the posterior once an outcome is observed. In particular, the central region is seen to support balanced three-way information splitting, edge regions collapse toward quasi-binary behavior, and near-certainty regions exhibit mostly self-reinforcing updates with only occasional corrective branches. These distinctions are already sufficient to reveal the sequential meaning of the trine POMDP: measurement is useful not only because it improves the expected success probability, but because it selectively transports posterior mass into different decision-relevant regions of the simplex.

This routing viewpoint also prepares the transition to the finite-horizon analysis below. Once one understands how a single optimized measurement redistributes belief mass over the simplex, the Bellman recursion may be interpreted as repeatedly composing such routed posterior updates with a stopping comparison at later stages. In this sense, the posterior-routing picture provides the most concrete bridge between the static one-step geometry of Section~8.2 and the genuinely sequential finite-horizon structure considered next.

\subsection{Finite-Horizon Bellman Structure and Compact Robustness Check}
To conclude the trine example, we briefly examine how the one-step geometry from Sections~8.2--8.3 is reorganized by a finite-horizon Bellman recursion, and we also add a compact numerical robustness check. Additional value maps, action maps, and supplementary robustness comparisons are collected in Appendix~\ref{sec:appendixA_trine_finite_horizon}.

For the horizon-two computation, let
\[
S(b):=\max\{b_1,b_2,b_3\}
\]
denote the immediate stopping value on the trine simplex, and write
\[
D_1(b):=V_1(b)-S(b),
\qquad
D_0(b):=V_0(b)-V_1(b).
\]
Here \(D_1(b)\) measures the continuation advantage at the last measurement stage, whereas \(D_0(b)\) measures the additional value of acting one stage before while still taking the next-stage optimization into account. Thus \(D_1\) is the natural finite-horizon analogue of the one-step gain map, while \(D_0\) reveals the genuinely sequential effect of the Bellman recursion.

\begin{figure}[t]
    \centering
    \includegraphics[width=0.49\textwidth]{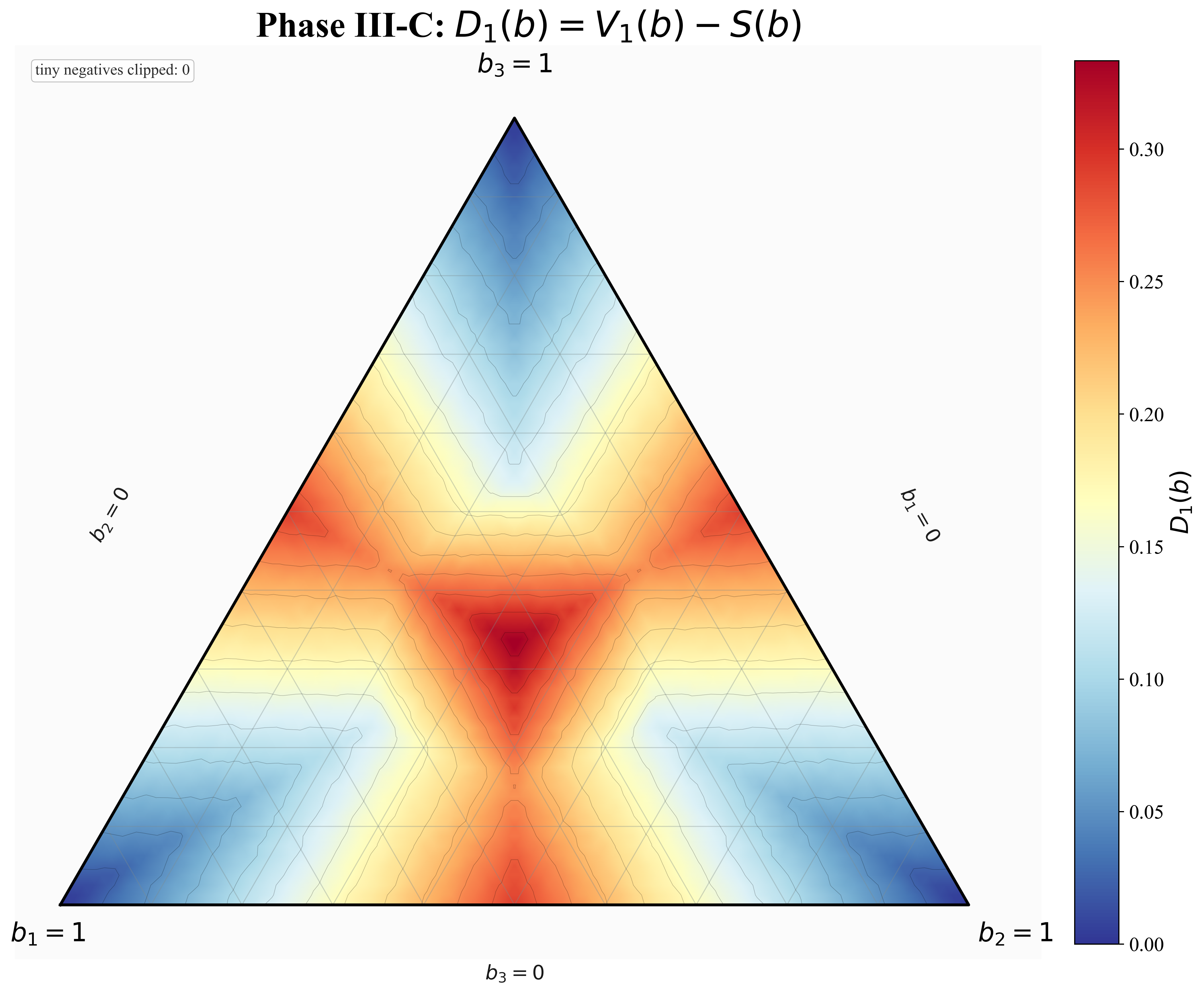}
    \hfill
    \includegraphics[width=0.49\textwidth]{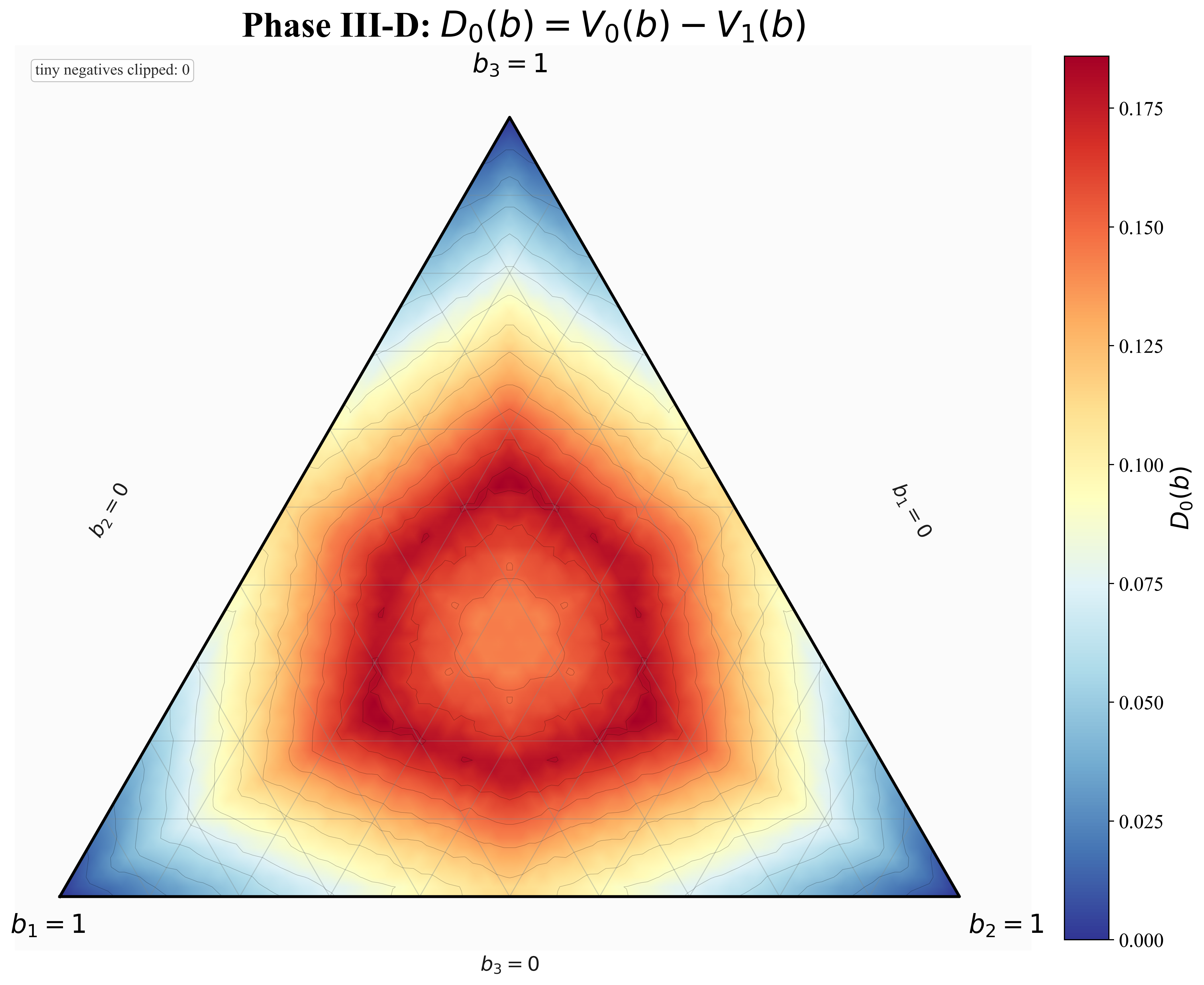}
    \caption{
    Finite-horizon continuation-value difference maps on the trine simplex. Left: \(D_1(b)=V_1(b)-S(b)\), the continuation advantage at the final measurement stage. Right: \(D_0(b)=V_0(b)-V_1(b)\), the additional value of acting one stage earlier while accounting for the next-step Bellman optimization.
    }
    \label{fig:trine-finite-horizon-D}
\end{figure}

Figure~\ref{fig:trine-finite-horizon-D} summarizes the finite-horizon Bellman structure in the most compact form. The left panel shows \(D_1\), which indicates where one further measurement is preferable to immediate stopping at the final nonterminal stage. Its geometry is closely related to the one-step gain structure seen in Section~8.2: continuation is most valuable in the more ambiguous interior region and becomes less useful as the belief approaches a certainty vertex. The right panel shows \(D_0\), which incorporates the effect of one additional Bellman step. This map is not merely a copy of \(D_1\); rather, the continuation-advantage landscape is reorganized by the possibility of future optimized action. In this sense, the comparison between \(D_1\) and \(D_0\) gives a compact visual summary of how the finite-horizon recursion reshapes the stop--measure tradeoff over time.

This distinction is the main sequential message of the horizon-two experiment. A one-step gain map tells us whether a measurement is worthwhile when only one measurement opportunity remains. By contrast, \(D_0\) reflects the fact that the current measurement changes not only the immediate posterior but also the set of future decision options available after that posterior update. The Bellman recursion therefore does not simply repeat the one-step geometry; it reweights and redistributes the continuation incentive according to the remaining horizon.

\begin{figure}[t]
    \centering
    \includegraphics[width=0.82\textwidth]{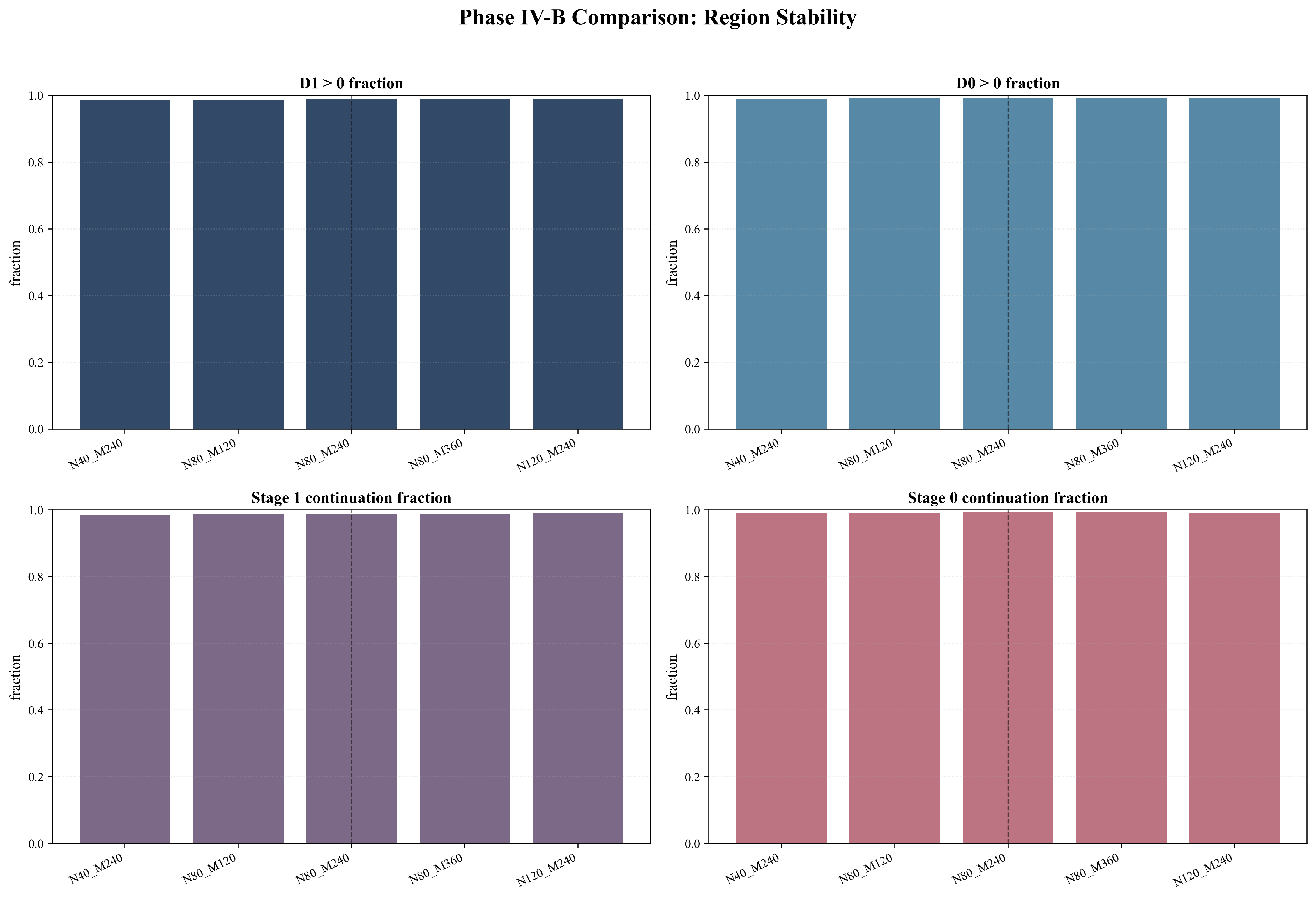}
    \caption{
    Compact robustness check for the finite-horizon continuation regions under discretization refinement. The overall continuation-region fractions remain qualitatively stable across the compared numerical settings.
    }
    \label{fig:trine-robust-region}
\end{figure}

Figure~\ref{fig:trine-robust-region} provides a compact robustness check for this finite-horizon picture. Under the compared discretization settings, the broad continuation pattern  remains qualitatively stable at the level of region fractions. For the purposes of the present numerical section, this supports the conclusion that the horizon-two Bellman structure observed above is not merely an artifact of a single plotting resolution, but a numerically stable qualitative feature of the trine example.

Taken together, these figures complete the numerical narrative of Section~8. The trine simplex first provided the geometric belief state space, then the one-step gain and posterior-routing pictures clarified the local value and dynamics of measurement, and finally the finite-horizon Bellman maps showed how that local structure is reorganized into a genuinely sequential decision architecture when horizon $H$ becomes larger than one.

\section{Conclusion}

In this paper, we formulated sequential quantum state discrimination as a
finite-horizon POMDP with static-hidden-state. This viewpoint makes the latent
hypothesis, the observation process induced by the Born rule, the posterior
belief dynamics, and the stopping decision part of a single dynamic decision
architecture much more explicit. In particular, the belief state emerges as a sufficient statistic
for the full action--observation history \cite{SmallwoodSondik1973}, allowing the sequential-QSD problem to
be written as a Bellman recursion on the belief simplex.

A central conceptual result is that this formulation remains fully consistent
with conventional minimum-error QSD. In the one-step special case, the
posterior-based stopping objective reduces exactly to the standard one-shot objective of measurement plus 
classical post-processing, and also to the guess-labeled POVM formulation. The present framework should
therefore be understood not as a change of discrimination criterion, but as a
sequential generalization in which measurement, Bayesian updating, and stopping
are all treated explicitly.

On the algorithmic and analytical side, we studied a projected dynamic programming
architecture based on a finite belief grid and a finite measurement library. We
derived separate error bounds for belief-space discretization and action-space
discretization, established recursive Lipschitz regularity for the value functions, obtained explicit admissible bounds for the
Lipschitz constants, and combined the two approximation mechanisms
into a total approximation error bound for the fully projected finite-grid
finite-library planner. These results show that the approximation architecture
has a transparent structure: local geometric discretization errors are amplified
by analytic sensitivity constants and then accumulated through the finite horizon.

We also analyzed the computational implications of the projected method. The
offline phase, which performs backward induction over the belief grid and
repeated projection of posterior beliefs, admits specific symbolic and
asymptotic complexity laws. Under simple projection implementations, this leads
to a clear curse-of-dimensionality interpretation through the geometry of the
belief simplex. By contrast, the online phase follows only one realized
trajectory and is naturally controlled by the stopping time rather than by the
full branching structure of the dynamic program. In this way, the distinction
between offline planning and online execution becomes mathematically explicit.

Finally, we illustrated the framework on concrete low-dimensional examples. In the binary case, the formulation recovers the familiar one-dimensional belief geometry and provides a direct consistency check with the conventional Helstrom picture. In the trine case, the numerical study makes the sequential structure visually explicit through the geometry of the belief simplex, the one-step gain landscape, posterior-routing patterns, and compact finite-horizon Bellman continuation regions. These examples are not intended as large-scale empirical benchmarks, but rather as interpretable demonstrations of how our general analytical framework manifests itself in concrete sequential-QSD instances.

Several natural directions remain for future work. On the numerical side, it would be valuable to extend the present illustrative binary and trine studies to broader computational experiments and empirical benchmarks for more diverse sequential-QSD instances. It is also natural to extend the framework toward more robust formulations in which the actually prepared state deviates from the intended candidate states due to state preparation errors. We hope that the POMDP perspective developed here provides a coherent analytical and computational framework for further work on sequential and adaptive quantum state discrimination.

\section*{Acknowledgements}

The authors gratefully acknowledge helpful discussions and intellectual support
during the development of this work. This work was supported by the
National Research Foundation of Korea (NRF) through a
grant funded by the Ministry of Science and ICT (Grants No.
RS-2025-00515537), the Institute
for Information \& Communications Technology Promotion
(IITP) grant funded by the Korean government (MSIP)
(Grants Nos. RS-2025-02304540 and RS-2019-II190003),
the National Research Council of Science \& Technology
(NST) (Grant No. GTL25011-401), and the Korea Institute
of Science and Technology Information (Grant No. P26028).
We acknowledge the Yonsei University Quantum Computing Project Group for providing support and access to the
Quantum System One (Eagle Processor), which is operated at
Yonsei University.

\section*{Code Availability}

The code used to conduct numerical experiments and generate figures reported in this work is publicly available in the project repository, \url{https://github.com/absolute-injury/POMDP_QSD}.

\section*{Competing Interests}

The authors declare that there are no competing interests.

\section*{Author Contributions}

J.J. contributed to this work, undertaking the primary responsibilities, including the development of the main ideas,
mathematical proofs, initial drafting, and revisions of the paper. D.J., H.J., and K.J. provided valuable feedback in shaping the
core ideas and overall direction of the work, while K.J. also
supervised the whole research. All authors discussed the results and contributed to the final paper.

% 참고문헌 스타일 설정 (plain, unsrt, abbrv 등 선택)
\bibliographystyle{plain} 

% 아까 만든 .bib 파일 이름 입력 (확장자 제외)
\bibliography{references}

\appendix

\section{Supplementary Figures for the Trine Example}
\label{sec:appendix-trine}

This appendix collects supplementary figures for the trine-state numerical example of Section~8. We separate the materials into posterior-routing figures corresponding to Section~8.3 and finite-horizon Bellman / robustness figures corresponding to Section~8.4.

\subsection{Posterior-Routing Diagnostics and Full-Size Case Figures}
\label{sec:appendix-routing}

For completeness, we include the numerical diagnostic plot used to validate the posterior-routing computation, together with full-size versions of the five representative routing cases discussed in Section~8.3.

\begin{figure}[h]
    \centering
    \includegraphics[width=0.96\textwidth,height=0.90\textheight,keepaspectratio]{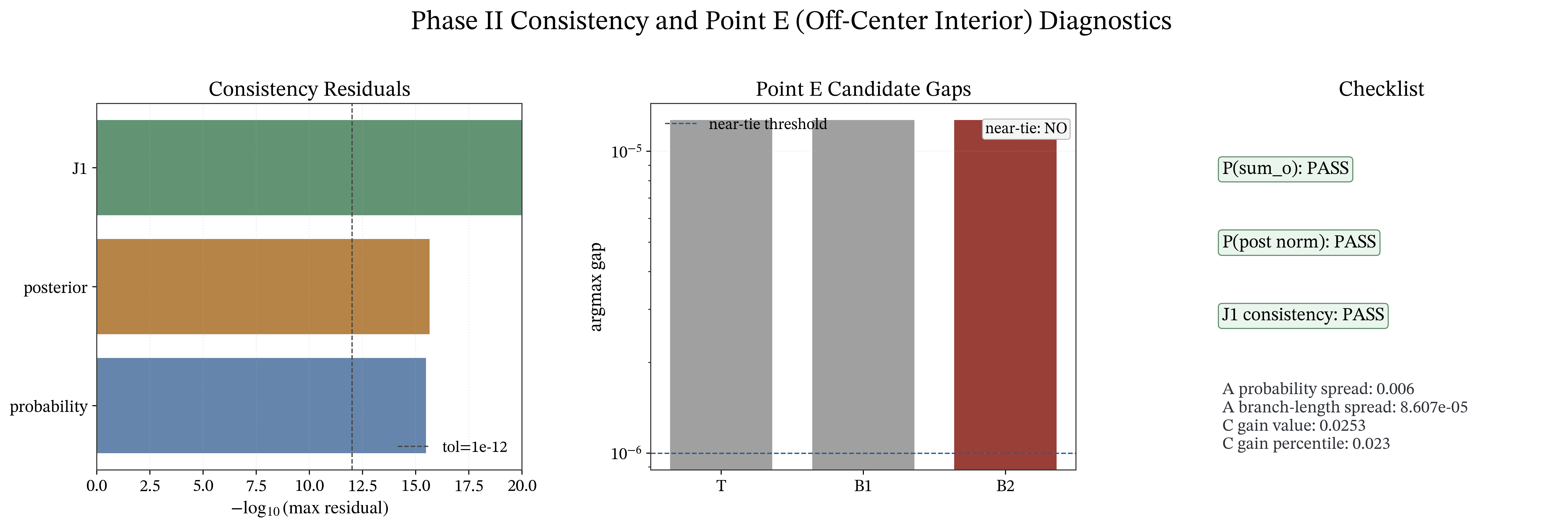}
    \caption{
    Numerical diagnostics for the posterior-routing computation. The figure checks normalization consistency, branch-probability consistency, and the numerical stability of the representative near-switching case.
    }
    \label{fig:trine-routing-diagnostics}
\end{figure}

\begin{figure}[h]
    \centering
    \includegraphics[width=0.96\textwidth,height=0.90\textheight,keepaspectratio]{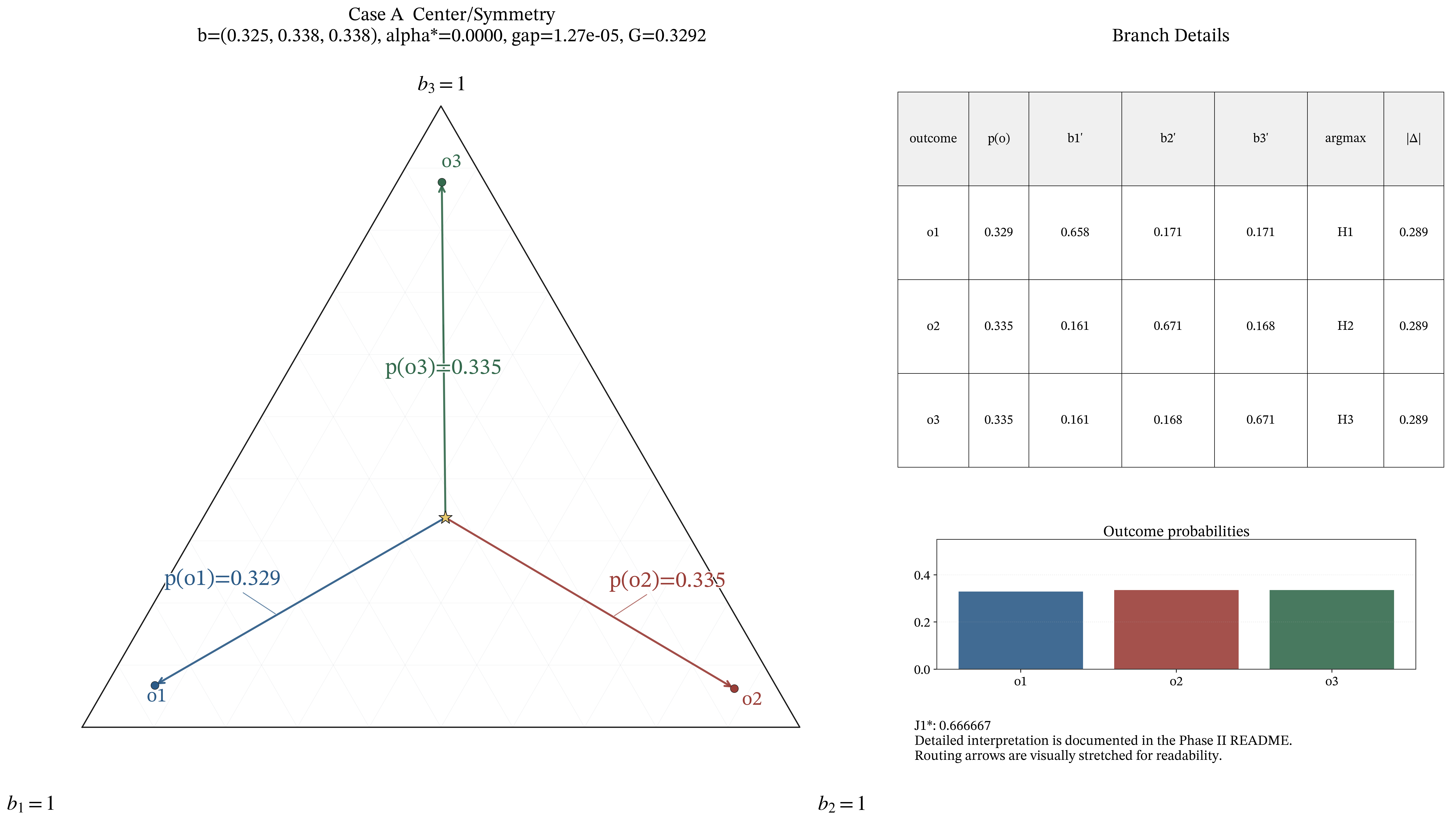}
    \caption{
    Full-size posterior-routing detail for Case A (nearly symmetric central regime).
    }
    \label{fig:appendix-routing-a}
\end{figure}

\begin{figure}[h]
    \centering
    \includegraphics[width=0.96\textwidth,height=0.90\textheight,keepaspectratio]{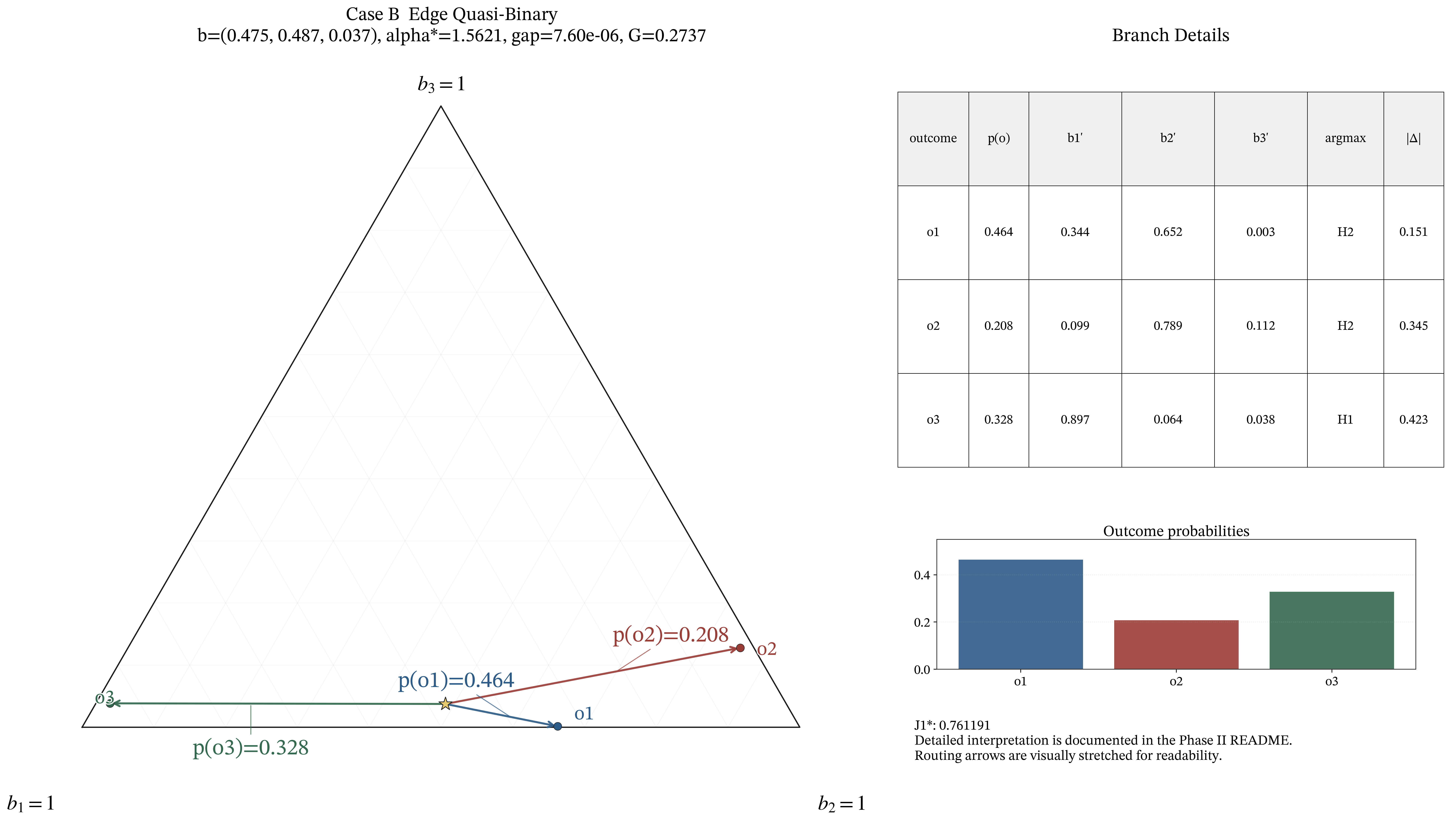}
    \caption{
    Full-size posterior-routing detail for Case B (quasi-binary edge regime).
    }
    \label{fig:appendix-routing-b}
\end{figure}

\begin{figure}[h]
    \centering
    \includegraphics[width=0.96\textwidth,height=0.90\textheight,keepaspectratio]{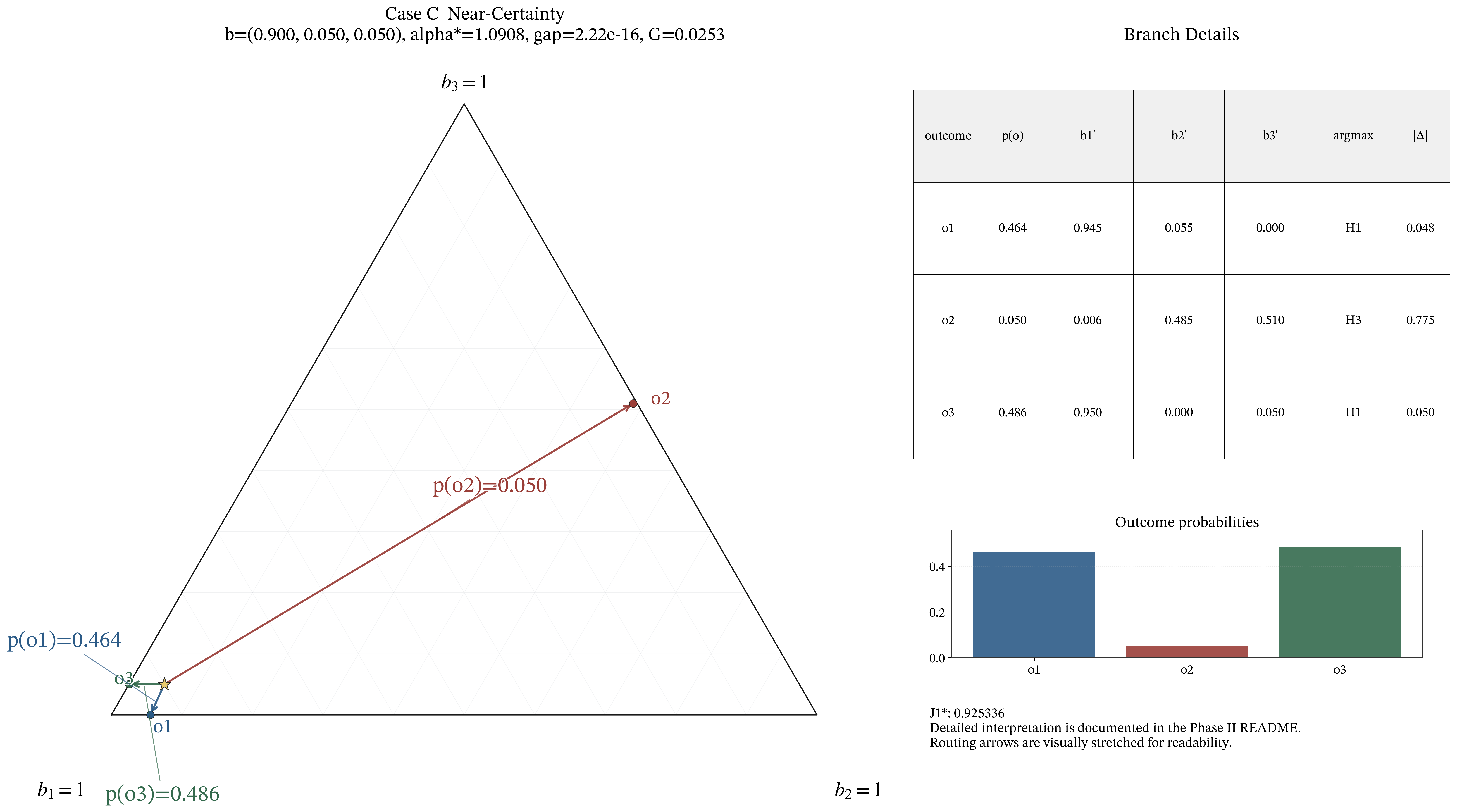}
    \caption{
    Full-size posterior-routing detail for Case C (near-certainty regime).
    }
    \label{fig:appendix-routing-c}
\end{figure}

\begin{figure}[h]
    \centering
    \includegraphics[width=0.96\textwidth,height=0.90\textheight,keepaspectratio]{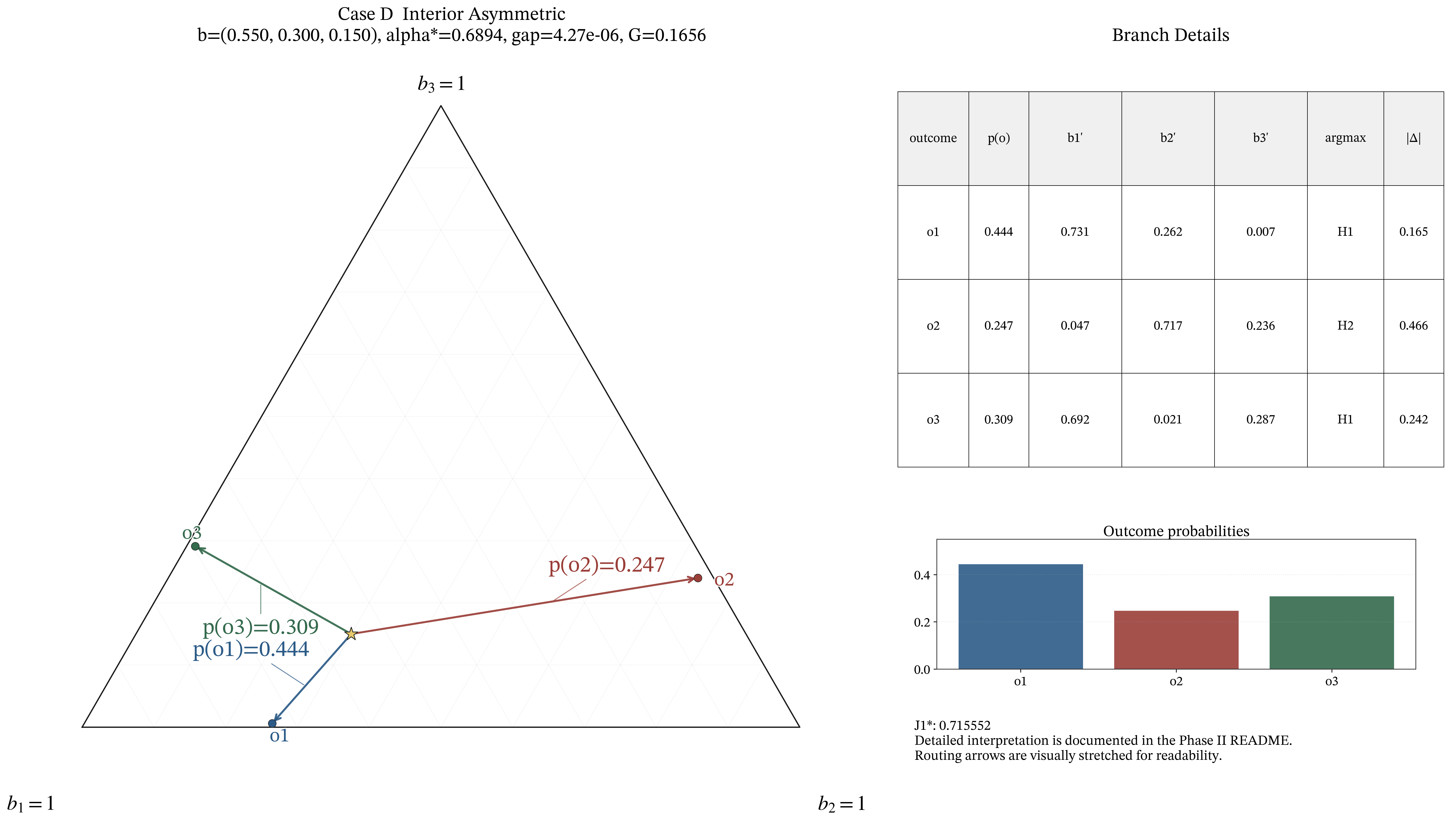}
    \caption{
    Full-size posterior-routing detail for Case D (generic asymmetric interior regime).
    }
    \label{fig:appendix-routing-d}
\end{figure}

\begin{figure}[h]
    \centering
    \includegraphics[width=0.96\textwidth,height=0.90\textheight,keepaspectratio]{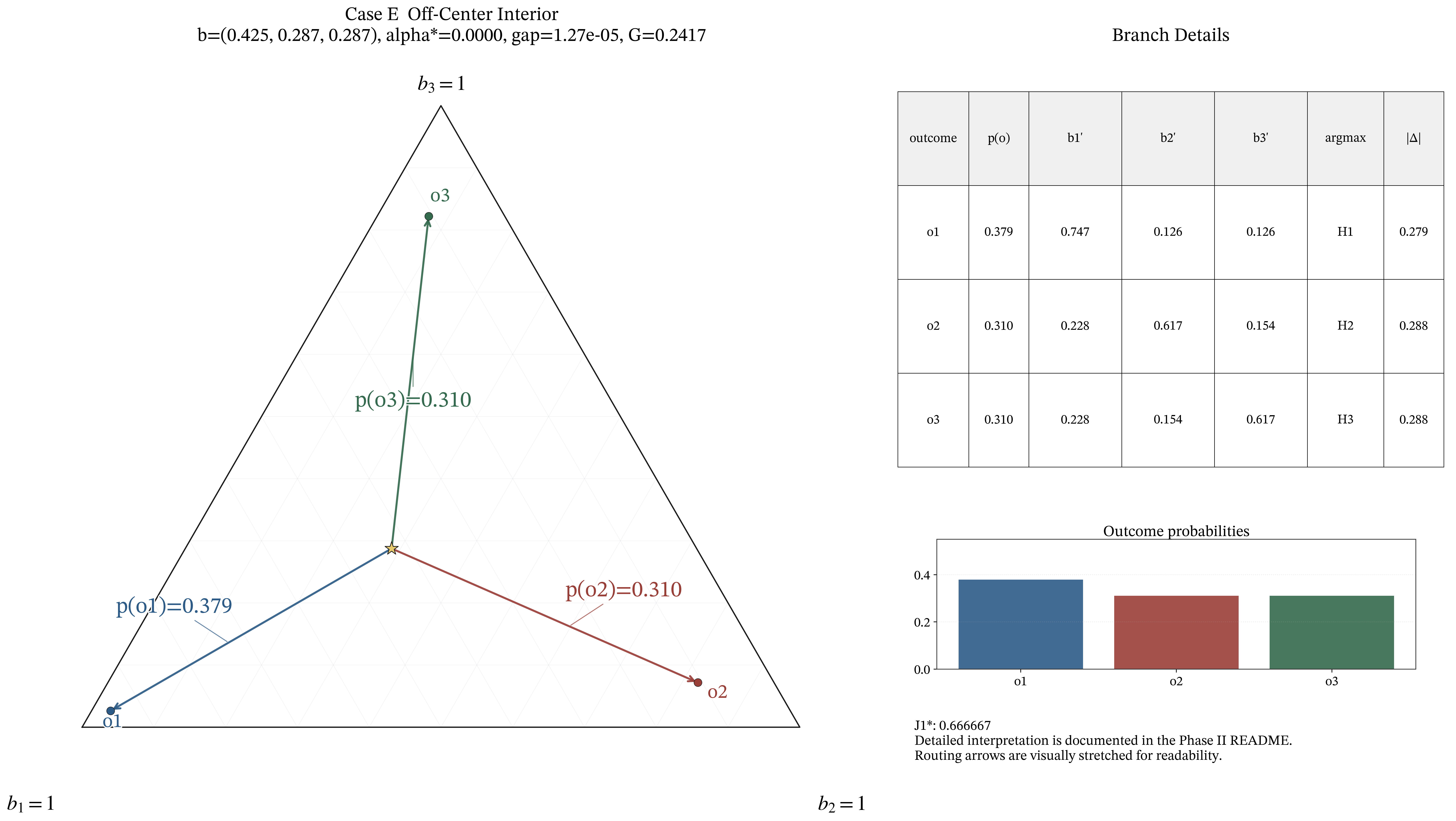}
    \caption{
    Full-size posterior-routing detail for Case E (off-center interior regime near a switching boundary).
    }
    \label{fig:appendix-routing-e}
\end{figure}

\FloatBarrier

\subsection{Additional Finite-Horizon Maps and Robustness Comparisons}
\label{sec:appendixA_trine_finite_horizon}

This subsection collects supplementary value maps, policy maps, and additional robustness comparisons for the finite-horizon trine example discussed in Section~8.4. These figures are not needed for the main numerical narrative, but they provide a fuller record of the horizon-two Bellman computation and of its stability under discretization refinement.

\begin{figure}[h]
    \centering
    \includegraphics[width=0.49\textwidth]{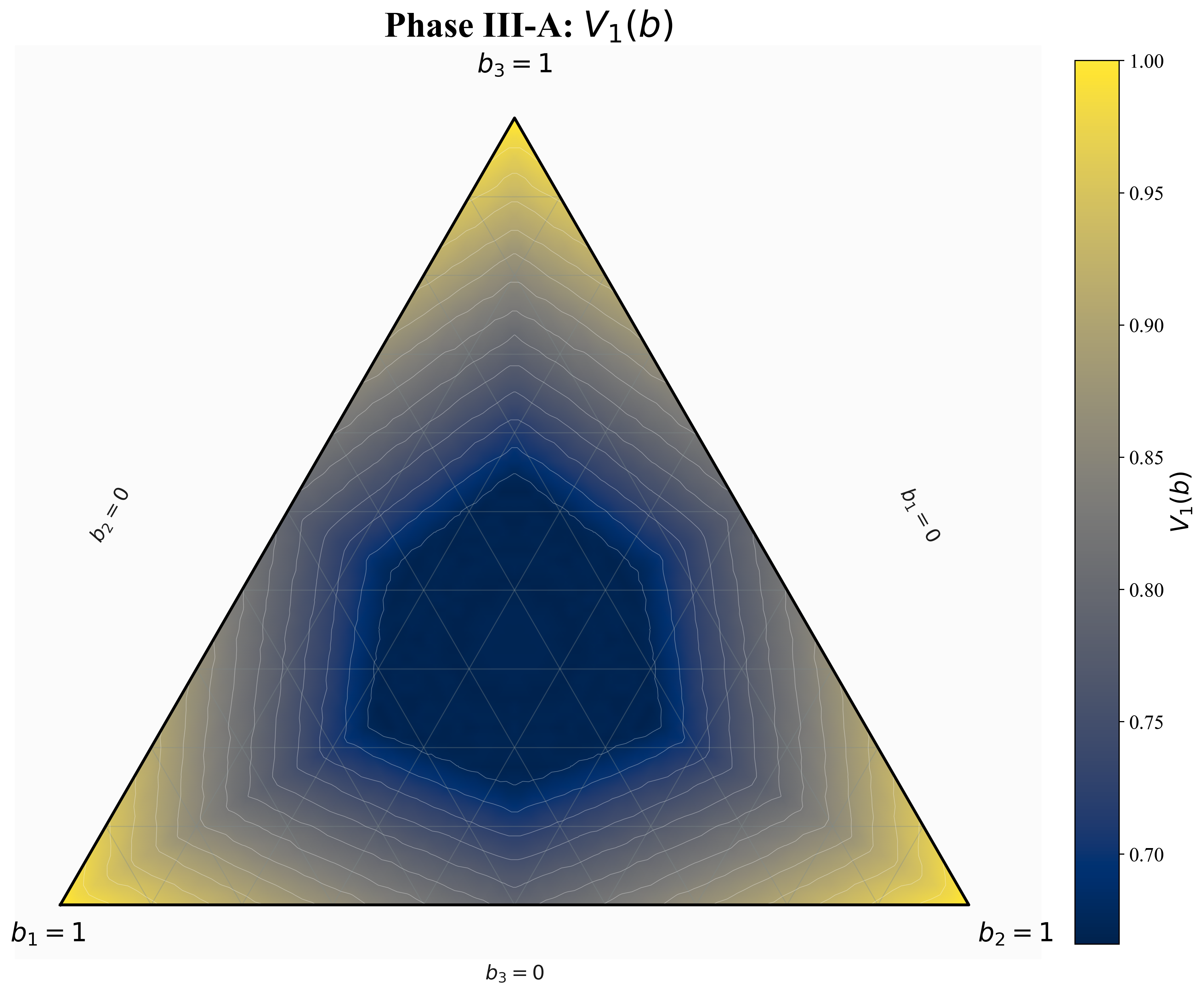}
    \hfill
    \includegraphics[width=0.49\textwidth]{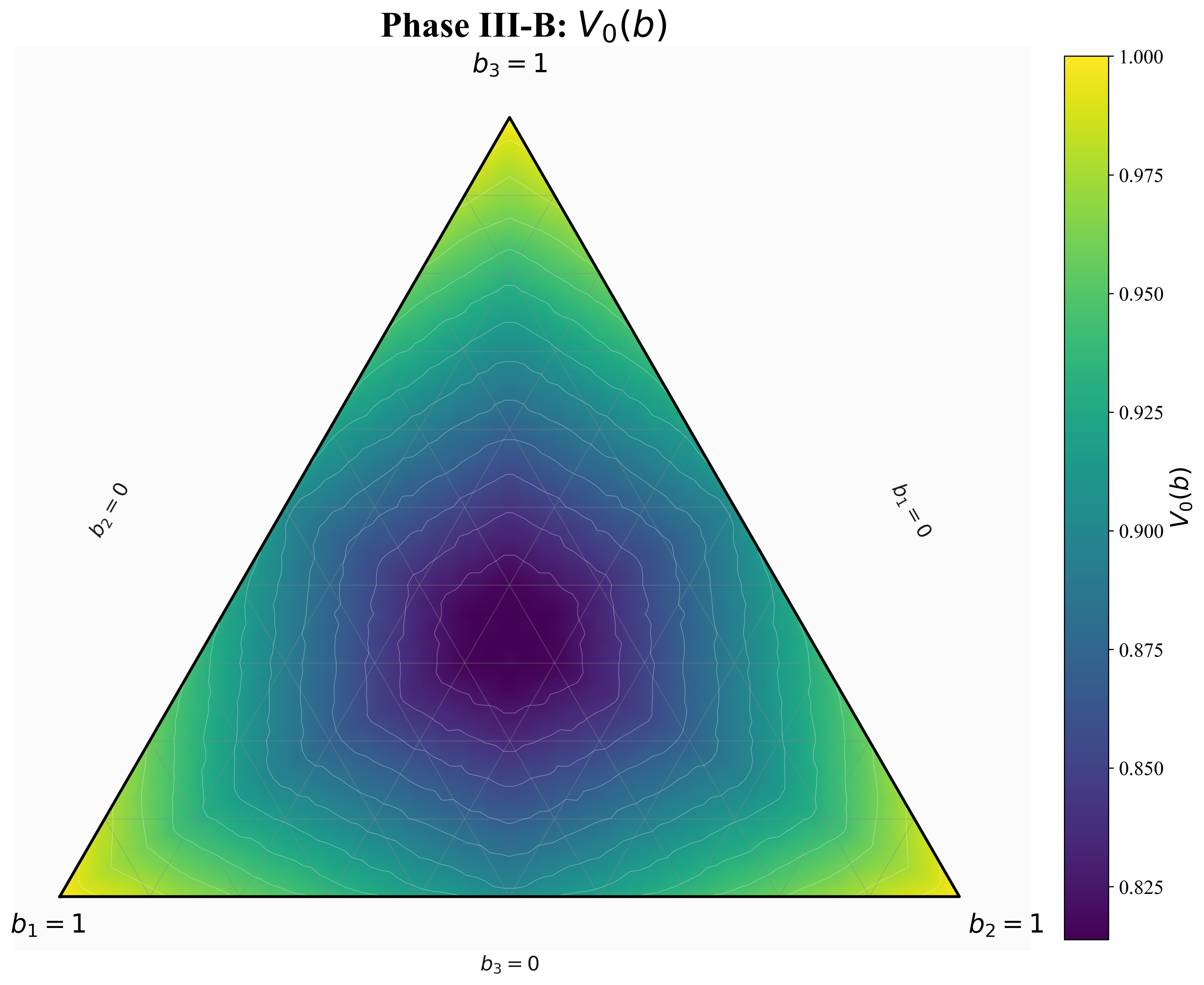}
    \caption{
    Supplementary finite-horizon value maps for the trine example. Left: \(V_1(b)\). Right: \(V_0(b)\). These absolute value maps complement the continuation-value difference plots shown in the main text.
    }
    \label{fig:appendix-trine-Vmaps}
\end{figure}

\begin{figure}[h]
    \centering
    \includegraphics[width=0.49\textwidth]{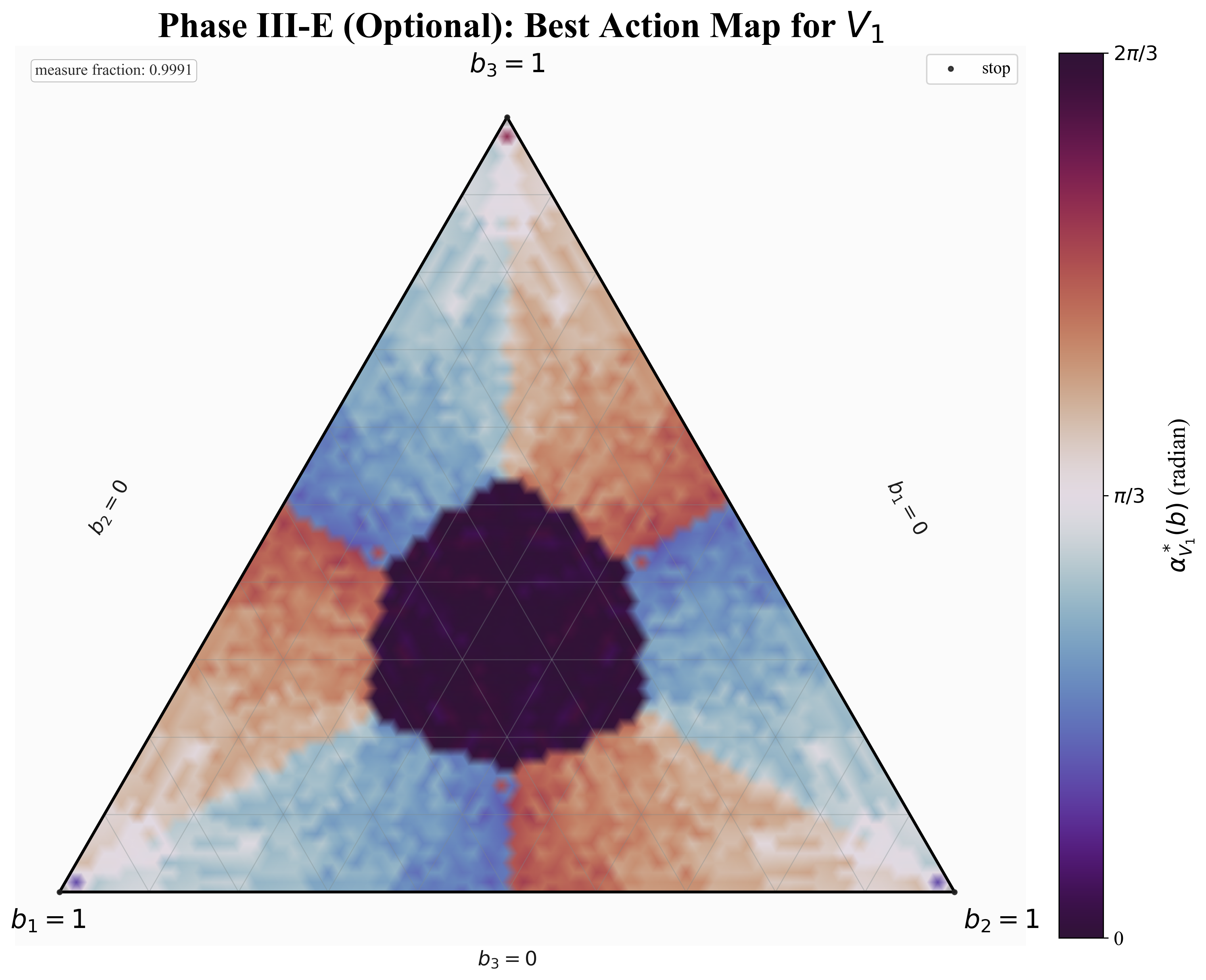}
    \hfill
    \includegraphics[width=0.49\textwidth]{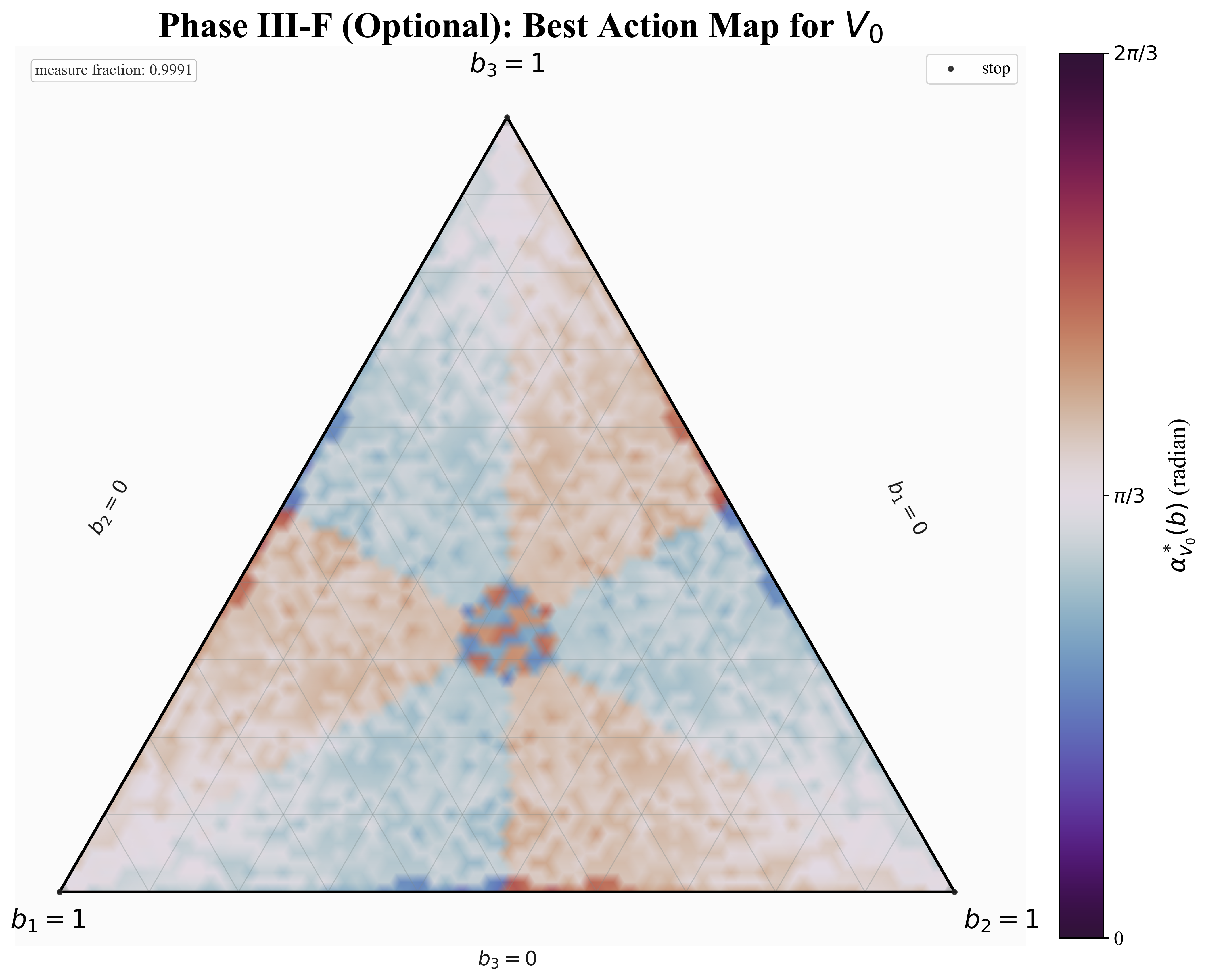}
    \caption{
    Supplementary finite-horizon policy maps for the trine example. Left: optimal stage-1 action map. Right: optimal stage-0 action map. These plots visualize how the stop--measure policy partition changes across the simplex as the remaining timestamp changes.
    }
    \label{fig:appendix-trine-actionmaps}
\end{figure}

\begin{figure}[h]
    \centering
    \includegraphics[width=0.72\textwidth]{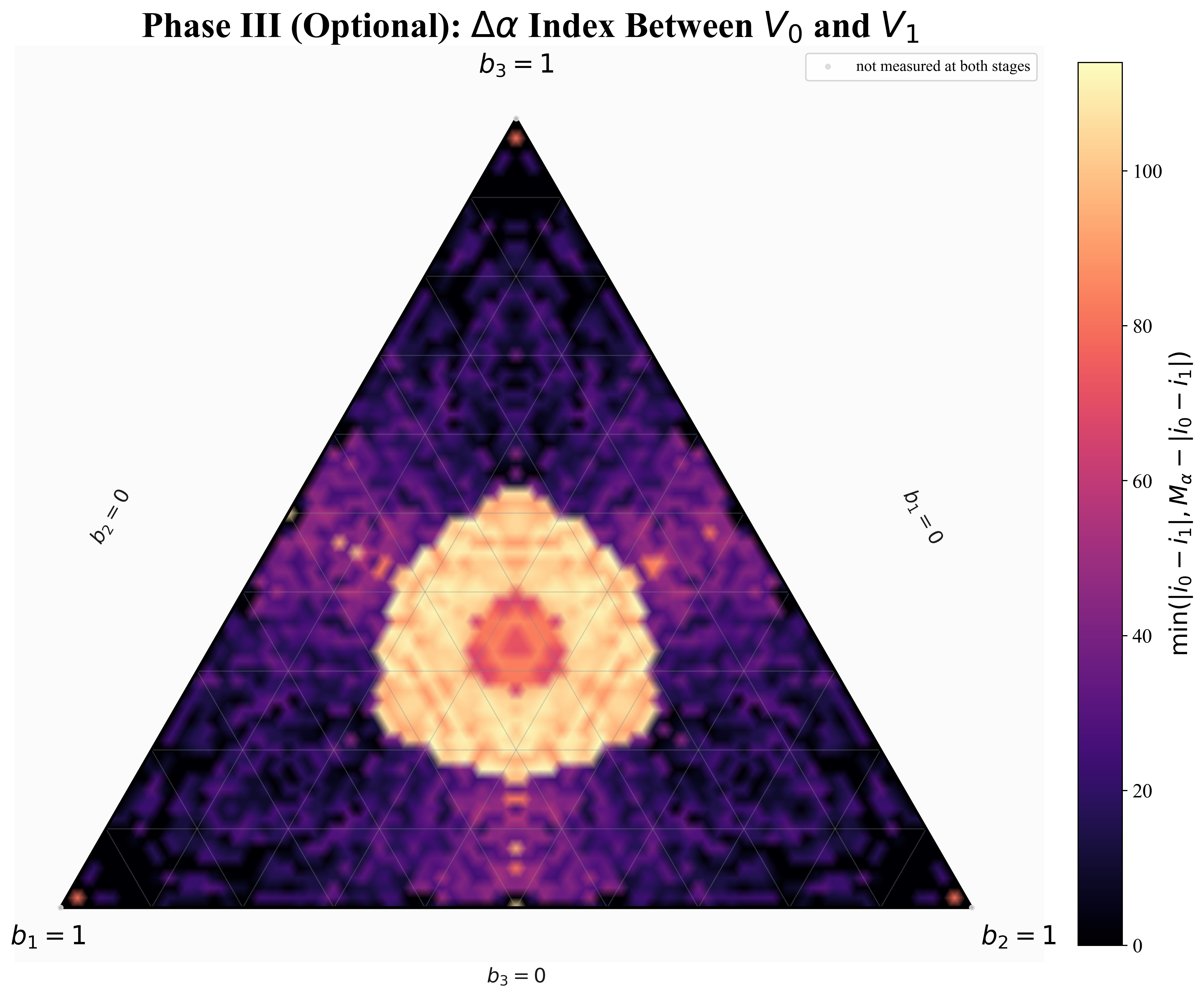}
    \caption{
    Supplementary measurement-index difference map for the finite-horizon trine example. The figure records how the selected measurement orientation index varies across the belief simplex as the remaining timestamp changes.
    }
    \label{fig:appendix-trine-alphaindex}
\end{figure}

\begin{figure}[h]
    \centering
    \includegraphics[width=0.49\textwidth]{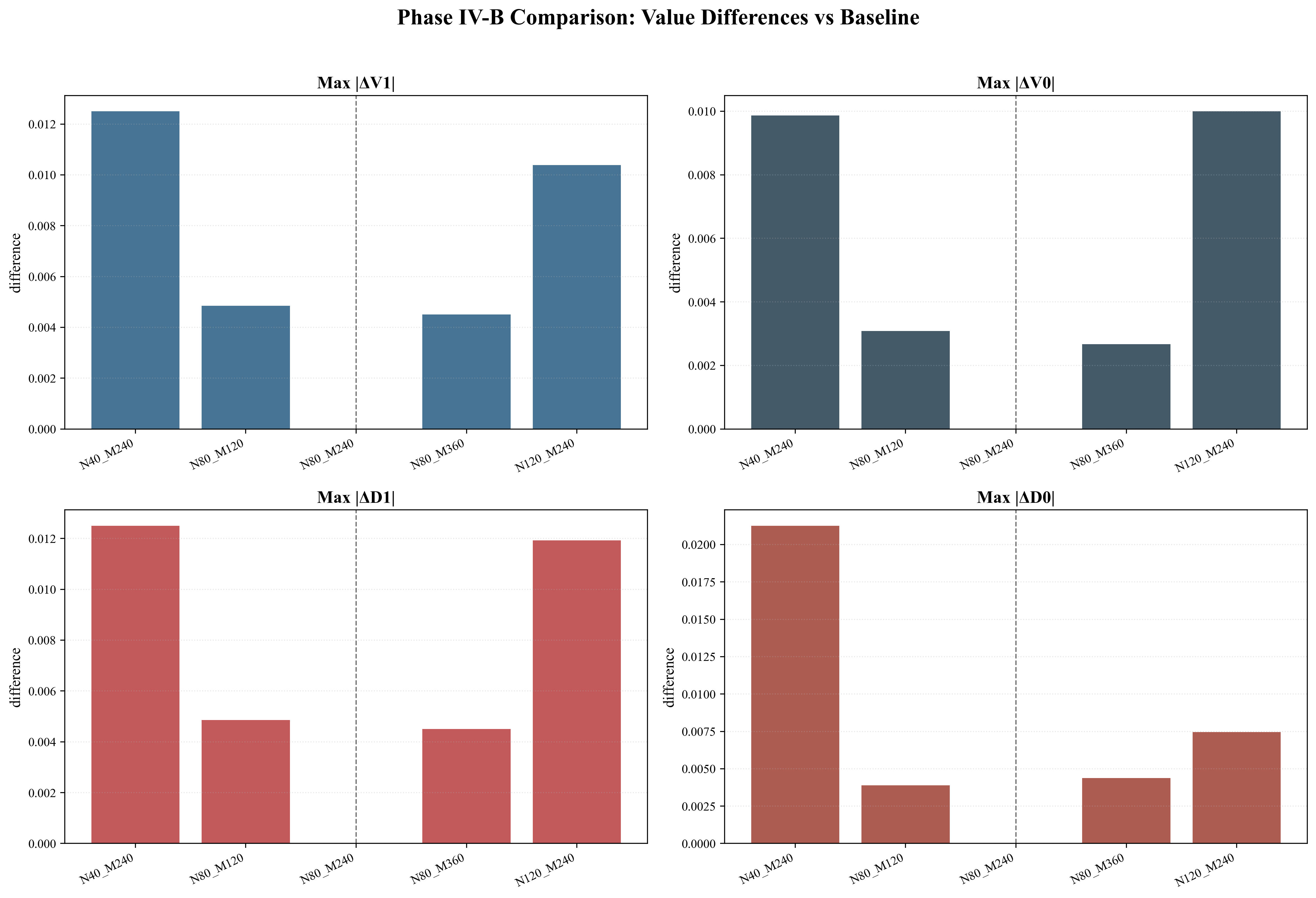}
    \hfill
    \includegraphics[width=0.49\textwidth]{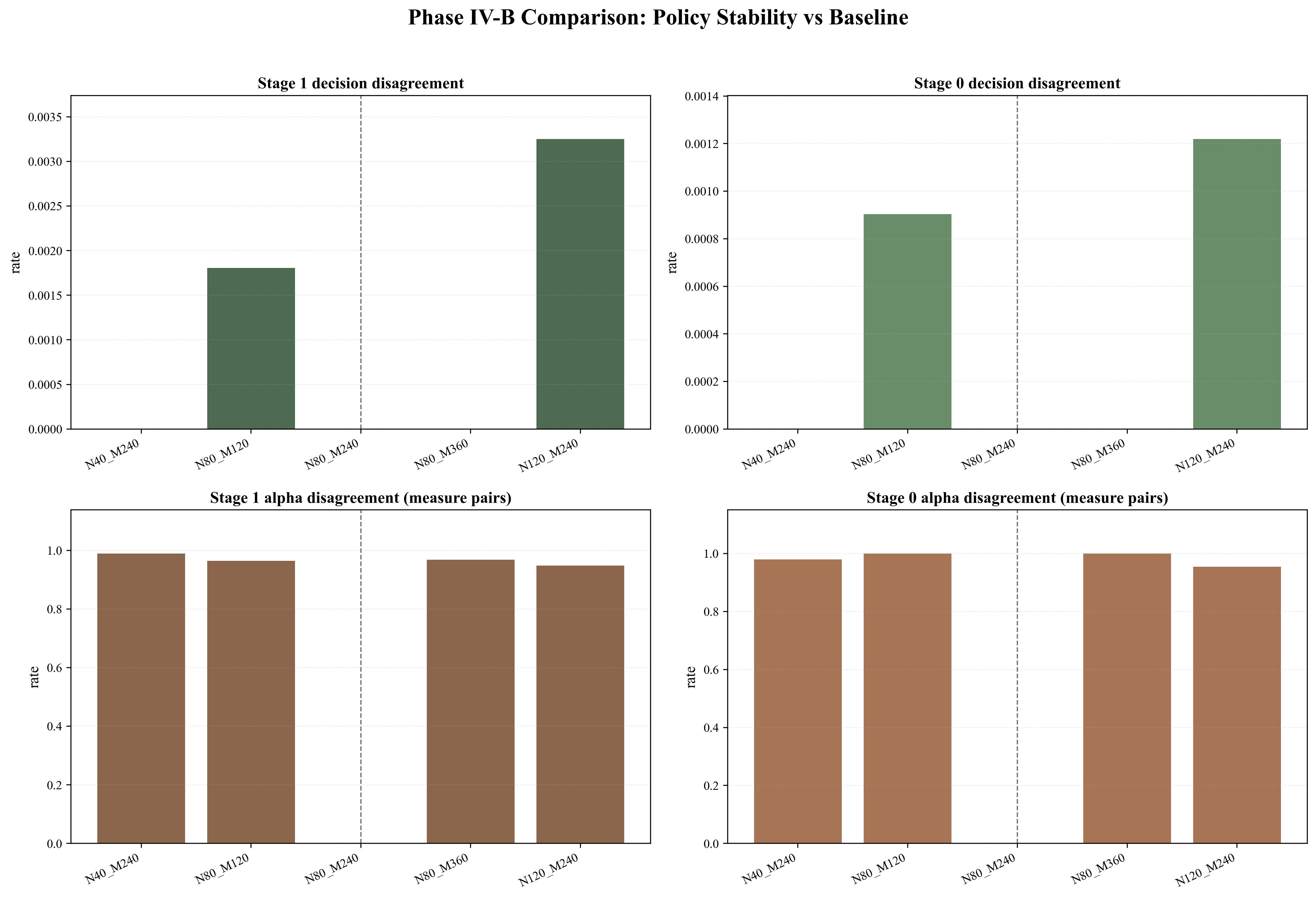}
    \caption{
    Supplementary robustness comparisons for the finite-horizon trine example. Left: value comparison under discretization refinement. Right: policy comparison under discretization refinement. These plots complement the continuation-region comparison shown in the main text.
    }
    \label{fig:appendix-trine-robustness}
\end{figure}

\end{document}